\newcommand{\zz}{\mathbb{Z}}
\newcommand{\nn}{\mathbb{N}}
\newcommand{\rr}{\mathbb{R}}
\newcommand{\mca}{\mathcal{A}}
\newcommand{\mcc}{\mathcal{C}}
\newcommand{\mcg}{\mathcal{G}}
\newcommand{\mct}{\mathcal{T}}
\newcommand{\mcu}{\mathcal{U}}
\newcommand{\pr}{\mathtt{Pr}}
\newcommand{\cond}{\mid} 
\numberwithin{equation}{section}
\theoremstyle{plain}
\newtheorem{thm}{Theorem}[section]
\newtheorem{cor}[thm]{Corollary}
\newtheorem{lemma}[thm]{Lemma}
\newtheorem{prop}[thm]{Proposition}
\theoremstyle{definition}
\newtheorem{defn}[thm]{Definition}
\newtheorem{problem}[thm]{Problem}
\theoremstyle{remark}
\newtheorem{rem}[thm]{Remark}
\title{Reconstructing pedigrees: some identifiability questions for a
  recombination-mutation model}
\author{Bhalchandra D. Thatte \\
  \small Department of Statistics, University of Oxford,\\[-0.8ex]
  \small Oxford OX1 3TG, United Kingdom\\
  \small \texttt{bdthatte@gmail.com} \\}
\begin{document}
\maketitle

\begin{abstract}

  Pedigrees are directed acyclic graphs that represent ancestral
  relationships between individuals in a population. Based on a
  schematic recombination process, we describe two simple Markov models
  for sequences evolving on pedigrees - Model R (recombinations without
  mutations) and Model RM (recombinations with mutations). For these
  models, we ask an identifiability question: is it possible to
  construct a pedigree from the joint probability distribution of extant
  sequences? We present partial identifiability results for general
  pedigrees: we show that when the crossover probabilities are
  sufficiently small, certain spanning subgraph sequences can be counted
  from the joint distribution of extant sequences. We demonstrate how
  pedigrees that earlier seemed difficult to distinguish are
  distinguished by counting their spanning subgraph sequences.

\end{abstract}

{\small Mathematics Subject Classifications: Primary: 60, 05, Secondary:
  05C60,92D}

{\small Keywords: reconstructing pedigrees, identifiability,
  recombinations, mutations}

\section{Introduction}\label{intro}
Phylogenetics is a study of how species are related to each
other. Evolutionary relationships are most conveniently represented by
rooted leaf-labelled trees, where the leaves represent extant species
and the root represents their most recent common ancestral
species. Similarly other internal vertices of evolutionary trees
correspond to extinct ancestral species.

The arrival of DNA and protein sequence data in the last forty years led
to explosive growth of phylogenetics. Many of the modern phylogenetic
methods consider sequence data under probabilistic models of sequence
evolution. For such models to be useful for phylogenetic inference, it
is important to establish their {\em identifiability} (i.e., to show
that nonisomorphic trees or different model parameters cannot induce the
same distribution on the sequences at the leaves under a given model of
sequence evolution). Mathematical theory of phylogenetic trees,
especially probabilistic models of sequence evolution, the associated
questions of identifiability and statistical consistency (especially of
the maximum likelihood methods) have been extensively studied
\cite{ss03,gs07}, giving a firm statistical foundation to the study of
phylogenetic trees.

While phylogenetic trees represent relationships between species,
population pedigrees represent how individuals within a population are
related to each other. Communities all over the world have long been
curious about knowing their ancestral histories, and have often kept
detailed records of their family trees. In fact this curiosity goes back
much further in the past than the interest in constructing evolutionary
relationships between species. An example of a fairly detailed record of
family histories is the Icelandic database \'Islendingab\'o (The Book of
Icelanders \url{http://www.islendingabok.is}) of genealogical records
that covers almost the whole Icelandic population and goes back to
nearly 1200 years. Such ancestral histories are often compiled from a
variety of sources such as church records, birth and death records,
obituaries etc. that are prone to ambiguities or missing data beyond a
few generations in the past.

In the last several years large amounts of data on intra-population
genetic variation have been recorded. For example, the Icelandic
biomedical company deCode Genetics has compiled genomic, genealogical
and health data of more than 100000 individuals (which is a significant
proportion of the current Icelandic population).  Such data offer
promising opportunity to cross check and resolve ambiguities in historic
genealogical records besides being useful for other studies such as, for
example, genetic factors associated with medical conditions. Therefore,
there is a renewed interest in accurately inferring pedigrees from
genomic data. The statistical and combinatorial foundation for studying
reconstruction problems for pedigrees has not been as developed as in
phylogenetics. A purpose of this paper is to continue our earlier
attempts to develop such a foundation for the problem of reconstructing
pedigrees from observations (sequences) on extant individuals.

To develop such a foundation, we need to establish results along the
following lines: developing a biologically realistic model for sequences
undergoing mutations and recombinations, and identifiability results for
such a model; statistical consistency results; and finally results that
give estimates for the amount of genomic data necessary to reliably
construct pedigrees. This paper is mainly about identifiability
questions.

In the rest of this section, we discuss the above theoretical motivation
in more detail. We begin by informally sketching some well known
reconstruction and identifiability results for phylogenetic trees that
not only motivate the work in this paper but also are crucially useful
in the proof of the main identifiability result of this paper.

\

\noindent {\em Results of Zarecki{\u\i} and Buneman.}
It was shown in \cite{zar65} that a leaf-labelled tree can be uniquely
constructed from the pairwise distances between its leaves. The result
was strengthened slightly as follows \cite{buneman71}. Suppose that $f$
is an {\em additive function} on the family of subsets of cardinality 2
of the vertex set of a leaf-labelled tree. Here {\em additive} means
that for any two vertices $r$ and $s$, we have $f(\{r,s\}) = \sum f(e)$,
where the summation is over all edges $e$ on the (unique) path from $r$ to
$s$. Buneman showed that knowing $f$ on all pairs of leaves of a
leaf-labelled tree without vertices of degree 2 is sufficient to
uniquely construct the tree and the function. It is not surprising that
these results are quite useful in phylogenetics, where observations on
extant species (leaves of an evolutionary tree) are used to infer a
suitably defined distance (or an additive function) between pairs of
species, and then their phylogenetic tree is constructed uniquely.

\

\noindent {\em Results of Steel and Chang.} Now suppose the evolutionary
process on a (rooted) tree is modelled as follows: first the root is
assigned a random state from a finite alphabet $\Sigma$ (e.g., $\Sigma$
may be \{A,T,G,C\} in the case of DNA sequences). Each state is assumed
to have a nonzero probability of being assigned to the root. Each edge
of the tree has associated with it a $|\Sigma| \times |\Sigma|$ matrix
of substitution probabilities. These substitution probabilities
determine how the vertex-states evolve away from the root, and induce a
distribution of states at the leaves of the tree. The model was
formulated in \cite{steel94} as described above, while a slightly more
general formulation in terms of Markov random fields on unrooted trees
was given in \cite{chang96}. It was independently shown by Chang and
Steel that when the matrices defining the substitution probabilities
satisfy certain mild conditions, the (unrooted) tree can be uniquely
recovered from the joint distribution of states at the leaves of the
tree. In particular they showed that the negative logarithm of the
determinant of the matrix of substitution probabilities between pairs of
vertices is an additive function on the pairs of vertices, and can be
computed from the probability distribution on extant sequences. It was
further proved in \cite{chang96} that the substitution matrices are
also identifiable from the marginal distributions on triples of leaves
of the tree. Special cases of these results for models more commonly
used in phylogenetics were known in the phylogenetics literature
earlier.

\

How far can we generalise such results if the underlying structure is
more general than a tree? A recent result in this direction is due to
\cite{bms08} where it is shown that under some mild non-degeneracy
conditions the dependency structure of a Markov random field can be
obtained from sufficiently many independent samples.

In this paper we present simple models for recombinations and mutations
for population pedigrees, and generalise phylogenetic identifiability
results for them. One difference between reconstructing Markov random
fields and reconstructing pedigrees is that for pedigrees we have
observations only on the extant individuals (e.g., DNA sequences derived
from living individuals). Moreover, in the problem of reconstructing
pedigrees, the samples of data (e.g., columns in a sequence alignment)
are not {\em i.i.d.} (independent and identically distributed) as a
result of recombinations.

In \cite{sh06,thatte18}, we studied some purely combinatorial
reconstruction problems motivated by Zarecki{\u\i}'s result, for
example, the problem of reconstructing a pedigree from the pairwise
distances between its extant individuals or from its subpedigrees
(pedigrees of subsets of the extant population). In \cite{thatte18}, we
showed that a pedigree cannot in general be reconstructed from the
collection of its proper subpedigrees. Such a result implies that
knowing pairwise distances between extant vertices is in general not
enough to reconstruct pedigrees.

In \cite{thatte16}, we considered models for sequences evolving on
pedigrees, and showed that for certain simple Markov models, pedigrees
are not identifiable from the distribution of observed states at extant
vertices. We did construct examples of processes for which pedigrees
could be proved to be identifiable, but the processes lacked the
Markovian property, which informally states that the state observed at a
vertex depends only on the states of its parents. Moreover, it seems
that pedigrees that are difficult to reconstruct in a purely
combinatorial framework (e.g., from pairwise distances between extant
vertices or from subpedigrees) are also likely to be difficult to
reconstruct in a stochastic framework. For example, if a pedigree cannot
be reconstructed from its proper subpedigrees, then the marginal
distributions of extant sequences on proper subsets of the extant
population might be insufficient to uniquely recover the pedigree. On
the other hand, negative results in a combinatorial setting may not
imply non-identifiability in a stochastic framework. It is therefore
important to study combinatorial reconstruction problems
(e.g., classification of pedigrees that may be difficult to reconstruct
combinatorially), stochastic identifiability problems for idealised
Markov models of recombination and mutation, and relationships between
these problems.

Reconstruction problems of purely combinatorial nature are well known to
combinatorialists, the foremost among such problems being the vertex
reconstruction conjecture \cite{ulam60}. The conjecture states that all
simple undirected unlabelled graphs can be constructed from their
collection of unlabelled induced subgraphs. Combinatorial reconstruction
problems have also been studied in phylogenetics, for example, problems
of reconstructing phylogenetic trees from subtrees \cite{be04}.

Steel and Chang proved their phylogenetic identifiability results in two
parts: computation of the additive `log determinant' function on pairs
of leaves from the joint probability distribution on leaf states, and
the combinatorial problem of reconstructing a tree from the additive
function, which had been solved by Buneman and Zarecki{\u\i}. Similarly
the problem of reconstructing pedigrees under a recombination-mutation
model may be solved in two parts: in the first part, we would like to
reduce the identifiability question to an appropriate combinatorial
reconstruction problem, and then in the second part we would like to
show that the combinatorial reconstruction problem has a unique
solution. To ensure the uniqueness of reconstruction, we will have to
compute sufficiently strong combinatorial invariants of the pedigree
from the joint distribution of extant sequences.

Although we noted that distances between extant vertices in a pedigree
are not sufficient to reconstruct a pedigree, we sketch a heuristic
argument given in \cite{ts08} that shows how the distances between
extant vertices in a discrete generation pedigree may be obtained from
sequence data. Suppose $a$ and $b$ are two extant individuals in a
pedigree. Suppose further that in the pedigree there are $n := n(k:
a,b)$ pairs of paths such that one of the paths in each pair ends on $a$
and the other ends on $b$, and the two paths in each pair start at a
common ancestor of $a$ and $b$ in the $k$-th generation, and the two
paths in a pair do not share any other vertex. Now if sufficiently many
short recombination-free homologous segments of DNA of $a$ and $b$ are
compared, then we would expect about $n/2^{2k}$ of them to have a common
ancestor in the $k$-th generation. Thus it may be possible to estimate
$n(k:a,b)$. Such calculations are theoretically possible for small
values of $k$ assuming the population is large and the sequences are
long. They then tried to use the numbers $n(k:a,b)$ for all pairs
$\{a,b\}$ for all $k$ to construct the pedigree. They computationally
found many pairs of non-isomorphic pedigrees that have the same number
of pairs of paths of each length. One such example is the pair of
pedigrees shown in Figure~\ref{fig-hap}, which was also mentioned in
\cite{sh06}.

But a more detailed analysis of sequence similarities (between multiple
sequences, if required) under simple recombination and mutation models
may give us more information than just pairwise distances between living
individuals. In the main theorem of this paper
(Theorem~\ref{thm-paths}), we show that the joint distribution on extant
sequences determines a class of combinatorial invariants (e.g., certain
types of subgraph sequences) that supersedes pairwise distances between
extant sequences and subtrees (genealogical trees) in a pedigree.  We
then show that pedigrees, such as the ones in Figure~\ref{fig-hap}, that
earlier seemed difficult to distinguish due to their combinatorial
similarities (including the non-reconstructible pedigrees constructed in
\cite{thatte18}) are distinguished by the class of invariants.

\

This paper is organised as follows. In Section~\ref{sec-prelim}, we
define pedigrees, alignments and subgraph sequences. In
Section~\ref{sec-models}, we give a schematic description of the
recombination process, and formalise three models for sequences: Model R
(a model in which there are recombinations but no mutations), Model RM
(a model in which there are recombinations and mutations), and Model M
(a model of mutations for sequences evolving on trees). We then
formulate identifiability problems for these models. In
Section~\ref{sec-model-r}, we analyse pedigrees with two generations
under Model R. In Section~\ref{sec-model-rm}, we prove the main theorem
(Theorem~\ref{thm-paths}) and demonstrate its applications. In the last
section, we discuss a few open questions. A section on nomenclature
follows the references where all symbols appearing in the paper and the
number of the page on which they appear first are listed.
 
\section{Pedigrees, alignments and subgraph sequences} \label{sec-prelim}

We use the following notation for number systems and their subsets:
$\zz$ for the set of integers, \nomenclature{$\zz$}{the set of integers}
$\zz_+$ for the set of positive integers, \nomenclature{$\zz_+$}{the set
  of positive integers} $\nn$ for natural numbers,
\nomenclature{$\nn$}{the set of natural numbers} $[m]$ for the set
$\{1,2,\ldots,m\}$. \nomenclature{$[m]$}{the set $\{1,2,\ldots, m\}$}
Depending on the context, we write $[a,b]$ for the set of integers
$\{a,a+1, \ldots, b\}$ or real numbers $a\leq x \leq b$ (and similarly
$(a,b)$, $[a,b)$ and $(a,b]$ for open or half-open intervals in integers
and reals). \nomenclature{$[a,b], [a,b), \ldots$}{intervals in integers
  and reals} The set of all $k$-tuples of elements of a set $S$ is
written as $S^k := \{(s_1,s_2,\ldots, s_k) \mid s_i \in S, i \in [k]\}$.
\nomenclature{$S^k $}{the set of $k$-tuples of elements of a set $S$}
The set of all functions from $X$ to $S$ is written as $S^X :=
\{f:X\rightarrow S\}$.

\nomenclature{$S^X$}{the set of all functions from $X$ to $S$}

Next we introduce some graph theoretic notation. The vertex set and the
edge set of a graph $G$ are denoted by $V(G)$ and $E(G)$, respectively,
and their cardinalities by $v(G)$ and $e(G)$, respectively. The
in-degree and the out-degree of a vertex $u$ in a directed graph are
denoted by $d^-(u)$ and $d^+(u)$, respectively. The degree of a vertex
$u$ in an undirected graph (or the total degree in a directed graph) is
denoted by $d(u)$. An arc from $u$ to $v$ in a directed graph and also
an edge between $u$ and $v$ in an undirected graph is written as $uv$,
and it will be understood from the context whether $uv$ is meant to be a
(directed) arc or an (undirected) edge. When any two objects $G_1$ and
$G_2$ are isomorphic, we write $G_1 \cong G_2$. The isomorphism class of
an object $G$ is written as $\lVert G \rVert$. For a collection $\mcg$
of labelled objects, we write $\lVert \mcg \rVert$ for the set of
isomorphism classes of objects in $\mcg$. Let $G$ and $H$ be two
directed or undirected graphs. We write $G \leq H$ (or $H \geq G$) if
$G$ is isomorphic to a subgraph of $H$, and this notation may be used
when $G$ or $H$ is unlabelled (i.e., they are just isomorphism
classes). We write $G \subseteq H$ (or $H \supseteq G$) when a labelled
graph $G$ is a subgraph (or a supergraph) of a labelled graph $H$.

\nomenclature{$V(G), E(G)$}{- vertex and edge sets of a graph,
  respectively}

\nomenclature{$v(G), e(G)$}{- cardinalities of vertex and edge sets of a
  graph, respectively}

\nomenclature{$G \cong H$}{- $G$ and $H$ are isomorphic}

\nomenclature{$G \leq H$ or $H \geq G$}{- when used for graphs (or
  isomorphism classes of graphs) $G$ and $H$, it means $G$ is isomorphic
  to a subgraph of $H$}

\nomenclature{$G \subseteq H$ or $H \supseteq G$}{- when used for labelled
  graphs $G$ and $H$, it means $G$ is a subgraph of $H$}

\nomenclature{$\lVert S \rVert$}{- isomorphism class of an object; if $S$
  is a class of objects, then it is the set of isomorphism classes of
  objects in the class}

\nomenclature{$d^-(u), d^+(u), d(u)$}{- in-degree, out-degree, degree (or
  total degree) of a vertex $u$}

\begin{defn}[{\bf General pedigrees}] \label{defn-general} A {\em
    general pedigree} $P(X,Y,U,E)$ of a set $X$ is a directed acyclic
  graph on a vertex set $U\supseteq X\cup Y$ and a set of arcs $E$ such
  that each vertex has in-degree 0 or 2. The set $X$ is the set of
  vertices with out-degree 0. The set $Y$ is the set of vertices with
  in-degree 0. The vertices in $X$ are called the {\em extant vertices}
  (or the extant individuals in the population). The vertices in $Y$ are
  called the {\em founder vertices} (or the founders of the
  population). The {\em order} of the pedigree is $|X|$. The {\em depth}
  of a pedigree is the length of (i.e., the number of arcs in) a longest
  path in the pedigree. Two pedigrees $P(X,Y,U,E)$ and $Q(X,Z,V,F)$ are
  said to be {\em isomorphic} if there is a one-one map
  $\pi:U\rightarrow V$ such that $uv$ is an arc in $P$ if and only if
  $\pi(u)\pi(v)$ is an arc in $Q$, and $\pi(x) = x$ for all $x \in
  X$. We denote the natural partial order on $U$ by $\leq $, i.e., $v
  \leq u$ if there is a directed path from $u$ to $v$ (or $u = v$).
\end{defn}

\nomenclature{$P$, $Q$, $P(X,Y,U,E)$, ...}{pedigrees}

\nomenclature{$X$}{the set of extant vertices of a pedigree}

\nomenclature{$\cong$}{isomorphism between pedigrees, $X$-forests,
  graphs, etc.}

\nomenclature{$u \leq v$ - }{(for vertices $u$ and $v$ in a pedigree)
  there is a directed path from $v$ to $u$}

We define isomorphism only between pedigrees of the same set of extant
individuals and require that it fixes all extant vertices because
(informally speaking) we would like to treat extant vertices to be
labelled and other vertices to be unlabelled. Throughout this paper, we
will assume that all pedigrees have $X$ as their set of extant vertices.

\begin{defn}[{\bf Diploid pedigrees}]
  \label{defn-diploid}
  Let $P(X,Y,U,E)$ be a pedigree. Suppose that $U$ can be partitioned
  into unordered pairs of vertices such that the following conditions
  hold: the extant vertices are paired with extant vertices and the
  founder vertices are paired with founder vertices; two non-extant
  vertices $v$ and $w$ are paired if and only if there are arcs $vu$ and
  $wu$ in $E$; no two paired vertices have a common parent. Then $P$
  together with one such pairing is called a {\em diploid pedigree}.
\end{defn}

Thus any general pedigree is a {\em haploid pedigree}; a diploid
pedigree is a haploid pedigree with a pairing of its vertices (although
not all pedigrees admit such a pairing). The pairing in a diploid
pedigree is completely determined by the pairing of its extant vertices
since all other pairs are determined by the definition. An advantage of
a purely combinatorial definition of a diploid pedigree is that we can
now assign just one sequence to each vertex. From this point of view,
haploid pedigrees are pedigrees of sequences, not of
individuals. Pedigrees of individuals are diploid pedigrees obtained by
pairing sequences in a haploid pedigree. Figure~\ref{fig-d2h}
illustrates this point of view.  When vertices (sequences) $A_i$ and
$B_i$ are paired, $A_j$ and $B_j$ must be paired, since they are parents
of $A_i$; similarly, $A_k$ and $B_k$ must be paired.

\begin{figure}[ht]
\begin{center}
  \includegraphics[width=118mm]{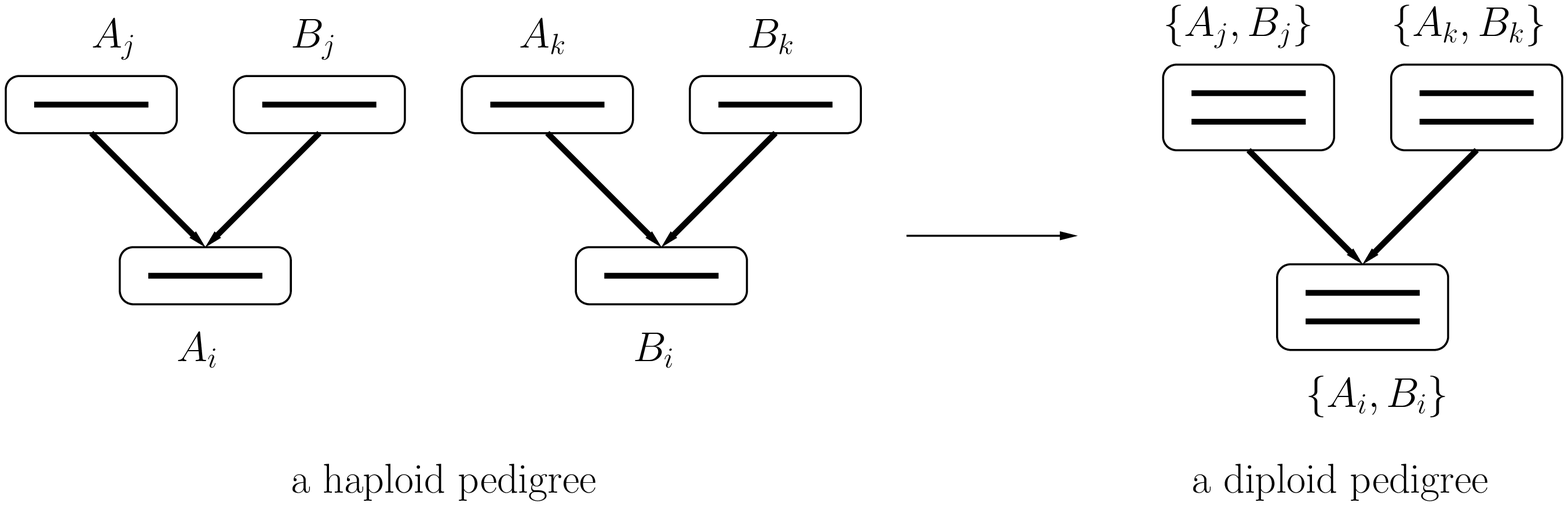}
\end{center}
  \caption[]{Pairing vertices (sequences) of a haploid pedigree}
  \label{fig-d2h}
\end{figure}

\begin{defn}[{\bf Sequence alignments and characters}]
  \label{defn-char} Let $\Sigma $ be a finite set (called an {\em
    alphabet}) and $U$ a finite set. A {\em character} on $U$ is a map
  $C:U \rightarrow \Sigma$. For $L \in \nn$, an {\em alignment} of
  length $L$ on $U$ is a map $A: U \rightarrow \Sigma^L$. Equivalently,
  an alignment is an $L$-tuple $(C_1, C_2, \ldots, C_L)$ of characters
  on $U$, or a two dimensional array of symbols from $\Sigma$, with
  $|U|$ rows and $L$ columns. The rows of the array are called {\em
    sequences}, and are written as $A(i)$, $i = 1$ to $|U|$. Individual
  entries in the array are written as $A(i,j)$, where $i = 1$ to $|U|$
  and $j=1$ to $L$. The columns of an alignment are called {\em sites}.
\end{defn}

\nomenclature{$\Sigma$}{finite alphabet}

\nomenclature{$\Sigma^{XL}$}{the set of alignments of length $L$ on $X$}

\nomenclature{$C$, $C_i$, ...}{characters on $X$, i.e., maps $C:X
  \rightarrow \Sigma$}

\nomenclature{$A,A_i,\ldots$ }{alignments on $X$, i.e., maps from $X$ to
  $\Sigma^L$ or elements of $\Sigma^{XL}$ }

\nomenclature{$\Sigma^X$}{the set of site patterns on $X$}

Usually $U$ will be the set of vertices of a pedigree, and we will be
interested in alignments restricted to the set $X$ of extant vertices of
the pedigree. The space of characters on $X$ is $\Sigma^X$, which is
also referred to as the space of {\em site patterns}. The set of
alignments of length $L$ on a set $X$ is $(\Sigma^{X})^L$, which we
write as $\Sigma^{XL}$.

Let $P(X,Y,U,E)$ be a pedigree and let $A \in \Sigma^{UL}$ be an
alignment on $P$. If the sequences in $A$ have evolved under some
process of recombination or mutation, then (regardless of the details of
the model of recombination or mutation) we may suppose that each site in
each sequence is inherited only from one of the two parent
sequences. Therefore, for each $u \in U$ and each $j \in [L]$, there is
a unique directed path $P_{uj}$ from some founder vertex $y_{uj}$ to $u$
that defines the genetic ancestry of $A(u,j)$. Therefore, each site $j$
has associated with it a spanning forest $G_j := \cup_{u\in U}P_{uj}$,
and each alignment of length $L$ has an underlying (usually unknown)
sequence $(G_j,j = 1,2,\ldots, L)$ of spanning forests. Similarly, each
site $j$ has associated with it a {\em directed subforest} defined by
$T_j := \cup_{x\in X}P_{xj}$, and we have a {\em directed subforest
  sequence} $(T_j,j = 1,2,\ldots, L)$ of the alignment. Here we define
these notions purely graph theoretically (without reference to
alignments or models).

\begin{defn}\label{defn-xtree}
  Let $X$ be a finite set. A {\em directed $X$-forest} $T$ is a directed
  forest that satisfies the following conditions: the set of vertices
  with out-degree 0 is $X$, and is called the {\em leaf set} of $T$;
  each component has a single vertex of in-degree 0, called the {\em
    root vertex} of the component, and all other vertices have in-degree
  1; all arcs are directed away from the root vertices. An {\em
    undirected $X$-forest} is an unrooted forest with leaf set
  $X$. Suppose $T$ is a directed $X$-forest. It induces a natural
  partition of $X$ into {\em maximal clusters} $X_1, X_2, \ldots, X_k$
  such that vertices in cluster $X_i$ have a unique most recent common
  ancestor (MRCA) $u_i$ in $T$. We construct a subgraph of $T$ induced
  by $u_i, i \in [k]$ and their descendants in $T$, and then replace its
  (directed) arcs by (undirected) edges. The resulting undirected
  unrooted graph is called the {\em undirected $X$-forest} of $T$,
  written as $T_u(T)$. It is the maximal undirected $X$-forest contained
  in $T$.
\end{defn}

In the above definition, the term {\em $X$-forest} is meant to be
analogous to the term {\em $X$-tree} that is commonly used in
phylogenetics \cite{ss03}. In this paper we will adapt some of the
phylogenetic identifiability results for undirected $X$-forests that
appear as undirected graphs underlying subgraphs of
pedigrees. Therefore, in the following definition, we specialise the
terms for pedigrees.

\begin{defn}\label{defn-subtree}
  Let $P$ be a pedigree.  A {\em spanning forest} of $P$ is a spanning
  subgraph $G$ of $P$ such that the in-degree of each vertex in $G$ is 1
  unless it is a founder vertex in $P$.  A {\em directed $X$-forest} of
  $P$ is a subgraph $T$ of $P$ such that $T$ is a directed $X$-forest
  and the root vertex of each component of $T$ is a founder vertex in
  $P$.  Each spanning forest $G$ in $P$ contains a unique directed
  $X$-forest of $P$, and we denote it by $T_d(G)$. An {\em undirected
    $X$-forest} in $P$ is the unique undirected $X$-forest in any
  directed $X$-forest in $P$. Each spanning forest $G$ in $P$ contains a
  unique undirected $X$-forest of $P$, and we denote it by $T_u(G)$ .
\end{defn}

\nomenclature{$T,T_i, \ldots$}{- $X$-forests in a pedigree or $X$-forests}

\nomenclature{$T_d(G)$}{- the unique directed $X$-forest in a spanning
  forest $G$ in a pedigree}

\nomenclature{$T_u(G)$}{- the unique undirected $X$-forest in a spanning
  forest $G$ in a pedigree}

\nomenclature{$G,G_i, \ldots$}{- spanning forests in a pedigree}

\nomenclature{$\mathbf{G}:=(G_1,G_2,\ldots, G_m)$}{- a spanning forest
  sequence of length $m$} 

\nomenclature{$\mathbf{T}:=(T_1,T_2,\ldots, T_m)$}{- an $X$-forest
  sequence of length $m$}

Note that we use the term {\em spanning forest} in a specific sense: a
spanning forest is not any spanning forest in the graph theoretic
sense. We illustrate these terms in Figure~\ref{fig-subtrees}, which
shows a pedigree and a spanning forest $G$ with $E(G) =
\{da,a1,a2,eb,b3,fc\}$ (shown by bold arcs). In this example, the unique
directed $X$-forest $T_d(G)$ in $G$ has the arc set
$\{da,a1,a2,eb,b3\}$; its clusters are $X_1 = \{1,2\}$ and $X_2 =
\{3\}$, and the root vertices of its components are $d$ and $e$. The
unique undirected $X$-forest $T_u(G)$ in $G$ consists of vertex set
$\{1,2,3,a\}$ and edge set $\{a1,a2\}$. Note that vertex 3 is isolated
in $T_u(G)$ since it is the MRCA of its cluster, but we need it in our
analysis.

\begin{figure}[ht]
\begin{center}
  \includegraphics[width=50mm]{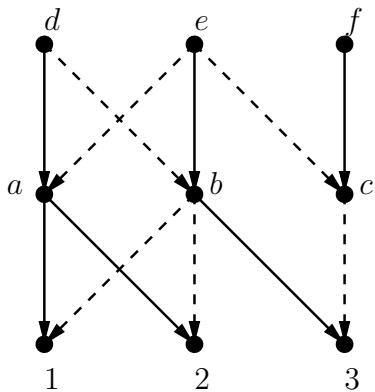}
\end{center}
\caption[]{A pedigree and a spanning forest, which is shown by bold
  arcs}
  \label{fig-subtrees}
\end{figure}

\begin{defn}
  Two (directed or undirected) $X$-forests $T$ and $T^{\prime}$ are said
  to be {\em isomorphic} (written $T\simeq T^{\prime}$) if there is a
  graph theoretic isomorphism $\pi$ from $T$ to $T^{\prime}$ such that
  $\pi(x) = x$ for all $x$ in $X$. The isomorphism class of a directed
  or an undirected $X$-forest $T$, denoted by $\lVert T \rVert$, is the
  set of all $X$-forests $T^{\prime}$ that are isomorphic to $T$.
\end{defn}
The set of all directed $X$-forests of $P$ is denoted by $\mct_P$. The
set of isomorphism classes of directed $X$-forests of $P$ (or the set of
{\em distinct} directed $X$-forests in $P$) is denoted by $\lVert
\mct_P\rVert$. The set of spanning forests of a pedigree $P$ is denoted
by $\mcg_P$. The set of undirected $X$-forests of $P$ is denoted by
$\mcu_P$. The set of isomorphism classes of undirected $X$-forests of
$P$ (or the set of {\em distinct} undirected $X$-forests in $P$) is
denoted by $\lVert \mcu_P\rVert$.

\nomenclature{$\lVert T \rVert$}{the isomorphism class of an $X$-forest
  $T$}

\nomenclature{$\mct_P, \mcu_P$}{the sets of directed and undirected
  $X$-forests in a pedigree $P$, respectively}

\nomenclature{$\lVert \mct_P \rVert, \lVert \mcu_P \rVert$}{the sets of
  isomorphism classes of directed and undirected $X$-forests (or
  distinct directed or undirected $X$-forests) in a pedigree $P$,
  respectively}

\nomenclature{$\mcg_P$}{the set of spanning forests in a pedigree $P$}

\begin{prop}\label{prop-spanning-forests}
  Each spanning forest of a pedigree $P$ has $e(P)^/2$ arcs and contains
  exactly one directed $X$-forest. There are $2^{e(P)/2}$ spanning
  forests. Each directed $X$-forest $T$ is contained in
  $2^{(e(P)-2e(T))/2}$ spanning forests.
\end{prop}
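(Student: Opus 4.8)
The plan is to describe a spanning forest concretely as the data of a choice of one incoming arc at each non-founder vertex, and then to read off all three assertions from that description. For the counting: by Definition~\ref{defn-general} every non-founder vertex of $P$ has in-degree exactly $2$ and every founder has in-degree $0$, so every arc of $P$ has a non-founder head and $e(P)=2n$, where $n$ is the number of non-founder vertices; thus $n=e(P)/2$. By Definition~\ref{defn-subtree} a spanning forest $G$ is obtained by keeping exactly one of the two incoming arcs at each non-founder vertex (and no arc at each founder), and conversely every such selection yields a spanning forest; hence spanning forests correspond bijectively to functions from the non-founder vertices to $\{0,1\}$. In particular $G$ has $n=e(P)/2$ arcs and there are $2^{e(P)/2}$ of them.

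Next I would prove that $G$ contains a unique directed $X$-forest of $P$. Since in $G$ each non-founder has in-degree $1$ and each founder in-degree $0$, tracing the incoming arcs of $G$ backwards from any $x\in X$ is forced at every step and, $P$ being acyclic, terminates at a founder; let $T$ be the union over $x\in X$ of these backward paths. I would check that $T$ is a directed $X$-forest of $P$: it is a forest (a subgraph of the acyclic, in-degree-$\le 1$ graph $G$), its out-degree-$0$ vertices are exactly the elements of $X$ (any $v\in V(T)\setminus X$ lies on some backward path and so has an out-arc of $T$ continuing toward $X$), every non-source vertex has in-degree $1$, and every source is a founder of $P$ (a non-founder source $u$ would have a $G$-incoming arc, which backward tracing would already have placed in $T$). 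For uniqueness, let $T'\subseteq G$ be any directed $X$-forest of $P$. For each $x\in X$, $T'$ contains a directed founder-to-$x$ path, which must coincide with the unique backward path in $G$, so $T'\supseteq T$; and $T'$ can contain no further arc $wv$, because an $X$-forest has no dangling branches, so $v$ would have a directed $T'$-path (hence $G$-path) to some element of $X$, putting $v\in V(T)$ with its unique $G$-incoming arc already in $T$ and distinct from $wv$, contradicting $T'\subseteq G$. Hence $T=T_d(G)$ is the unique directed $X$-forest of $P$ contained in $G$. The requirement that component roots be founders is what makes this airtight: without it, truncating $T_d(G)$ above any internal vertex would produce another directed $X$-forest inside $G$.

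Finally, fixing a directed $X$-forest $T$ of $P$, the uniqueness just established shows that $T\subseteq G$ holds precisely when $T_d(G)=T$, so it suffices to count spanning forests containing $T$. Such a $G$ is exactly one whose selected arc at each non-founder vertex appearing in $T$ is the arc of $T$ into that vertex, the selections at all other non-founder vertices being arbitrary. The non-founder vertices of $P$ appearing in $T$ are precisely the heads of the arcs of $T$ --- the roots of $T$ are founders, and every other vertex of $T$ has in-degree $1$ in $T$, hence is the head of exactly one arc of $T$ --- so there are $e(T)$ of them; the remaining $e(P)/2-e(T)$ selections are free, giving $2^{(e(P)-2e(T))/2}$ spanning forests containing $T$.

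I expect the only genuine obstacle to be the uniqueness step, namely ruling out directed $X$-forests of $P$ that are contained in $G$ but not produced by backward tracing. The two facts that make this go through are that an $X$-forest has no dangling branches (its out-degree-$0$ vertex set is exactly $X$) and that every vertex of $G$ has in-degree at most $1$; everything else is bookkeeping with the bijection of the first step.
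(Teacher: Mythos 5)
Your proof is correct and follows essentially the same route as the paper's: spanning forests are identified with choices of one incoming arc at each non-founder vertex, the unique directed $X$-forest is the part of $G$ ancestral to $X$ (your backward tracing), and the count $2^{(e(P)-2e(T))/2}$ comes from the free choices at the non-founder vertices outside $T$. You simply supply more detail, in particular for the uniqueness step, which the paper states without elaboration.
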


\begin{proof} Each spanning forest of $P$ is obtained by selecting one
  of the two arcs that point to each non-founder vertex. The unique
  directed $X$-forest $T$ in a spanning forest $G$ is the subgraph of
  $G$ spanned by vertices in $G$ that are ancestral in $G$ to the extant
  vertices. Finally, $e(P) - 2e(T)$ is the number of arcs not pointing
  to vertices in $T$, therefore, $(e(P) - 2e(T))/2$ is the number of
  non-founder vertices outside $T$, at each of which we can choose one
  of the two incoming arcs to construct a spanning forest containing
  $T$.
\end{proof}

In any model of recombination, the number of recombination events is
determined by the spanning forest sequence underlying an alignment, but
we define it for all spanning forest sequences without reference to
alignments or models.

\begin{defn}\label{def-rg-sg}
  Let $P$ be a pedigree and let $\mathbf{G} := (G_i, i = 1,2,\ldots, L)$
  be a spanning forest sequence in $P$. If $vu$ and $wu$ are distinct
  arcs in $P$ such that $vu$ is in $G_i$ and $wu$ is in $G_{i+1}$, then
  we say that a recombination has occurred at site $i$ at vertex $u$ or
  that $i$ is a recombining site. If $vu$ is an arc in $G_i$ and
  $G_{i+1}$ then we say that there is no recombination at $u$ at site
  $i$. We define the number of recombinations in $\mathbf{G}$ to be
  \[
  r(\mathbf{G}) := \sum\limits_{i=1}^{L-1}|E(G_{i+1})\bigtriangleup
  E(G_{i})|/2,
  \] 
  where $|E(G_{i+1})\bigtriangleup E(G_{i})|/2$ is the number of
  recombinations separating $G_i$ and $G_{i+1}$. The number of points of
  no recombination is
  \[
  s(\mathbf{G}) := \sum\limits_{i=1}^{L-1}|E(G_{i+1})\cap E(G_{i})|.
  \]
  The directed $X$-forest sequence of $\mathbf{G}$ is the sequence
  $\mathbf{T_d} := (T_d(G_i), i = 1,2,\ldots, L)$, and the undirected
  $X$-forest sequence of $\mathbf{G}$ is the sequence $\mathbf{T_u} :=
  (T_u(G_i), i = 1,2,\ldots, L)$.
\end{defn}

\nomenclature{$r(\mathbf{G})$}{the number of recombinations in
  $\mathbf{G}$, see Definition~\ref{def-rg-sg}}

\nomenclature{$s(\mathbf{G})$}{the number of points of no recombination in
  $\mathbf{G}$, see Definition~\ref{def-rg-sg}}

\section{Models R and RM, and identifiability problems}
\label{sec-models}
We assume that in any reasonable model of sequence evolution, sequences
are first assigned to the founder vertices, and then subsequent
generations of individuals inherit their sequences from their parents'
sequences subject to recombinations and mutations. We are then
interested in the following types of {\em identifiability} questions.

\begin{problem} \label{problem-main} Suppose sequences of equal length
  over a finite alphabet are assigned to the founder vertices of a
  pedigree. The sequences then evolve on the pedigree undergoing
  recombinations and mutations, giving a probability distribution on the
  space of alignments on the set of extant vertices. Can we determine
  the pedigree uniquely (i.e., up to isomorphism) - with or without the
  knowledge of the size or the depth of the pedigree or the various
  probability parameters defining the recombination and mutation
  processes, with or without restrictions such as discrete generations
  or constant population, and so on? In the case of diploid pedigrees,
  we will be given the distribution on alignments on the set of extant
  vertices along with a pairing of extant vertices.
\end{problem}

In this paper, we study the above types of questions under two simple
models of recombination and mutation. In {\em Model R}, we assume that
sequences evolve on a pedigrees under a process of recombinations
without mutations. In {\em Model RM}, we assume that sequences evolving
on a pedigree undergo recombinations and mutations. For convenience, we
also formalise the mutation part of Model RM for the spanning forests of
pedigrees and in general for directed $X$-forests, and call it {\em
  Model M}.

In all these models, we assume that first all the founder vertices of a
pedigree (or the root vertices of a spanning forest or a directed
$X$-forest) are assigned sequences. These sequences are independently
selected from a uniform distribution on $\Sigma^L$, where $\Sigma $ is a
known finite alphabet. Then the sequences evolve on the pedigree (or a
spanning forest or a directed $X$-forest) in a top-down manner, i.e., a
vertex is assigned a sequence only after its parents have been
assigned sequences.

We begin with a schematic description of the recombination process. Our
description is largely based on Chapter 12 of \cite{lange02}.
Figure~\ref{fig-meio} schematically shows the process of gamete (sperm
or egg) formation in eukaryotes. Initially there is a parent cell with
one pair of homologous non-sex chromosomes. Each chromosome is then
duplicated with the identical sister {\em chromatids} joined together at
the {\em centromere}, forming a {\em four-strand bundle}. Then the two
duplicated chromosomes exchange material between {\em chiasmata} ({\em
  recombination points}). In the diagram, there are three recombination
events. The first recombination is between strands 1 and 3 (counted from
top to bottom), the second is between strands 2 and 4, and the third is
between strands 2 and 3. The four chromatids after the exchange of
material are shown next. Then the cell undergoes two cell divisions to
create four haploid gametes, each receiving one of the four chromosomes.

As shown in the diagram, at each recombination point a crossover occurs
between one strand from the first pair and one strand from the second
pair. At each crossover, a strand from the first pair and a strand from
the second pair are chosen randomly with equal probability (independent
of other chiasma). This independence property is known as the lack of
{\em chromatid interference}.

Suppose that the locations of recombinations are modelled as a Poisson
point process along the sequence (or on $[0, \infty)$) with the rate
  $\lambda$ or a Bernoulli process with probability $p$ (so that a
  crossover occurs after a site on a sequence with probability $p$
  independently of other sites or sequences). Since exactly two of the
  four gametes - one from the first pair and one from the second pair -
  inherit any recombination, any given gamete inherits a recombination
  with probability $1/2$. Therefore, for the sequence of any fixed
  gamete, the locations of recombinations are still modelled by a
  Poisson process, but with the rate $\lambda /2$, (or Bernoulli process
  with probability $p/2$). Therefore, a model may be formalised with
  just two parent sequences instead of four. A Poisson process for the
  locations of chiasmata was first proposed in \cite{haldane19}. Based
  on the above description, we formalise models R and RM, in which we
  assume that crossovers in a finite sequence occur according to a
  Bernoulli process.

\begin{figure}[ht]
  \begin{center}
    \includegraphics[scale=0.6]{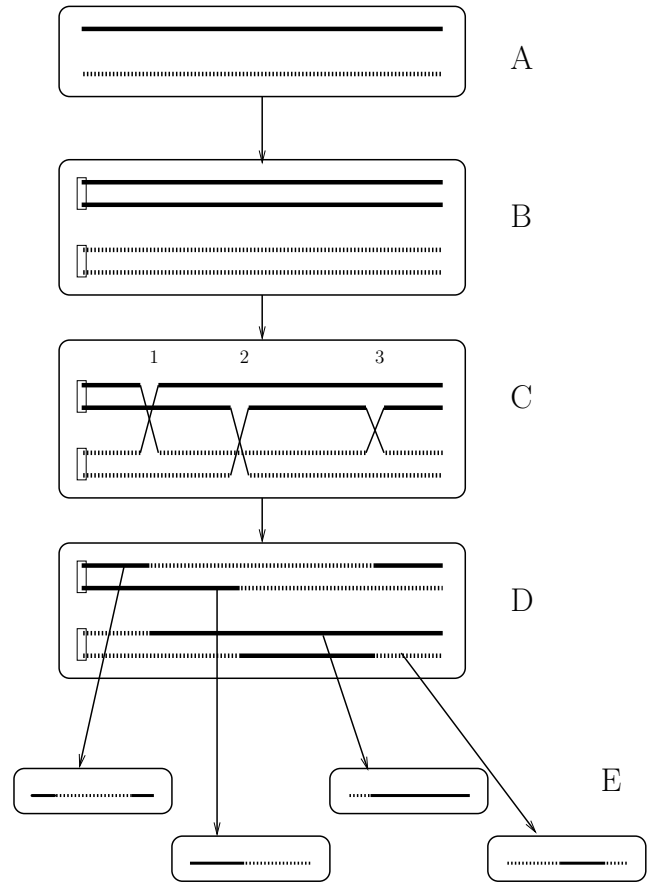}
  \end{center}
  \caption[]{Schematic description of recombination for diploid cells -
    (A) two homologous chromosomes in a parent cell (B) each chromosome
    is duplicated and a 4-strand bundle is formed (C) the sister
    chromatids of the first chromosome exchange material with the sister
    chromatids of the second chromosome between recombination points
    1,2,3 (D) four chromatids after the exchange of material (E) the
    four strands are inherited by four gametes}
  \label{fig-meio}
\end{figure}

\

\noindent {\em Model R:} Consider three sequences $A(i)$ of length $L$
over alphabet $\Sigma$, where $i \in \{u,v,w\}$ and $v$ and $w$ are
parents of $u$. The sequence $A(u)$ is obtained by recombining sequences
$A(v)$ and $A(w)$ as follows. Let $X_1,X_2,\ldots $ be a Markov chain on
the state space $\{v,w\}$ with transition probabilities $p_{ij} = p$ if
$i \neq j$ for $i,j \in \{v,w\}$, and $\pr\{X_1 = v\} = \pr\{X_1 = w\} =
1/2$. Then, for $k = 1,2,\ldots, L$, $A(u,k) \leftarrow A(i,k)$ if $X_k
= i$. Thus $X_{k+1} \neq X_k$ indicates a crossover from one sequence to
the other. We refer to this model as {\em Model R}.

\

\noindent {\em Model RM:} Consider three sequences $A(i)$ of length $L$
over alphabet $\Sigma$, where $i \in \{u,v,w\}$ and $v$ and $w$ are
parents of $u$.  The sequence $A(u)$ is obtained from the sequences
$A(v)$ and $A(w)$ by a process of recombinations and mutations as
follows.  Let $X_1,X_2,\ldots $ be a Markov chain on the state space
$\{v,w\}$ with transition probabilities $p_{ij} = p$ if $i \neq j$ for
$i,j \in \{v,w\}$, and $\pr\{X_1 = v\} = \pr\{X_1 = w\} = 1/2$. Then if
$X_k = i$ and $A(i,k) = r$, then $A(u,k)$ is assigned $r$ with
probability $1-(\Sigma -1)\mu$, and $A(u,k)$ is assigned a state
different from $r$ with probability $(\Sigma -1)\mu$. When a state
different from $r$ is assigned to $A(u,k)$, each state in $\Sigma
\backslash \{r\}$ has equal probability $\mu$ of being assigned to
$A(u,k)$. We refer to this process as {\em Model RM}.

\

\noindent {\em Model M:} This process is defined for spanning forests of
a pedigree and directed $X$-forests. First each founder or the root
vertex in each component of a directed $X$-forest is assigned
independently and uniformly randomly a state from $\Sigma$. Suppose $j$
is the parent vertex of $i$ in a spanning forest of a pedigree or in a
directed $X$-forest. Let $A(i)$ and $A(j)$ be the sequences of $i$ and
$j$, respectively, both of equal length $L$ over alphabet $\Sigma$. Then
for each $k \in [L]$, $A(i,k)$ is assigned the same state as $A(j,k)$
with probability $1-(\Sigma -1)\mu$, and $A(i,k)$ is assigned a state
different from $A(j,k)$ with probability $(\Sigma -1)\mu$. When a state
different from $A(j,k)$ is assigned to $A(i,k)$, each state in $\Sigma
\backslash \{A(j,k)\}$ has equal probability $\mu$ of being assigned to
$A(i,k)$. We refer to this process as {\em Model M}.

Model M on a directed $X$-forest $T$ is equivalent to a similarly
formulated model on the undirected $X$-forest of $T$. We root each
component of the undirected $X$-forest arbitrarily, and assign to it a
state from $\Sigma$ uniformly randomly, independent of the roots of
other components. The state then evolves away from the root in each
component. If a component itself is an isolated vertex, it is simply
assigned a state from $\Sigma$ uniformly randomly. Since the mutation
model described here is reversible, the same distribution on the site
patterns is observed on $X$ in the undirected $X$-forest as in a
directed $X$-forest for a given $\mu$. Therefore, when we try to
construct a tree from the character distribution on its leaves, we
cannot construct the directed $X$-forest, but we can at best construct
the undirected $X$-forest in it. Therefore, we will consider Model M
only on undirected $X$-forests.

\nomenclature{$p$}{the crossover probability in models $R(p)$ and
  $RM(p,\mu)$} \nomenclature{$\mu$}{the substitution probability in
  models $RM(p,\mu)$ and $M(\mu)$} \

Thus Model RM may be thought of as a synthesis of Models R and M so that
the recombination-free segments of sequences evolving on a pedigree may
be examined under Model M with phylogenetic methods.

Let $P$ be a pedigree on $X$. For an alignment $A\in \Sigma^{XL}$, we
denote by $\pr\{A \cond P, RM(p,\mu)\}$ the probability that sequences
of length $L$ evolving on the pedigree $P$ under model $RM(p,\mu)$ give
an alignment $A$ on $X$. We use analogous notation when the model
$RM(p,\mu)$ is replaced by the model $R(p)$, or when the pedigree $P$ is
replaced by a directed or an undirected $X$-forest $T$ and the model
under consideration is the mutation model $M(\mu)$. We denote the
various probability spaces by $(\Sigma^{XL}: P, RM(p,\mu))$,
$(\Sigma^{XL}: P, R(p))$, $(\Sigma^{XL}: T, M(\mu))$, $(\Sigma^{X}: T,
M(\mu))$, and so on. For pedigrees $P$ and $Q$, we will write
$(\Sigma^{XL}: P, RM(p,\mu)) = (\Sigma^{XL}: Q, RM(p,\mu))$ when $\pr\{A
\cond P, RM(p,\mu)\} = \pr\{A \cond Q,RM(p,\mu)\}$ for all $A \in
\Sigma^{XL}$, and analogously for other models.
\nomenclature{$(\Sigma^{XL}:P,RM(p,\mu))$}{the space of alignments on
  the extant vertices of $P$ as a probability space under model
  $RM(p,\mu)$; other probability spaces are denoted analogously}

As in the case of alignments, we treat the spaces of spanning forest
sequences, directed $X$-forest sequences, and undirected $X$-forest
sequences as probability spaces, and denote them by $(\mcg^{L}:P,R(p))$,
$(\mct^{L}:P,R(p))$, and $(\mcu^{L}:P,R(p))$, respectively. The spanning
forest sequences, and directed and undirected $X$-forest sequences, (and
the corresponding sequences of isomorphism classes of spanning forests
and directed and undirected $X$-forests) are defined by only the
recombination events, therefore, the probability spaces are unchanged if
$R(p)$ is replaced by $RM(p,\mu)$.  For a spanning forest sequence
$\mathbf{G} := (G_1,G_2,\ldots, G_L)$, we will write $\pr\{\mathbf{G} \cond 
P,R(p)\}$ for the probability of $\mathbf{G} $ in the probability space
$(\mcg^{L}: P,R(p))$. We will use analogous notation for other sequences
and probability spaces.  Unless stated otherwise, when we will refer to
alignments or sequences of spanning forests or other objects, we will
mean alignments or sequences of spanning forests or other objects,
respectively, from the appropriate probability spaces that are clear in
the context.

\begin{defn}\label{def-identifiability}
  Nonisomorphic pedigrees $P$ and $Q$ in a class $\mcc$ are said to be
  distinguished from each other under model $RM(p,\mu)$ if
  $(\Sigma^{XL}: P, RM(p,\mu)) \neq (\Sigma^{XL}: Q, RM(p,\mu))$ for
  some $L$, (i.e., for some $L$, there exists $\mca \subseteq
  \Sigma^{XL}$ such that $\pr\{\mca \cond P,RM(p,\mu)\} \neq \pr\{\mca
  \cond Q,RM(p,\mu)\}$). A pedigree $P$ in a class $\mcc$ is said to be
  identifiable under model $RM(p,\mu)$ if it is distinguished from every
  other pedigree $Q$ in $\mcc$, (i.e., if there is a pedigree $Q$ in
  $\mcc$ such that $(\Sigma^{XL}: P, RM(p,\mu)) = (\Sigma^{XL}: Q,
  RM(p,\mu))$ for all $L\in \zz_+$, then $Q$ is isomorphic to $P$).
  Pedigrees in a class $\mcc$ are said to be identifiable under model
  $RM(p,\mu)$ if all pairs of pedigrees in $\mcc$ are distinguished from
  each other under model $RM(p,\mu)$.
\end{defn}

\nomenclature{$\pr\{.\}$}{the probability of an event}

\nomenclature{$\mca, \mca_{i}$}{subsets of an alignment space such as
  $\Sigma^{XL}$}

Similar terminology will be used for other models and for undirected
$X$-forests. Stronger notions of identifiability may be defined, and
correspondingly stronger variants of identifiability questions may be
asked. The above definitions assume the model parameters to be
fixed. But we may ask if there are nonisomorphic pedigrees $P$ and $Q$
and model parameters $p, p^\prime, \mu, \mu^\prime$ such that
$(\Sigma^{XL}: P, RM(p,\mu)) = (\Sigma^{XL}: Q, p^\prime,\mu^\prime)$
for all $L \in \zz_+$. Given the probability distribution a pedigree
induces on the space of alignments, we may ask if the pedigree can be
recognised to be in a class $\mcc$. We assume in all results in this
paper that the model parameters and the class $\mcc$ (typically defined
by the size of a pedigree) to be fixed (but possibly unknown).

\begin{rem}
  Pedigrees of order 1 are in general not identifiable since for all
  pedigrees of order 1, the extant sequence will be uniformly
  distributed. In fact for all pedigrees, all extant sequences will be
  uniformly distributed. But when there are more than one extant
  vertices, the joint distribution of extant sequences will contain some
  information about the pedigree on which they have evolved, because,
  for example, some vertices may have common ancestors, so their
  sequences will be correlated. We will therefore consider
  identifiability questions for pedigrees of order more than 1.
\end{rem}

The models R and RM on a pedigree $P$ with $e(P) = 2e$ arcs may be
interpreted as Hidden Markov Models with the set of hidden states
$\mcg_P$ and the set of observed states $\Sigma^X$. This is illustrated
in Figure~\ref{fig-chains}. The initial probability for each hidden
state is $1/2^{e}$. The probability of an observed state conditional on
a given hidden state $G \in \mcg_P$ can be easily computed (and actually
depends only on $T_u(G)$ and $\mu$). The probability of transition from
state $G_i$ to $G_j$ is given by
\begin{equation}\label{eq-gi-gj}
  \pr\{G_j\mid G_i\} = p^{|E(G_j)\bigtriangleup E(G_i)|/2}
  (1-p)^{|E(G_j)\cap E(G_i)|}
\end{equation}

In most contexts in which HMMs are used, one assumes that the set of
hidden states in known, but here we do not know the set $\mcg_P$. We
also note that the sequence of $X$-forests is not a Markov
chain. Consider directed $X$-forests $T_j$ and $T_{j+1}$ at sites $j$
and $j+1$, respectively. Suppose a vertex $u$ in $V(T_{j+1}) \backslash
V(T_j)$ has parents $v$ and $w$ in the pedigree. The probability that
the arc $vu$ is in $T_{j+1}$ depends on the history before the $j$-th
site. For example, if $vu$ was in $T_{j-1}$ then the probability that it
would also be in $T_{j+1}$ is $(1-p)^2 + p^2$. But if $vu$ was not in
$T_{j-1}$ the probability that it would be in $T_{j+1}$ is
$2p(1-p)$. Therefore, $T_1, T_2, \ldots $ is not a Markov chain, i.e.,
$\pr\{T_{j+1} \cond T_j, T_{j-1}, \ldots, T_1 , P,R(p)\}$ may not be the
same as $\pr\{T_{j+1} \cond T_j, T^{\prime}_{j-1}, \ldots, T^{\prime}_1,
P,R(p)\}$ if the sequences $T_{j-1}, \ldots, T_1$ and $T^{\prime}_{j-1},
\ldots, T^{\prime}_1$ are different. Therefore, we cannot interpret the
models as HMMs with directed or undirected $X$-forests as hidden states.

\begin{figure}[ht]
\begin{center}
  \includegraphics[width=118mm]{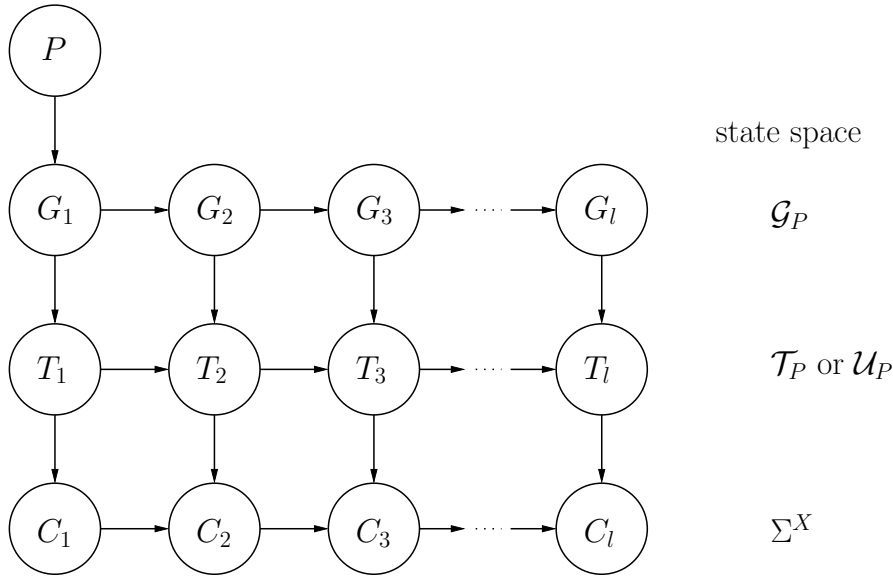}
\end{center}
\caption[]{An observed sequence $C_1, C_2, \ldots $ of characters at the
  extant vertices of a pedigree $P$. Among the intermediate chains, only
  $G_1,G_2, \ldots $ is a Markov chain.}
  \label{fig-chains}
\end{figure}

\section{Analysis of some examples under Model R}
\label{sec-model-r}

In this section, we analyse diploid pedigrees and certain haploid
pedigrees that obey many properties of diploid pedigrees under model
$R$. We show that, with some exceptions and mild conditions on $p$ and
$|\Sigma|$, diploid pedigrees of depth 2 are reconstructible from the
probability distribution on the alignments on extant vertices. We use
very basic techniques such as pairwise comparisons between extant
sequences to exploit the correlation between them to reconstruct their
pedigree.

Let $P$ be a pedigree and let $T$ be an undirected $X$-forest. We define
$n(G > T:P) := |\{G \in \mcg_P : T_u(G) \cong T\}|$.

\begin{prop}\label{prop-ngt} Let $P$ be a pedigree with $e(P) = 2e$
  arcs. Let $C \in \Sigma^X$ be any character. Then the probability that
  the $k$-th character in an alignment is $C$ is given by
  \begin{equation}
    \pr\{C_k = C  \cond  P, RM(p,\mu)\} =
    \sum_{T \in \lVert \mcu_P \rVert}
    \frac{n(G > T:P)}{2^e} \pr\{C \cond T, M(\mu)\}.
  \end{equation}
  In particular, it does not depend on $k$. 
\end{prop}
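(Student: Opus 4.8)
The plan is to condition on the hidden state of the HMM interpretation described in the paragraph around equation~\eqref{eq-gi-gj}, namely on the spanning forest $G$ that underlies the genetic ancestry of site $k$. First I would write, by the law of total probability over the set $\mcg_P$ of spanning forests of $P$,
\[
  \pr\{C_k = C \cond P, RM(p,\mu)\}
  = \sum_{G \in \mcg_P} \pr\{G_k = G \cond P, R(p)\}\;
    \pr\{C_k = C \cond G_k = G, P, RM(p,\mu)\}.
\]
For the first factor I would invoke Proposition~\ref{prop-spanning-forests}: each of the $2^{e(P)/2} = 2^e$ spanning forests is obtained by independently choosing, for each of the $e$ non-founder vertices, one of its two incoming arcs, and since $\pr\{X_1 = v\} = \pr\{X_1 = w\} = 1/2$ at every vertex, the marginal distribution of $G_k$ (for a single, fixed site $k$, ignoring the chain structure) is uniform on $\mcg_P$; thus $\pr\{G_k = G\} = 1/2^e$ for every $G$, with no dependence on $k$. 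For the second factor, once $G_k = G$ is fixed the symbol at each extant vertex $x$ has been inherited along the unique path in $G$ from a founder, mutated site-by-site under the substitution mechanism of Model RM; this is exactly Model M run on the directed $X$-forest $T_d(G)$, and by the reversibility/equivalence remark following the definition of Model M it depends only on the undirected $X$-forest $T_u(G)$ and on $\mu$. Hence $\pr\{C_k = C \cond G_k = G, P, RM(p,\mu)\} = \pr\{C \cond T_u(G), M(\mu)\}$.

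Substituting these two evaluations and then grouping the $2^e$ spanning forests according to the isomorphism class of their undirected $X$-forest gives
\[
  \pr\{C_k = C \cond P, RM(p,\mu)\}
  = \sum_{G \in \mcg_P} \frac{1}{2^e}\,\pr\{C \cond T_u(G), M(\mu)\}
  = \sum_{T \in \lVert \mcu_P \rVert}
      \frac{n(G > T : P)}{2^e}\,\pr\{C \cond T, M(\mu)\},
\]
where the last step uses the definition $n(G > T : P) = |\{G \in \mcg_P : T_u(G) \cong T\}|$ and the fact that $\pr\{C \cond T, M(\mu)\}$ is an isomorphism invariant of the undirected $X$-forest (Model M is defined purely in terms of the graph structure and fixes the labels in $X$). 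Since neither factor depended on $k$, the whole expression is independent of $k$, which is the final assertion.

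The only genuinely substantive point — and the one I would state most carefully — is the claim that the \emph{marginal} law of $G_k$ at a single site is uniform on $\mcg_P$. This is not automatic from the fact that each vertex-level chain $X_1, X_2, \ldots$ has uniform stationary distribution; rather it follows because the \emph{initial} distribution $\pr\{X_1 = v\} = \pr\{X_1 = w\} = 1/2$ is already uniform, these choices are made independently across the (in-degree-$2$) vertices, and a spanning forest is precisely a choice of one incoming arc per non-founder vertex. So at every site $k$ the $e$ independent fair coin flips produce a uniform element of a set of size $2^e = \mcg_P$. Everything else — the conditional independence that lets Model~RM restricted to a fixed ancestry collapse to Model~M on $T_d(G)$, and the reduction from $T_d(G)$ to $T_u(G)$ — is supplied verbatim by the definitions of the models and the remark establishing Model~M's reversibility, so no further work is needed there.
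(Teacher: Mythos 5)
Your proposal is correct and follows essentially the same route as the paper's proof: condition on the spanning forest $G_k$ underlying site $k$, observe that its marginal is uniform on $\mcg_P$ (the uniform initial distribution is preserved by the symmetric per-vertex transitions, independently across non-founder vertices), note that the emission law given $G_k = G$ is Model M on $T_u(G)$, and group spanning forests by the isomorphism class of their undirected $X$-forest. The paper packages the uniformity claim as stationarity of the time-homogeneous Markov chain $G_1, G_2, \ldots$ on $\mcg_P$, but the content is identical to your vertex-by-vertex argument.
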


\begin{proof}
  Let $G_1, G_2, \ldots $ be a spanning forest sequence. It is a
  time-homogeneous Markov chain on $\mcg_P$, with transition
  probabilities given by

  \begin{equation*}
    \pr\{G_{j+1} = G^{\prime}\mid G_{j}=G\} 
    = p^{|E(G^\prime)\bigtriangleup E(G)|/2}(1-p)^{|E(G^{\prime})\cap
      E(G)|}.
  \end{equation*}

  It follows that $\pr\{G_k = G\} = 1/2^{e}$ for all $k \in \zz_+$. Let
  $C_1,C_2,\ldots \in \Sigma^{X}$ be a sequence of characters. Then,
  under Model RM,
  \begin{eqnarray}
    &&\pr\{C_k = C \cond P, RM(p,\mu)\} \\[2mm] &=& \sum_{G \in \mcg_P}
    \pr\{C_k = C\mid G_k=G\} \pr\{G_k = G \cond P, RM(p,\mu)\} \nonumber
    \\[2mm] &=& \frac{1}{2^e}\sum_{G \in \mcg_P} \pr\{C_k = C\mid G_k =
    G\} \nonumber \\[2mm] &=& \sum_{T \in \lVert \mcu_P \rVert}
    \frac{n(G > T:P)}{2^e} \pr\{C\mid T, M(\mu)\}.  \nonumber
  \end{eqnarray}
  A similar result holds when Model RM is replaced by Model R.
\end{proof}
The above proposition implies that if two pedigrees have the same number
of undirected $X$-forests of each type and the same number of arcs, then
the character frequencies in alignments alone are not sufficient to
distinguish the two pedigrees. For example, pedigrees in
Figure~\ref{fig-hap} cannot be easily distinguished.

\begin{figure}[ht]
\begin{center}
  \includegraphics[width=118mm]{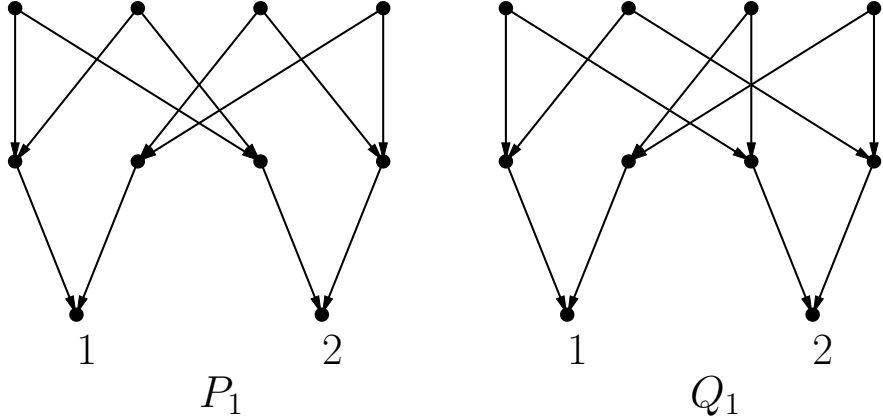}
\end{center}
  \caption[]{Nonisomorphic haploid pedigrees}
  \label{fig-hap}
\end{figure}

But it turns out that, under Model R, most diploid pedigrees of depth 2
are easily distinguished by making pairwise comparisons between extant
sequences and computing the probability that they agree at a site. The
above proposition implies that the probability that two sequences in an
alignment agree at a site $k$ does not depend on $k$ in Models R and
RM. Moreover, such a probability can be computed easily under Model
R. Any given undirected $X$-forest $T$ of a pedigree induces a partition
of its leaves so that the leaves within a component are in the same
part.  Under Model R, the probability that the sequences at two leaves
$i$ and $j$ are in the same state at a site is 1 if they are in the same
component of $T$. Otherwise, the probability is $1/|\Sigma|$. Under
Model RM, the probability depends on $\mu$ if they are in the same
component, and is $1/|\Sigma|$ otherwise.

\begin{prop} \label{prop-2seq} Let $P$ be a diploid pedigree of depth
  2. Let $i,j \in X$. Then the subpedigree of $i$ and $j$ is determined
  by the probability distribution induced on $\Sigma^{\{i,j\}}$ under
  Model R.
\end{prop}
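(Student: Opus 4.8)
The distribution on $\Sigma^{\{i,j\}}$ under Model R is governed by a single real parameter, and the plan is to extract that parameter, give it a combinatorial reading, and then run a short case analysis. Since Model R has no mutations, the state of a vertex at a given site is exactly the (uniform) state assigned to the founder at the root of the lineage that the random spanning forest selects for that vertex. So, writing $y_i$ and $y_j$ for the founders from which $i$ and $j$ inherit the site and $\pi := \pr\{y_i = y_j\}$ (the probability taken over a uniformly random spanning forest of $P$, which is how the spanning forest at any site is distributed, as in the proof of Proposition~\ref{prop-ngt}), the site-states of $i$ and $j$ agree with probability $\pi + (1-\pi)/|\Sigma|$, and conditionally on agreement the common state is uniform while conditionally on disagreement the two states are independent and uniform. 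Hence the distribution on $\Sigma^{\{i,j\}}$ is determined by, and determines, $\pi$; it remains to show that $\pi$ determines the subpedigree of $i$ and $j$.

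Next I would read $\pi$ off the pedigree. Sampling a spanning forest uniformly is the same as, at each non-founder vertex, picking one of its two incoming arcs by an independent fair coin; the lineage $L_i$ of $i$ is the path obtained by starting at $i$ and repeatedly following the chosen arc until reaching a founder. Two observations make this tractable in a diploid pedigree of depth $2$: (a) every lineage has at most two arcs, since the depth is $2$; and (b) if $L_i$ and $L_j$ ever share a vertex $u$ then from $u$ on they follow the same coin tosses and hence coincide, so $y_i = y_j$ if and only if $L_i$ and $L_j$ meet --- here (b) uses that two vertices with a common child form a diploid pair and so have no common parent, which rules out the lineages diverging and then re-meeting. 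Consequently $\pi$ decomposes as a weighted count of the ways in which $L_i$ and $L_j$ can first meet, the weight of each configuration being a product of the probabilities that each lineage reaches the meeting vertex; each such probability is $\tfrac12$ (for a parent of the relevant extant vertex) or $\tfrac14$ (for a grandparent), and the weight is genuinely a product precisely when the path segments below the meeting vertex are vertex-disjoint.

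The substantive step is the enumeration. The diploid axioms --- parents come in pairs, two paired vertices have no common parent, extant vertices are paired only with extant vertices --- together with depth at most $2$ cut the isomorphism types of the subpedigree of $\{i,j\}$ down to a short explicit list (no common ancestor of $i$ and $j$; $i$ and $j$ sharing their pair of parents; a parent of one being a parent of a parent of the other; a parent of $i$ and a parent of $j$ having a common parent, which forces their whole parent pairs to agree; and the admissible combinations of these). For each type one evaluates $\pi$ using the previous paragraph and checks that the resulting values determine the type, so that $\pi$, together with the prior knowledge that $P$ is a depth-$2$ diploid pedigree, pins down the subpedigree.

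I expect the bookkeeping in this last step to be the main obstacle: one has to pin down exactly which vertices the diploid structure forces to coincide or to be distinct (for instance that a non-founder parent of $i$ lying above $j$ must in fact be a parent of $j$, or that a founder lying above both $i$ and $j$ via their parents forces the two parent pairs to agree), since these are the facts that keep the list short and the values of $\pi$ separated, and one must handle the degenerate shapes (a parent or grandparent that is itself a founder, or a lineage with only one arc) without conflating them.
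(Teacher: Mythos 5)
Your reduction is the same one the paper uses, only organised more explicitly: under Model R the joint distribution on $\Sigma^{\{i,j\}}$ is a function of the single parameter $\pi := \pr\{y_i = y_j\}$, the probability that a uniformly random spanning forest routes the two lineages to a common founder, and the paper's four match probabilities $8\delta+\tfrac12$, $12\delta+\tfrac14$, $14\delta+\tfrac18$, $16\delta$ (with $\delta = \tfrac{1}{16|\Sigma|}$) are exactly $\pi+(1-\pi)/|\Sigma|$ for $\pi=\tfrac12,\tfrac14,\tfrac18,0$. Your observation that two lineages coincide from their first common vertex onward (because paired vertices have no common parent) is correct and is what makes $\pi$ a clean sum over first-meeting configurations. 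So the set-up matches the paper's proof, which is a direct case-by-case computation of the same quantity over the four relationship types of its Figure~\ref{fig-grand}.

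The problem is that the enumeration you defer as ``bookkeeping'' is the entire content of the proposition, and it does not close in the form you describe. The paper asserts there are exactly four possible relationships and checks that the four values of $\pi$ are distinct. Your list is broader: it includes the configuration in which a parent of $i$ is a grandparent of $j$ --- say $i$'s parents $u,u'$ are also the parents of $j$'s parent $v$, while $j$'s other parent $v'$ has its own pair of founder parents. One can check that this satisfies every clause of Definition~\ref{defn-diploid} ($u,u'$ are paired via their common children $i$ and $v$; $v,v'$ are paired via $j$ and have disjoint parent sets; founders pair with founders) and has depth $2$; for it $\pi = 2\cdot\tfrac12\cdot\tfrac14 = \tfrac14$, the same value as the shared-grandparents case, even though the two subpedigrees are not isomorphic. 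Since the whole distribution on $\Sigma^{\{i,j\}}$ is determined by $\pi$, no further statistic separates them, so your final step ``check that the resulting values determine the type'' fails unless this configuration is excluded --- e.g.\ by a graded or discrete-generations hypothesis that the paper's figure tacitly encodes but that neither the statement nor the diploid axioms supply. To complete your argument you must either prove such configurations impossible (which the axioms do not seem to give) or add the corresponding hypothesis; as written, the plan cannot be carried to a proof of the statement.
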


\begin{proof} 
  There are four possible ways in which any two vertices $i$ and $j$ are
  related in a diploid pedigree, which are shown in
  Figure~\ref{fig-grand}. For each of them, we give the probability that
  the sequences $A_i$ and $A_j$ match at any site $k$.  In the
  following, we set $\delta := \frac{1} {16|\Sigma|}$.

  \begin{figure}[ht]
    \begin{center}
      \includegraphics[width=118mm]{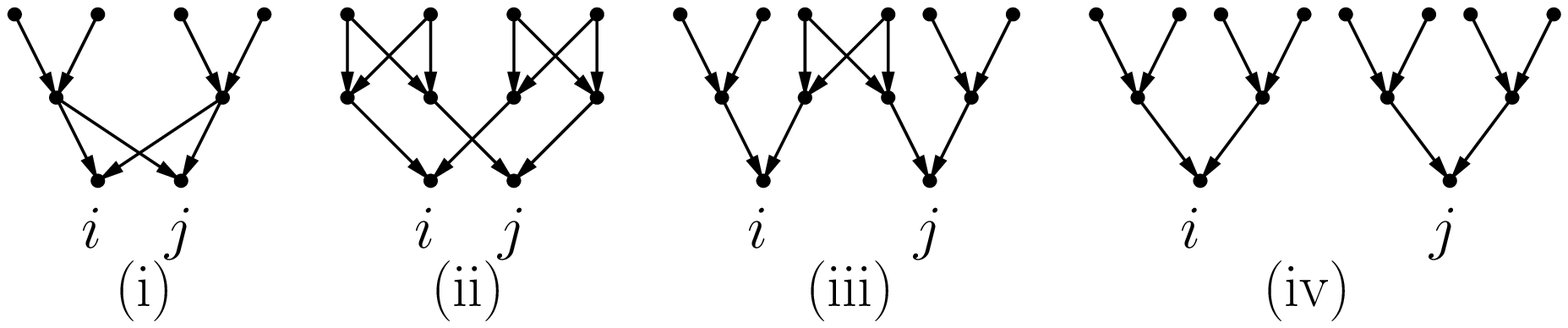}
    \end{center}
    \caption[]{Four ways in which $i$ and $j$ may be related}
    \label{fig-grand}
  \end{figure}

  \noindent {\em Case 1:} If $i$ and $j$ have the same parents, then
  $\pr\{A(i,k) = A(j,k)\} = 8\delta + (1/2)$.

  \noindent {\em Case 2:} If $i$ and $j$ have distinct pairs of parents
  but the same grand parents, then $\pr\{A(i,k) = A(j,k)\} = 12\delta +
  (1/4)$.

  \noindent {\em Case 3:} If $i$ and $j$ have distinct pairs of parents
  but exactly one pair of grand parents in common, then $\pr\{A(i,k) =
  A(j,k)\} = 14\delta + (1/8)$.

  \noindent {\em Case 4:} If $i$ and $j$ have no common parents or grand
  parents, then $\pr\{A(i,k) = A(j,k)\} = 16\delta$.

  Therefore, unless $|\Sigma| = 1$, the above cases are distinguished by
  the marginal joint distribution on $\Sigma^{\{i,j\}}$ under Model R.
\end{proof}

\begin{rem} The assumption that $P$ is a diploid pedigree is
  essential. It implies that $i$ and $j$ are not related as in the
  haploid pedigree $Q_1$ shown in Figure~\ref{fig-hap}. We can verify
  that $\pr\{A(i,k) = A(j,k)\} = 12\delta + (1/4)$ for both $P_1$ and
  $Q_1$, so they are indistinguishable by the above method, which was
  also pointed out as a consequence of Proposition~\ref{prop-ngt}.
\end{rem}

\begin{rem} The above probabilities do not depend on $p$, therefore,
  we have a slightly stronger identifiability statement: If diploid
  pedigrees $P$ and $Q$ of depth 2 and crossover probabilities $p$ and
  $p^\prime$ are such that $(\Sigma^{X} : P,R(p)) = (\Sigma^{X} :
  Q,R(p^\prime))$ then their subpedigrees of order 2 are correspondingly
  isomorphic.
\end{rem}

\begin{prop}\label{prop-n2-1} 
  When $|\Sigma| > 2$, pedigrees of depth 2 in which no two vertices
  have exactly one common parent are identifiable under model R. In
  particular, when $|\Sigma| > 2$, diploid pedigrees of depth 2 are
  identifiable under model R.
\end{prop}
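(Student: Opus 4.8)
The plan follows the two-part pattern of Steel and Chang: first read off from the alignment distribution enough ancestral statistics of small subsets of extant vertices, then reconstruct the pedigree from these data using the rigidity forced by the hypothesis on common parents. I would begin by recording the shape of a pedigree in the class. A depth-$2$ pedigree has its vertices split into founders $Y$, intermediate vertices $M$ (each with two parents in $Y$), and extant vertices $X$ (each with two parents in $Y\cup M$), with no other arcs; and the hypothesis that no two vertices have exactly one common parent is equivalent to saying that for all $u,v\in M\cup X$ the parent pairs $\mathrm{pa}(u)$ and $\mathrm{pa}(v)$ are either equal or disjoint. Hence ``having the same parent pair'' is an equivalence relation on $M\cup X$, distinct classes use disjoint sets of parents, and the joint ancestry of any two or three extant vertices falls into a short, explicit list of types.

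Next I would extract invariants. Under Model R the site pattern at a site, given the spanning forest at that site, depends only on which founder each extant vertex traces to, so by the computation behind Proposition~\ref{prop-ngt} the one-site pattern distribution on $X$ is a linear image of the numbers $N(\pi)$ counting spanning forests that send the extant vertices to founders according to the partition $\pi$ of $X$; when $|\Sigma|>2$ this linear map can be inverted on the partitions that matter for triples of extant vertices (this is where the hypothesis on $|\Sigma|$ is used, $|\Sigma|=2$ being genuinely insufficient), so the pairwise match probabilities $\pr\{A(i,k)=A(j,k)\mid P,R(p)\}$, the triple quantities, and (where a one-site tie must be broken) the two-site analogues are all computable from the observed distribution. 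For a fixed pair $i,j$ I then enumerate the finitely many joint-ancestry types permitted by the structure above — same parent pair, with the common pair being two founders, one founder and one intermediate vertex, or two intermediate vertices (the last case split further according to whether those two intermediate vertices themselves have the same parents); or disjoint parent pairs, sharing none, one, or two grandparent pairs — and compute the corresponding value of $\pr\{A(i,k)=A(j,k)\}$, extending the table in the proof of Proposition~\ref{prop-2seq}. Using $|\Sigma|>2$ one checks that the coarse types are separated by this single number and that the remaining fine distinctions are separated by the triple and two-site quantities, so the type of every pair is recovered.

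The gluing step then assembles $P$. The equivalence classes of ``same parent pair'' on $X$ are now known, as is the type of each class's shared parent pair; by cross-class disjointness the intermediate layer $M$ and all arcs into $X$ are determined up to the irrelevant labelling of $Y\cup M$, a class whose common pair consists of founders contributing two fresh founders and a class whose pair consists of intermediate vertices contributing those vertices together with their founder parents, identified between classes exactly as the pairwise data among the relevant extant descendants require. This recovers $P$ up to isomorphism. The diploid case follows since two vertices of a diploid pedigree with a single common parent would have to be a paired pair, contradicting the pairing axiom, so diploid depth-$2$ pedigrees lie in the class.

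I expect the pairwise classification to be the main obstacle. A single character cannot separate every local configuration: an extant vertex with two founder parents and an extant vertex whose two parents are intermediate vertices having the same founder grandparents induce identical one-site statistics on every subset of $X$, and likewise a founder parent and an ``unused'' intermediate vertex cannot be told apart at one site. Resolving such collisions requires the correlation between neighbouring sites, whose size is governed by $p$ through expressions such as $(1-p)^2+p/2$ versus $1-p$ that coincide only at the degenerate values $p\in\{0,\tfrac12\}$; so strictly one must either assume the crossover probability is non-degenerate or restrict to pedigrees with no redundant intermediate vertices. Carrying out the full finite case analysis and verifying that the resulting table of one-, two- and three-vertex invariants is injective precisely when $|\Sigma|>2$ is the substantive content of the argument.
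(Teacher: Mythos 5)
Your overall strategy --- classify the joint ancestry of each pair of extant vertices from single-site match probabilities, then glue the local pictures together --- is the same as the paper's, and your observation that the hypothesis forces parent pairs to be equal or disjoint is correct and implicitly used there too. But there is a genuine gap at the gluing step, and it sits exactly where the paper does its real work. Knowing, for every pair $i,j$ with distinct parent pairs, whether they share both, exactly one, or no pairs of grandparents gives you the adjacency relation ``$i\sim j$ iff $i$ and $j$ share exactly one grandparent pair''; assembling the pedigree from this relation is the problem of reconstructing a graph from its line graph, where each extant vertex plays the role of an edge whose two endpoints are its two grandparent pairs. By Whitney's theorem this reconstruction is unique except for four exceptional pairs, the first being $(K_{1,3},K_3)$: three extant vertices that are pairwise $\sim$ may all share one common grandparent pair, or may pairwise share three distinct grandparent pairs, and no amount of pairwise data distinguishes the two configurations. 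Your phrase ``identified between classes exactly as the pairwise data among the relevant extant descendants require'' assumes the pairwise data determine this identification, which is precisely what fails. The paper resolves the ambiguity by constructing the two pedigrees $P_2$ and $Q_2$ realising $K_{1,3}$ and $K_3$ and computing the single-site triple probability $\pr\{A(i,s)=A(j,s)=A(k,s)\}$ on each, namely $\frac{192}{512|\Sigma|}+\frac{320}{512|\Sigma|^2}$ versus $\frac{16}{512}+\frac{144}{512|\Sigma|}+\frac{352}{512|\Sigma|^2}$; these coincide exactly when $|\Sigma|=2$, and this computation is the only place the hypothesis $|\Sigma|>2$ enters. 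You gesture at ``triple quantities'' but never isolate the configuration they must separate nor verify that they do, so the substantive content of the proof is missing.

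A smaller point: the paper's argument for this proposition uses only single-site statistics, and the match probabilities of Proposition~\ref{prop-2seq} do not depend on $p$, so the two-site correlations and the worries about degenerate values of $p$ that you introduce are not needed here (dependence between neighbouring sites is invoked only in Proposition~\ref{prop-n2-2}, to handle $|\Sigma|=2$ for the specific pair $P_2,Q_2$, and there under an explicit smallness assumption on $p$). Your concern about extant vertices whose parents are founders versus redundant intermediate vertices with identical founder grandparents is a fair remark about the informality of the four-case list, but it is orthogonal to the missing line-graph step, which is where your argument would actually break down.
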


\begin{proof} There are only 4 ways in which any two extant vertices $i$
  and $j$ are related. They are illustrated in
  Figure~\ref{fig-grand}. Each of the possible relationships is
  recognised by Proposition~\ref{prop-2seq}. We denote the 3rd and the
  4th types of relationships by $i \sim j$ and $i \not\sim j$,
  respectively.

  Suppose that no two extant vertices $i, j$ in a pedigree are related
  to each other as $i \sim j$ or $i \not\sim j$. Then the pedigree is
  constructed by adding extant vertices one by one. On each step we add
  one extant vertex and join to previously added extant vertices as in
  Figure~\ref{fig-grand} - (i) or (ii), whichever is
  appropriate. Therefore, we assume that at least two extant vertices
  $i,j$ are related as $i \sim j$ or $i \not\sim j$.

  Let $Z$ be a (nonempty) maximal subset of $X$ such that for any two
  distinct extant vertices $i$ and $j$ in $Z$, either $i \sim j$ or $i
  \not\sim j$. Every other extant vertex $k$ not in $Z$ is related to
  some vertex in $Z$ as in Figure~\ref{fig-grand} - (i) or
  (ii). Therefore, once the subpedigree of $Z$ is constructed, there is
  only one way to extend it to the whole pedigree.

  To construct the subpedigree of $Z$, we first construct an edge
  labelled graph with edge set $Z$ in which edges $i$ and $j$ are
  incident if and only if $i \sim j$. This is a known problem in graph
  theory, namely, the problem of constructing an edge labelled graph
  from its line graph. It was proved \cite{whitney32} that there are
  only 4 pairs $(G_i,H_i)$ of connected nonisomorphic edge labelled
  graphs that have the same line graphs.  Edge labelled graphs that
  cannot be uniquely constructed from their line graphs must contain
  components isomorphic to $G_i$ or $H_i$. We refer to \cite{lovasz93}
  (in particular, Chapter 15, Problem 1) for discussion about
  reconstructing graphs from their line graphs, in particular, for the
  complete list of pairs $(G_i,H_i)$. The first pair is $(K_{1,3},K_3)$
  (with edges of each of them labelled $i,j,k$). Based on the example
  $(K_{1,3},K_3)$, we construct pedigrees shown in Figure~\ref{fig-k3},
  in which all pairs of extant vertices are similarly related.

  \begin{figure}[ht]
    \begin{center}
      \includegraphics[width=118mm]{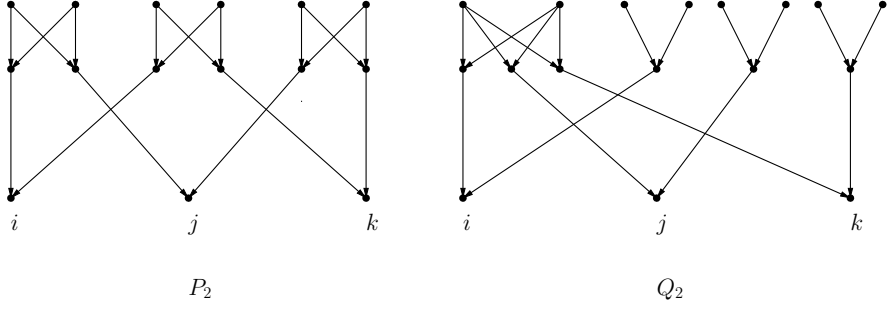}
    \end{center}
    \caption[]{Pedigrees indistinguishable by site pattern
      probabilities}
    \label{fig-k3}
  \end{figure}

  We distinguish the two pedigrees in Figure~\ref{fig-k3} by comparing
  the probabilities $\pr\{A(i,s) = A(j,s) = A(k,s) \cond P_2, R(p)\}$
  and $\pr\{A(i,s) = A(j,s) = A(k,s) \cond Q_2, R(p)\}$ for any site
  $s$.

  In $P_2$, there are 512 spanning forests. Among them there are 192
  spanning forests in which two extant vertices have a common
  grandparent, giving the first term on the RHS below. In the remaining
  320 spanning forests, no two extant vertices have a common
  grandparent, which explains the second term on the RHS
  below. Therefore,
  \[
  \pr\{A(i,s) = A(j,s) = A(k,s) \cond P_2, R(p)\} =
  \frac{192}{512|\Sigma|} + \frac{320}{512|\Sigma|^2}.
  \]
  
  In $Q_2$, there are 512 spanning forests. Among them there are 16
  spanning forests in which $i,j,k$ have a common grandparent (giving
  the first term on the RHS below), 144 spanning forests in which two
  extant vertices have a common grandparent at site $s$ (giving the
  second term), and 352 spanning forests in which $i,j,k$ have distinct
  grandparents at site $s$ (giving the third term).  Therefore,

  \[
  \pr\{A(i,s) = A(j,s) = A(k,s) \cond Q_2, R(p)\} = \frac{16}{512} +
  \frac{144}{512|\Sigma|} + \frac{352}{512|\Sigma|^2}.
  \]
  Whenever $|\Sigma| > 2$, $\pr\{A(i,s) = A(j,s) = A(k,s) \cond Q_2,
  R(p)\} > \pr\{A(i,s) = A(j,s) = A(k,s) \cond P_2, R(p)\}$, therefore,
  $P_2$ and $Q_2$ can be distinguished. The two expressions are equal
  when $|\Sigma| = 2$. Similarly, other pedigrees constructed from
  $(G_i,H_i), i = 2,3,4$ are distinguished when $|\Sigma| > 2$.
  
  Since there are only two types of site patterns for three sequences
  when $|\Sigma| = 2$ (either the three sequences agree at a site or
  exactly two of them agree), the two cases cannot be distinguished by
  considering other site pattern probabilities.
  
  In diploid pedigrees, no two vertices have exactly one common parent,
  therefore, when $|\Sigma| > 2$, they are identifiable under model R.
\end{proof}

We used only site pattern probabilities in the above proofs. But because
of recombinations, consecutive sites in an alignment are not
independent. We use the dependence between sites to eliminate the
restriction $|\Sigma| > 2$ when the crossover probability $p$ is
sufficiently small.

\begin{prop}\label{prop-n2-2}
  When $|\Sigma| = 2$ and $p$ is sufficiently small, haploid pedigrees
  $P_2$ and $Q_2$ (shown in Figure~\ref{fig-k3}) are distinguished under
  model $R(p)$.
\end{prop}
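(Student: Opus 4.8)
The plan is to exploit the dependence between two consecutive sites, since, as already observed, when $|\Sigma| = 2$ the single-site pattern distributions of $P_2$ and $Q_2$ coincide and so cannot help. I would work with alignments of length $L = 2$ and consider the event
\[
\mca := \{A \in \Sigma^{X2} : A(i,1) = A(j,1) = A(k,1) \ \text{and}\ A(i,2) = A(j,2) = A(k,2)\},
\]
and set $g_P(p) := \pr\{\mca \cond P, R(p)\}$. Using the description of Model $R$ as a hidden Markov model with hidden states $\mcg_P$, uniform initial distribution $1/2^e$ over the $2^e$ spanning forests, transition probabilities \eqref{eq-gi-gj}, and noting that conditionally on the two hidden states the two columns are independent (the founders receive i.i.d.\ sequences), one obtains
\[
g_P(p) = \frac{1}{2^e}\sum_{G, G' \in \mcg_P} p^{\,|E(G')\bigtriangleup E(G)|/2}\,(1-p)^{\,|E(G')\cap E(G)|}\, q(G)\, q(G'),
\]
where $q(G)$ is the probability that $A(i,s)=A(j,s)=A(k,s)$ at a single site governed by $G$. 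By the discussion preceding Proposition~\ref{prop-2seq}, $q(G) = |\Sigma|^{1-c(G)}$, with $c(G)$ the number of blocks of the partition of $\{i,j,k\}$ induced by $T_u(G)$. In particular $g_P$ is a polynomial in $p$.

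The crux is to evaluate $g_P$ at $p = 0$. When $p = 0$ there are no recombinations, so the hidden chain is constant: the whole alignment is governed by a single spanning forest $G$ drawn uniformly from $\mcg_P$ (each non-founder vertex independently copies one of its two parents verbatim), and conditionally on $G$ the two columns are i.i.d. Hence only the diagonal terms survive and
\[
g_P(0) = \frac{1}{2^e}\sum_{G \in \mcg_P} q(G)^2 = \frac{1}{2^e}\sum_{G \in \mcg_P} |\Sigma|^{2(1-c(G))}.
\]
Now I would substitute the spanning-forest counts already established in the proof of Proposition~\ref{prop-n2-1}: for $P_2$ one has $e = 9$ with $c(G) = 2$ for $192$ forests and $c(G) = 3$ for the remaining $320$; for $Q_2$, $c(G) = 1$ for $16$ forests, $c(G) = 2$ for $144$, and $c(G) = 3$ for $352$. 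Taking $|\Sigma| = 2$,
\[
g_{P_2}(0) = \frac{192\cdot 2^{-2} + 320\cdot 2^{-4}}{512} = \frac{68}{512}, \qquad g_{Q_2}(0) = \frac{16 + 144\cdot 2^{-2} + 352\cdot 2^{-4}}{512} = \frac{74}{512},
\]
so $g_{P_2}(0) \neq g_{Q_2}(0)$. Since $g_{P_2}$ and $g_{Q_2}$ are polynomials, hence continuous in $p$, there is $\varepsilon > 0$ with $g_{P_2}(p) \neq g_{Q_2}(p)$ for all $p \in [0,\varepsilon)$; for any such positive $p$ the event $\mca$ witnesses $(\Sigma^{X2}: P_2, R(p)) \neq (\Sigma^{X2}: Q_2, R(p))$, which gives the claim.

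I do not expect a serious obstacle here: once the HMM picture is set up, this is just the standard device of perturbing around the degenerate value $p = 0$, at which the two-site statistic degenerates to a finite mixture of product measures whose mixing weights — the distribution of $c(G)$ — have already been computed in Proposition~\ref{prop-n2-1}. The points that need care are the verification that $g_P$ is genuinely polynomial in $p$ (immediate from \eqref{eq-gi-gj}) and the $p=0$ description of the process (no recombination forces one common spanning forest, conditionally on which the columns are i.i.d.). One should also bear in mind that the method yields only \emph{some} sufficiently small $p$ and not all $p$; that is exactly what the proposition asserts, and pinning down the full set of $p$ for which $P_2$ and $Q_2$ are distinguished would require a global analysis of the polynomial $g_{P_2} - g_{Q_2}$, which I would not attempt here.
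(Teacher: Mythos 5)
Your proof is correct, and it takes a genuinely different route from the paper's. The paper works with long runs: it bounds $\pr\{A(i,m)=A(j,m)=A(k,m)\ \forall m\in[1,t+1]\}$ from above by $(1/2)^{t+1}$ for $P_2$ and from below by $\bigl((1-p)^6+p^3(1-p)^3\bigr)^t/32$ for $Q_2$ (the latter by restricting to histories where all three sites descend from a common grandparent throughout), and then lets $t$ grow with $p$ small so that the lower bound overtakes the upper bound. You instead use only two sites, write the exact two-site agreement probability as a quadratic form $\frac{1}{2^e}\sum_{G,G'}\pr\{G'\mid G\}\,q(G)q(G')$ in the hidden-state transition kernel, and perturb around the degenerate point $p=0$, where the kernel collapses to the identity and the statistic becomes $\frac{1}{2^e}\sum_G q(G)^2$ --- i.e., the second moment of $q(G)$ rather than the first moment used in Proposition~\ref{prop-n2-1}. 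Your arithmetic checks out ($68/512$ versus $74/512$, consistent with the forest counts $192/320$ and $16/144/352$ from Proposition~\ref{prop-n2-1}), the conditional independence of columns given the spanning-forest sequence is exactly right, and the continuity argument legitimately yields the ``sufficiently small $p$'' conclusion. The trade-off: your argument is shorter, exact, and uses the minimal alignment length $L=2$, but it requires the full distribution of the component count $c(G)$ over spanning forests (already available here from Proposition~\ref{prop-n2-1}); the paper's run-based argument needs only the cruder facts that $Q_2$ contains a forest joining $i,j,k$ while $P_2$ does not, which is the idea that generalises to Theorem~\ref{thm-paths}. Your second-moment device is the more natural one if one only wants to separate these two specific pedigrees.
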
 

\begin{proof} We compute the probability that, in an alignment $A$,
  there are long runs of sites at which all the three sequences $A_i$,
  $A_j$ and $A_k$ are equal. In particular, we compute bounds on
  $\pr\{A(i,m) = A(j,m) = A(k,m) \forall m \in [1,t+1]\}$ on the two
  pedigrees.

  For $P_2$, for any fixed $m$, if any two of the three sites $A(i,m),
  A(j,m),A(k,m)$ are inherited from the same grand parent, then
  $\pr\{A(i,m) = A(j,m) = A(k,m)\} = 1/2$, and if all of them are
  inherited from distinct grandparents, then $\pr\{A(i,m) = A(j,m) =
  A(k,m)\}$ is 1/4. Therefore,
  \[
  \pr\{A(i,m) = A(j,m) = A(k,m) \forall m \in [1,t+1] \cond P_2,R(p) \}
  \leq (1/2)^{t+1}.
  \]

  For $Q_2$, the probability that the first site of all sequences is
  inherited from a common grand parent is 1/32. The probability that at
  each successive site the three sequences have a common grand parent is
  $(1-p)^6+p^3(1-p)^3$. Therefore,
  \begin{eqnarray*}
    && \pr\{A(i,m) = A(j,m) = A(k,m) \forall m \in [1,t+1] \cond
    Q_2,R(p) \} \nonumber \\[2mm] &\geq&
    \frac{\left((1-p)^6+p^3(1-p)^3\right)^t}{32}.
  \end{eqnarray*}
  When $p$ is sufficiently small and $t$ is sufficiently large, the
  above probability for $Q_2$ is more than that for $P_2$.
\end{proof}

We do not analyse other examples of pedigrees based on graphs that are
not reconstructible from their line graphs (graphs $G_i, H_i, i = 2,3,4$
mentioned in Proposition~\ref{prop-n2-1}), but they may be analysed
similarly.  

\section{Reconstructing pedigrees under Model RM}
\label{sec-model-rm}

In this section we develop ideas from Section~\ref{sec-model-r}
(especially Proposition~\ref{prop-n2-2}) in much more
generality. Earlier we observed that parts of alignments that are free
of recombination may be analysed with phylogenetic methods. Since we do
not know where the recombinations have occurred in an alignment, we
choose long segments of carefully chosen alignments (or sets of
alignments) and show that they have higher probability of having evolved
on one particular $X$-forest (or a sequence of $X$-forests) than any
other $X$-forest (or a sequence of $X$-forests). For example, the method
of Proposition~\ref{prop-n2-2} works because $Q_2$ contains a tree in
which $i,j,k$ have a common ancestor and $P_2$ does not contain such a
tree. Therefore, a sufficiently long sequence of characters in which
sequences $A(i),A(j)$ and $A(k)$ are in the same state will more likely
have evolved on a pedigree such as $Q_2$ than on a pedigree that does
not contain such a tree. This argument may be generalised to count the
number of $X$-forests of each type from the distribution of extant
sequences. Such a generalisation requires identifiability results for
phylogenetic trees, which we state in a form suitable for our
application.

\subsection{Identifiability and consistency results for $X$-forests}
\label{sec-consistency}

In this section, we state known results on identifiability and
statistical consistency of maximum likelihood reconstruction of
phylogenetic trees.  We need to adapt them slightly since the
$X$-forests in a pedigree differ from phylogenetic trees in three
respects - they may have vertices of degree 2, they may be unresolved
(i.e., they may have vertices of degree more than 3), and they may be
disconnected (two extant vertices may not have a common ancestor in a
given directed $X$-forest in a pedigree). We address them in the
following identifiability result, which was originally proved for
phylogenetic trees in \cite{pt86} in the $|\Sigma| = 2$
case. Identifiability and statistical consistency of maximum likelihood
reconstruction of phylogenetic trees were independently proved in full
generality for all $|\Sigma| \geq 2$ in \cite{steel94,chang96}.

\begin{thm}
  \label{thm-identifiability}
  For all $\mu \in (0,1/|\Sigma|)$ and any two undirected $X$-forests
  $T_1$ and $T_2$ with bounded number of edges, if $\pr\{C \cond T_1,
  M(\mu) \} = \pr\{C \cond T_2,M(\mu) \}$ for all characters $C \in
  \Sigma^X$ then $T_1 \cong T_2$.
\end{thm}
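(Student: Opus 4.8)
The plan is to reduce the statement about $X$-forests (possibly disconnected, possibly with degree-2 vertices, possibly unresolved) to the classical identifiability theorem for phylogenetic $X$-trees as proved in \cite{steel94,chang96}. The key observation is that Model $M(\mu)$ acts \emph{componentwise}: the root of each component is assigned a uniform state independently, and the mutation process runs independently within each component. Hence the character distribution $\pr\{C \mid T, M(\mu)\}$ on $\Sigma^X$ factorises as a product over the components of $T$, where for each component with leaf set $X_i$ the corresponding factor is the marginal distribution on $\Sigma^{X_i}$ induced by that component under $M(\mu)$, and isolated vertices contribute a uniform factor on $\Sigma$.

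\textbf{Step 1: Recover the partition of $X$ into components.} From $\pr\{C \mid T, M(\mu)\}$ one first shows that the partition $\{X_1,\dots,X_k\}$ of $X$ into the leaf sets of the components of $T$ is determined. This is where the hypothesis $\mu < 1/|\Sigma|$ is used: for two leaves $a,b$ in the same component, the pairwise marginal on $\Sigma^{\{a,b\}}$ is not the uniform (product) distribution — the states are positively correlated, with the correlation a strictly monotone function of the path length between $a$ and $b$ that is $1$ at distance $0$ and tends to $1/|\Sigma|$ only in the limit — whereas for leaves in different components the pairwise marginal \emph{is} exactly uniform on $\Sigma^{\{a,b\}}$. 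So ``same component'' is an equivalence relation readable off the pairwise marginals, and the partition is determined. (Here one invokes that the edge parameters, being $(\lvert\Sigma\rvert-1)\mu < 1$, keep every transition matrix nonsingular and away from the uniform-mixing matrix.)

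\textbf{Step 2: Recover each component.} Having fixed the partition, the distribution restricted to each block $\Sigma^{X_i}$ is precisely the leaf distribution of a single connected undirected tree with leaf set $X_i$ under a symmetric mutation model with nonsingular transition matrices. Now apply the classical result: by \cite{steel94,chang96}, from this leaf distribution the tree is determined up to $X_i$-isomorphism. Two technical points need checking, but each is standard. First, degree-2 vertices: a degree-2 vertex merely composes two transition matrices into one (still nonsingular), so it is invisible to the distribution, and ``determined up to isomorphism'' must be read as ``after suppressing degree-2 vertices'' — but our $X$-forests in a pedigree context already carry these vertices canonically (they are the MRCAs and branch points forced by the construction), so in fact we recover the suppressed tree and the degree-2 vertices are pinned down by the construction in Definition~\ref{defn-subtree}. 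Second, unresolved (high-degree) vertices: the Chang–Steel argument via the log-determinant additive function and the Buneman/Zarecki\u{\i} reconstruction does not assume binary trees; a multifurcation is recovered as a zero-length edge configuration, and the resulting $X$-tree (allowing multifurcations) is still unique. Isolated vertices of $T$ form singleton blocks $X_i = \{x\}$ and are trivially matched.

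\textbf{Main obstacle.} The genuinely non-routine part is Step 1 — arguing that the component partition is identifiable — together with checking that the cited phylogenetic theorems, which are usually stated for resolved $X$-trees without degree-2 vertices, apply verbatim to the slightly enlarged class of objects here. I expect this to be dispatched by the monotonicity of pairwise correlation in $\mu$ and path length (valid precisely because $\mu \in (0,1/|\Sigma|)$ keeps us in the regime where the symmetric substitution matrix has second eigenvalue $1 - |\Sigma|\mu \in (0,1)$), plus the remark already made in the excerpt that Model $M$ on a directed $X$-forest and on its undirected $X$-forest induce the same site-pattern distribution, so there is no loss in working with undirected $X$-forests throughout. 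Once these two points are in place, putting Steps 1 and 2 together gives $\pr\{C \mid T_1, M(\mu)\} = \pr\{C \mid T_2, M(\mu)\}$ for all $C \Rightarrow$ same component partition $\Rightarrow$ componentwise isomorphism $\Rightarrow T_1 \cong T_2$.
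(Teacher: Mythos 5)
Your overall strategy is the same as the paper's: reduce to the Steel--Chang identifiability theorem, recover the partition of $X$ into components from the pairwise marginals (using that $\mu\in(0,1/|\Sigma|)$ keeps the two-leaf marginal of a bounded-length path bounded away from the uniform product distribution), and then treat unresolved vertices and degree-2 vertices as separate technical points. Steps~1 and the treatment of multifurcations are fine and match the paper's argument.

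There is, however, a genuine gap in your handling of degree-2 vertices. You assert that a degree-2 vertex ``merely composes two transition matrices into one'' and is therefore ``invisible to the distribution,'' and you then try to recover these vertices by appealing to the construction in Definition~\ref{defn-subtree}. That appeal does not work: the theorem is stated for \emph{arbitrary} undirected $X$-forests with a bounded number of edges, and the conclusion $T_1\cong T_2$ is isomorphism of the forests themselves, degree-2 vertices included. Two forests that differ only in the number of degree-2 vertices subdividing some leaf-to-branch-point path (say, for $X=\{a,b\}$, the paths $a\text{--}u\text{--}b$ and $a\text{--}u\text{--}v\text{--}b$) have the same suppressed tree, so your argument would fail to distinguish them, yet the theorem claims they are distinguished. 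Moreover, your premise is false for Model $M$: because every edge carries the \emph{same} substitution matrix $M_\mu$ (off-diagonal entries $\mu$, second eigenvalue $1-|\Sigma|\mu\in(0,1)$), a path of $k$ edges induces the transition matrix $M_\mu^{k}$, and $M_\mu^{k}\neq M_\mu^{j}$ for $k\neq j$. So degree-2 vertices are in fact visible to the distribution, and the correct argument --- the one the paper gives --- is that the composed substitution probability along any path between vertices of degree $\neq 2$ is determined by the character distribution and, since $\mu$ is fixed on each edge, determines the number of edges (hence the number of degree-2 vertices) on that path. Replacing your ``invisible, but pinned down by the construction'' step with this observation closes the gap; everything else in your proposal is consistent with the paper's proof.
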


\begin{proof}
  The result follows from the analogous results in
  \cite{steel94,chang96} for phylogenetic $X$-trees, but we have to
  clarify three issues: unlike the phylogenetic $X$-trees, the
  $X$-forests as defined in this paper may be disconnected, unresolved,
  and may have vertices of degree 2.

  {\em Connectivity:} Any two extant vertices $x_i$ and $x_j$ are in
  different components of $T_1$ and $T_2$ if and only if $\pr\{C(i) = a
  \mid C(j) = b \} = 1/|\Sigma|$ for all $a,b \in \Sigma$, where $C(i)$ is
  the state at the extant vertex $i$ in a character $C$. If $i$ and $j$
  are in the same component, then $\pr\{C(i) = a \mid C(j) = b \}$ cannot
  be arbitrarily close to $1/|\Sigma|$ if the number of edges (and hence
  the distance between $i$ and $j$) is bounded. Therefore, we can
  consider the identifiability question for each component separately.

  {\em Unresolved $X$-forests:} Given an unresolved phylogenetic tree,
  there are resolved phylogenetic trees with site pattern probabilities
  arbitrarily close to the site pattern probabilities for the unresolved
  tree. Therefore, even though unresolved phylogenetic $X$-trees are
  identifiable, statistical consistency of maximum likelihood methods
  requires that the substitution probabilities on the edges of a
  phylogenetic tree are bounded below by a positive real number. In our
  model, Corollary~\ref{cor-consistency} below is possible because the
  substitution probability $\mu$ is fixed on each edge.

  {\em Vertices of degree 2:} Let $u$ and $v$ be any two vertices of an
  $X$-forest. Suppose that $u$ and $v$ are of degree 1 or more than
  2. Suppose all internal vertices on the path between $u$ and $v$ have
  degree 2. Since $\mu$ is fixed for all arcs, the distance between $u$
  and $v$ on a tree is determined by the substitution probability on the
  $uv$ path. The substitution probability on the $uv$ path is determined
  by the distribution on the space of characters.
\end{proof}

\begin{rem} In the above result, if $T_1$ and $T_2$ were directed
  $X$-forests, then we would be able to conclude that $T_u(T_1) \cong
  T_u(T_2)$.
\end{rem}

Let $N := |\Sigma|^{|X|}$. Suppose that $\Sigma^X := \{C_i, i \in
N\}$. We associate with each undirected $X$-forest $T$ a vector
$\mathbf{p}(T, \mu) := (p_1,p_2, \ldots, p_N)$ in $\rr_{+}^N$, where
$p_i := \pr\{C_i \cond T, M(\mu) \}$. Therefore, the condition $\pr\{C
\cond T_1, M(\mu) \} = \pr\{C \cond T_2, M(\mu) \}$ for all characters
$C \in \Sigma^X$ may be equivalently written as $\mathbf{p}(T_1, \mu) =
\mathbf{p}(T_2,\mu)$.

\nomenclature{$\mathbf{p}(T, \mu):= (p_1,p_2, \ldots,p_N)$}{defined as
  $p_i := \pr\{C_i \cond T,M(\mu) \}$}

Given $r \in \rr_{+}$ and a point $\mathbf{s} \in \rr^N$, let the open
ball of radius $r$ centred at $\mathbf{s}$ be denoted by
$\rho(\mathbf{s}, r)$. \nomenclature{$\rho(\mathbf{s}, r)$}{a ball of
  radius $r$ centred at $\mathbf{s}$} Here the radius may be taken to be
in the 1-norm (i.e., the distance between points $\mathbf{x} :=
(x_1,x_2, \ldots, x_N)$ and $\mathbf{y} := (y_1,y_2, \ldots, y_N)$ is
defined by $d(\mathbf{x},\mathbf{y}) = \sum_i^N |x_i - y_i|$).
\nomenclature{$d(\mathbf{x},\mathbf{y})$}{1-norm distance between
  $\mathbf{x}$ and $\mathbf{y}$} Let $A$ be an alignment on $X$. We
define a vector $\mathbf{f}(A) := (f_1, f_2, \ldots, f_N)$,
\nomenclature{$\mathbf{f}(A) := (f_1, f_2, \ldots, f_N)$}{vector of
  fractional site pattern frequencies in an alignment $A$} where $f_i$
are the {\em fractional site pattern frequencies}, i.e., $f_i$ is the
fraction of columns of $A$ of type $C_i$. Then the above identifiability
result implies a statistical consistency result for maximum
likelihood. It informally says that as the length of a random alignment
$A$ goes to infinity, we expect $\mathbf{f}(A)$ to be arbitrarily close
to $\mathbf{p}(T, \mu)$ with probability approaching 1 if $T$ is the
true $X$-forest, and that the probability that $\mathbf{f}(A)$ is
arbitrarily close to $\mathbf{p}(T, \mu)$ approaches 0 if $T$ is not the
true $X$-forest.

\begin{cor}
  \label{cor-consistency}
  For all $r_0 \in \rr_+$, $\epsilon \in (0,1)$ and $\mu \in
  (0,1/|\Sigma|)$, there exists $L := L(r_0, \epsilon, \mu) \in \nn$
  such that for any two undirected $X$-forests $T$ and $T^{\prime}$ such
  that $T \not \cong T^\prime$, and an alignment $A \in \Sigma^{XL}$,
  \[
  \pr\{\mathbf{f}(A) \in \rho(\mathbf{p}(T,\mu), r_0)  \cond  T, M(\mu)
  \} > 1 - \epsilon
  \]
  and
  \[
  \pr\{\mathbf{f}(A) \in \rho(\mathbf{p}(T,\mu), r_0)  \cond 
  T^{\prime}, M(\mu) \} < \epsilon.
  \]
\end{cor}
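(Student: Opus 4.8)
The plan is to derive this from the identifiability result (Theorem~\ref{thm-identifiability}) together with a routine concentration (law of large numbers) argument, being careful about the fact that under Model $M(\mu)$ the sites of an alignment on an $X$-forest \emph{are} i.i.d.\ (unlike on a pedigree), so the classical consistency machinery applies directly. First I would observe that for a fixed undirected $X$-forest $T$, the columns $C_1, C_2, \ldots, C_L$ of a random alignment $A \in \Sigma^{XL}$ drawn under $M(\mu)$ are independent and each is distributed as the site-pattern distribution $\mathbf{p}(T,\mu)$; hence $\ee[f_i \mid T, M(\mu)] = p_i(T,\mu)$ for each $i \in [N]$, where $f_i$ is the fractional frequency of pattern $C_i$. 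By the weak law of large numbers (or a Chernoff/Hoeffding bound, since $f_i$ is an average of i.i.d.\ indicator variables), for any $r_0 > 0$ and $\epsilon \in (0,1)$ there is $L_1(r_0,\epsilon,\mu)$ such that $L \ge L_1$ forces $\pr\{ d(\mathbf{f}(A), \mathbf{p}(T,\mu)) < r_0 \mid T, M(\mu)\} > 1 - \epsilon$; this gives the first displayed inequality (and, since there are only finitely many $X$-forests with a bounded number of edges, $L_1$ can be taken uniform over all relevant $T$).

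For the second inequality, I would use that $T \not\cong T'$, so by Theorem~\ref{thm-identifiability} we have $\mathbf{p}(T,\mu) \ne \mathbf{p}(T',\mu)$; set $\eta := d(\mathbf{p}(T,\mu), \mathbf{p}(T',\mu)) > 0$. If $r_0 < \eta/2$, then the triangle inequality shows that $\mathbf{f}(A) \in \rho(\mathbf{p}(T,\mu), r_0)$ implies $d(\mathbf{f}(A), \mathbf{p}(T',\mu)) > r_0$, i.e.\ $\mathbf{f}(A)$ is \emph{far} from its own mean when $A$ is generated by $T'$; another application of the law of large numbers gives $L_2(r_0,\epsilon,\mu,T,T')$ beyond which this has probability $< \epsilon$. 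The only wrinkle is that the statement fixes $r_0$ first and allows it to be arbitrarily large, whereas the argument above wants $r_0 < \eta/2$. This is handled by the standard observation that the conclusion for a given $r_0$ is implied by the conclusion for any smaller $r_0'$ (shrinking the ball only weakens the first inequality's event and... no --- one must be slightly careful): more precisely, if $r_0 \ge \eta/2$ I would instead note that $\mathbf{p}(T,\mu)$ and $\mathbf{p}(T',\mu)$ both lie in the probability simplex, replace the ball condition by intersecting with a fixed small ball, and argue that $\mathbf{f}(A)$ concentrates near $\mathbf{p}(T',\mu)$, which lies outside $\rho(\mathbf{p}(T,\mu), r_0)$ \emph{only} if $\eta > r_0$. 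Since this genuinely fails for huge $r_0$, the honest reading is that the intended regime is $r_0$ small, or that one additionally quantifies over the finite set of pairs $(T,T')$ and takes $r_0$ below $\tfrac12 \min \eta$; I would state the result with that minor clarification, taking $r_0 < \tfrac12 \min\{ d(\mathbf{p}(T,\mu),\mathbf{p}(T',\mu)) : T \not\cong T',\ e(T), e(T') \le \text{bound}\}$, a positive quantity by Theorem~\ref{thm-identifiability} and finiteness.

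The main obstacle is thus not analytic depth but bookkeeping: making $L$ uniform over the finite family of bounded-size $X$-forests and isolating the separation constant $\eta$ correctly from Theorem~\ref{thm-identifiability}. Concretely, I would (i) fix the finite set $\mathcal{F}$ of isomorphism classes of undirected $X$-forests with at most the given number of edges; (ii) set $\eta_0 := \min\{ d(\mathbf{p}(T,\mu), \mathbf{p}(T',\mu)) : T,T' \in \mathcal{F},\ T \not\cong T'\} > 0$; (iii) assume (or reduce to) $r_0 < \eta_0/2$; (iv) invoke a Hoeffding bound coordinatewise and union over the $N$ coordinates to get an explicit $L(r_0,\epsilon,\mu) = O\!\big(r_0^{-2}\log(N/\epsilon)\big)$ that works simultaneously for both displayed inequalities and for every pair in $\mathcal{F}$; (v) conclude. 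I expect step (ii)'s appeal to identifiability and finiteness to be the substantive point, with everything else being the standard ``consistency of the empirical distribution'' argument.
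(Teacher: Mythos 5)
Your proposal is correct and essentially matches the paper's treatment: the first inequality is obtained there by a coordinatewise Bernstein bound plus a union bound over the $|\Sigma|^{|X|}$ site patterns (Lemma~\ref{lem-eps-l1}), and the second by fixing $r_0$ at half the minimum distance between the vectors $\mathbf{p}(T,\mu)$ over non-isomorphic bounded-size $X$-forests (Equation~(\ref{eq-eps0})), which is exactly your $\eta_0/2$ reduction via Theorem~\ref{thm-identifiability} and finiteness. Your caveat that the statement cannot hold verbatim for arbitrarily large $r_0$ is well taken, and the paper implicitly resolves it by adopting that same separating choice of $r_0$ immediately after the corollary.
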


We give bounds on the above probabilities in terms of $L$, which we
prove using Bernstein's inequality.

\begin{lemma}[Bernstein's inequality]\label{lem-bernstein1}
  Let $X, X_1, X_2, \ldots$ be i.i.d. Bernoulli random variables with
  $\pr\{X = 1\} = p$. Then for all $r \geq 0$ and $n \in \zz_+$,
  \[
  \pr\left\{\frac{\sum_{i=1}^n X_i}{n} - p \geq r \right\} \leq
  \exp\left\{\frac{-nr^2}{2p(1-p) + 2r/3}\right\},
  \]
  and (equivalently)
  \[
  \pr\left\{\frac{\sum_{i=1}^n X_i}{n} - p \leq -r \right\} \leq
  \exp\left\{\frac{-nr^2}{2p(1-p) + 2r/3}\right\}.
  \]
\end{lemma}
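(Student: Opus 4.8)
The plan is to prove Bernstein's inequality (Lemma~\ref{lem-bernstein1}) by the standard exponential moment (Chernoff) method, tracking the moment generating function of a centred Bernoulli random variable carefully enough to extract the stated denominator $2p(1-p) + 2r/3$. First I would set $Y_i := X_i - p$, so that the $Y_i$ are i.i.d., mean zero, bounded by $1$ in absolute value, with variance $\sigma^2 := p(1-p)$. For any $\lambda > 0$, Markov's inequality applied to $\exp\{\lambda \sum_{i=1}^n Y_i\}$ gives
\[
\pr\left\{\frac{1}{n}\sum_{i=1}^n X_i - p \geq r\right\}
= \pr\left\{\sum_{i=1}^n Y_i \geq nr\right\}
\leq e^{-\lambda n r}\,\bigl(\ee\, e^{\lambda Y_1}\bigr)^n .
\]

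Next I would bound $\ee\, e^{\lambda Y_1}$. Using $e^x \leq 1 + x + \tfrac{x^2}{2}\sum_{k\geq 0}\tfrac{x^k}{3^{k-1}\cdot\text{(something)}}$ — more precisely, the classical estimate that for $|Y_1|\leq 1$ one has $\ee\, e^{\lambda Y_1} \leq \exp\{\sigma^2 g(\lambda)\}$ where $g(\lambda) = e^{\lambda} - 1 - \lambda$, or alternatively the cruder but sufficient bound $\ee\, e^{\lambda Y_1} \leq \exp\{\tfrac{\lambda^2 \sigma^2/2}{1 - \lambda/3}\}$ valid for $0 < \lambda < 3$. The latter follows from expanding $\ee\, e^{\lambda Y_1} = 1 + \sum_{k\geq 2}\tfrac{\lambda^k}{k!}\ee Y_1^k$, bounding $\ee Y_1^k \leq \ee Y_1^2 = \sigma^2$ for $k\geq 2$ (since $|Y_1|\leq 1$), and summing the resulting geometric-type series against $k! \geq 2\cdot 3^{k-2}$, then using $1 + x \leq e^x$. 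Substituting this into the Chernoff bound yields
\[
\pr\left\{\frac{1}{n}\sum_{i=1}^n X_i - p \geq r\right\}
\leq \exp\left\{-\lambda n r + \frac{n\lambda^2 \sigma^2/2}{1 - \lambda/3}\right\}.
\]

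Finally I would optimise in $\lambda$. Rather than solving exactly, I would plug in the near-optimal choice $\lambda = \dfrac{r}{\sigma^2 + r/3}$ (which lies in $(0,3)$ for all $r \geq 0$ since $\sigma^2 \geq 0$), and verify by direct algebra that the exponent collapses to exactly $-\dfrac{nr^2}{2\sigma^2 + 2r/3} = -\dfrac{nr^2}{2p(1-p) + 2r/3}$, giving the first inequality. The second inequality is immediate by applying the first to the variables $1 - X_i$, which are i.i.d.\ Bernoulli with parameter $1-p$, noting that $p(1-p)$ is symmetric under $p \mapsto 1-p$. The edge case $r = 0$ (where both sides are trivially $\leq 1$) and $p \in \{0,1\}$ (where $\sigma^2 = 0$ and the bound still makes sense as $\exp\{-nr^2/(2r/3)\} = \exp\{-3nr/2\}$) should be checked separately but cause no difficulty.

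The only mildly delicate step is the moment generating function bound: one must be careful that the inequality $\ee Y_1^k \leq \sigma^2$ holds for \emph{all} $k \geq 2$ (it does, because $|Y_1| \leq 1$ forces $|Y_1|^k \leq Y_1^2$) and that the series manipulation producing the factor $1/(1-\lambda/3)$ is valid only for $\lambda < 3$, which must be reconciled with the chosen optimiser. I expect this bookkeeping — matching the constant $2/3$ in the denominator exactly rather than up to a universal constant — to be the main obstacle; everything else is the routine Chernoff recipe.
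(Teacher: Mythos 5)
Your proof is correct. Note, however, that the paper does not prove this lemma at all: it is quoted as the classical Bernstein inequality (for bounded, mean-zero summands with $|Y_i|\le 1$, specialised to centred Bernoullis), so there is no in-paper argument to compare against. Your derivation is the standard one: centre the variables, apply the Chernoff/Markov exponential-moment bound, control $\ee\, e^{\lambda Y_1}$ via $\ee Y_1^k\le \sigma^2$ for $k\ge 2$ together with $k!\ge 2\cdot 3^{k-2}$ to get the factor $1/(1-\lambda/3)$, and substitute $\lambda = r/(\sigma^2+r/3)$; the algebra does collapse the exponent to exactly $-nr^2/(2p(1-p)+2r/3)$, and the reflection $X_i\mapsto 1-X_i$ handles the lower tail since $p(1-p)$ is symmetric. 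Two small points you already flag but should state cleanly in a final write-up: when $r=0$ or $p\in\{0,1\}$ the chosen $\lambda$ sits on the boundary of $(0,3)$ (or is $0$), but in those cases the inequality is trivial or the event has probability $0$; and the first, garbled attempt at the MGF bound in your second paragraph should be deleted in favour of the precise series argument that follows it.
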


\begin{lemma}\label{lem-eps-l1}
  Let $r_0 \in \rr_+$ and $\mu \in (0,1/|\Sigma|)$. Let $A \in
  \Sigma^{XL}$. Let $T$ be an undirected $X$-forest. Then
  \[
  \pr\{\mathbf{f}(A) \not \in \rho(\mathbf{p}(T,\mu), r_0)  \cond  T,
  M(\mu) \} \leq 2 |\Sigma|^{|X|} \exp \left \{\frac{-Lr_0^2}
    {\frac{|\Sigma|^{2|X|}}{2} + \frac{2r_0 |\Sigma|^{|X|}}{3}}
  \right \}.
  \]
\end{lemma}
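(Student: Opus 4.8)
The plan is to bound the probability that the empirical site pattern frequency vector $\mathbf{f}(A)$ escapes the ball $\rho(\mathbf{p}(T,\mu), r_0)$ by a union bound over coordinates, and then apply Bernstein's inequality (Lemma~\ref{lem-bernstein1}) to each coordinate. First I would observe that under Model $M(\mu)$ on a fixed undirected $X$-forest $T$, the columns $C_1, C_2, \ldots, C_L$ of a random alignment $A$ are i.i.d.\ draws from the distribution on $\Sigma^X$ with probability vector $\mathbf{p}(T,\mu)$; this is because the mutation model acts independently at each site. Hence, for each site pattern $C_i$, the indicator that column $k$ equals $C_i$ is a Bernoulli$(p_i)$ random variable, and these indicators are i.i.d.\ across $k \in [L]$, so $f_i = \frac{1}{L}\sum_{k=1}^L \mathbf{1}[\text{column } k = C_i]$ is an average of $L$ i.i.d.\ Bernoulli$(p_i)$ variables with mean $p_i$.

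Next I would reduce the event $\{\mathbf{f}(A) \notin \rho(\mathbf{p}(T,\mu), r_0)\}$, which by definition of the $1$-norm is $\{\sum_{i=1}^N |f_i - p_i| \geq r_0\}$, to a union of single-coordinate large-deviation events. Since $N = |\Sigma|^{|X|}$, if $\sum_i |f_i - p_i| \geq r_0$ then $|f_i - p_i| \geq r_0/N = r_0/|\Sigma|^{|X|}$ for at least one $i$. Therefore
\[
\pr\{\mathbf{f}(A) \notin \rho(\mathbf{p}(T,\mu), r_0) \cond T, M(\mu)\}
\leq \sum_{i=1}^{N} \pr\left\{|f_i - p_i| \geq \frac{r_0}{N}\right\}.
\]
For each $i$, splitting $|f_i - p_i| \geq r_0/N$ into the two one-sided events and applying both halves of Lemma~\ref{lem-bernstein1} with $n = L$, $r = r_0/N$, and $p = p_i$, I get a bound $2\exp\{-L(r_0/N)^2 / (2p_i(1-p_i) + 2(r_0/N)/3)\}$ for each coordinate.

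The final step is to make the bound uniform in $i$ by weakening the denominator: $p_i(1-p_i) \leq 1/4$ always, so $2p_i(1-p_i) \leq 1/2$, and plugging $r_0/N = r_0/|\Sigma|^{|X|}$ into $2r/3$ gives $2r_0/(3|\Sigma|^{|X|})$. Multiplying numerator and denominator of the exponent by $|\Sigma|^{2|X|} = N^2$ turns $L(r_0/N)^2 = Lr_0^2/|\Sigma|^{2|X|}$ into $Lr_0^2$ in the numerator and $1/2$ into $|\Sigma|^{2|X|}/2$, $2r_0/(3|\Sigma|^{|X|})$ into $2r_0|\Sigma|^{|X|}/3$ in the denominator. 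Summing the $N = |\Sigma|^{|X|}$ per-coordinate bounds, each of the form $2\exp\{-Lr_0^2/(\tfrac{|\Sigma|^{2|X|}}{2} + \tfrac{2r_0|\Sigma|^{|X|}}{3})\}$ (after the uniform weakening), yields exactly the claimed $2|\Sigma|^{|X|}\exp\{\cdots\}$. I do not anticipate a genuine obstacle here; the only thing requiring a little care is justifying that the columns are i.i.d.\ under Model $M(\mu)$ (so that Bernstein applies with genuine independence) and making sure the direction of the inequality is preserved when weakening $2p_i(1-p_i)$ up to $1/2$ — weakening the denominator upward weakens the exponential bound upward, which is the direction we want.
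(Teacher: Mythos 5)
Your proof is correct and follows essentially the same route as the paper's: a union bound over the $N=|\Sigma|^{|X|}$ site patterns, the observation that $\sum_i|f_i-p_i|\geq r_0$ forces $|f_i-p_i|\geq r_0/N$ for some $i$, and Bernstein's inequality applied to each coordinate. You merely spell out the step the paper leaves implicit, namely the uniform weakening $2p_i(1-p_i)\leq 1/2$ that makes the exponent independent of $i$.
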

\begin{proof} There are $|\Sigma|^{|X|}$ distinct characters, with
  probabilities $p_i := \pr\{C_i \cond T,M(\mu) \}$, $i = 1\;\text{to}\;
  |\Sigma|^{|X|}$. If $\mathbf{f}(A) \not \in \rho(\mathbf{p}(T,\mu),
  r_0)$, then $|f_i - p_i| \geq r_0/|\Sigma|^{|X|}$ for some
  $i$. Therefore, we apply Bernstein's inequality to each distinct
  character, and write the probability that $|f_i - p_i| \geq
  r_0/|\Sigma|^{|X|}$. We then apply the union bound to get the result.
\end{proof}

If we set
\begin{equation}\label{eq-eps0}
  r_0 = \frac{\min\{d(\mathbf{p}(T_i, \mu), \mathbf{p}(T_j, \mu)):
    T_i,T_j \in \mcu, T_i \not \cong T_j\}}{2},
\end{equation}
it will ensure that the $X$-forests are {\em separated} by open balls of
radius $r_0$ in the space of site pattern probability vectors,
i.e., the open balls $\rho(\mathbf{p}(T_i,\mu), r_0)$ and
$\rho(\mathbf{p}(T_j,\mu), r_0)$ are non-intersecting whenever
$T_i$ and $T_j$ are non-isomorphic. We will use this value of
$r_0$ unless specified otherwise.

Now for an undirected $X$-forest $T_i$, we define
\begin{equation} \label{eq-mcai} \mca_{i} := \mca(T_i,r_0,L) :=
  \{A \in \Sigma^{XL} : \mathbf{f}(A) \in
  \rho(\mathbf{p}(T_i,\mu),r_0)\},
\end{equation}
\nomenclature{$\mca(T_i,r_0,L)$}{the set of alignments of length $L$
  whose fractional site pattern frequencies are within a radius $r_0$
  from $\mathbf{p}(T_i)$} and
\begin{equation}\label{eq-eti}
\epsilon_i := \epsilon(T_i) := 1 - \pr\{\mca_{i} \cond T_i, M(\mu) \}.
\end{equation}
By selecting a sufficiently large value of $L$ we can make $\epsilon_i$
arbitrarily small as in Lemma~\ref{lem-eps-l1}. Moreover, if $ T_i \not
\cong T_j$, then
\begin{equation}\label{eq-p-mcai}
  \epsilon_{ij} := \pr\{\mca_{i} \cond T_j, M(\mu) \} \leq 1 - \pr\{\mca_j
  \cond T_j, M(\mu) \})= \epsilon_j,
\end{equation}
hence $\pr\{\mca_{i} \cond T_j, M(\mu) \}$ can be made arbitrarily small
as per Lemma~\ref{lem-eps-l1}. We set $\epsilon_{max} = \max_i
(\epsilon(T_i))$, which depends on $L$ and $r_0$.

In Theorem~\ref{thm-paths} (particularly in the proof of
inequality~\ref{eq-term1a}) we require a concentration inequality
similar to the inequality in Lemma~\ref{lem-eps-l1} for the situation in
which $L$ sites of an alignment have evolved on an $X$-forest $T_i$ and
$cL$ sites (for a small $c \in (0,1)$) have evolved on another
$X$-forest $T_j$. (We will keep the notation simple by assuming that
$cL$ is an integer.) Therefore, we give the following variants of
Lemmas~\ref{lem-bernstein1} and ~\ref{lem-eps-l1}.

\begin{lemma}\label{lem-bernstein2} Let $X, X_1, X_2, \ldots $ be
  i.i.d. Bernoulli random variables with $\pr\{X=1\} = p$. Let $Y_1,
  Y_2, \ldots$ be Bernoulli random variables. Let $S_n := \sum_{i=1}^n
  X_i + \sum_{i=1}^{cn} Y_i$, where $c$ is a positive constant. Let $r
  \geq 0$ and $m := \max \{0,p-r\}$ and $M := \min \{1,p+r\}$. If
  $r^\prime := p-m(1+c) \geq 0$ and $r^{\prime\prime} := M(1+c)-(p+c)
  \geq 0$, then
  \[
  \pr\left\{\left\vert \frac{S_n}{n(1+c)} - p \right\vert \geq r \right\}
  \leq \exp\left\{\frac{-n(r^\prime)^2}{2p(1-p) + 2r^\prime/3 } \right\}
  + \exp\left\{\frac{-n(r^{\prime\prime})^2}{2p(1-p) + 2
      r^{\prime\prime}/3 } \right\}.
  \]
\end{lemma}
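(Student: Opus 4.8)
The plan is to reduce the lemma to two applications of the ordinary Bernstein inequality (Lemma~\ref{lem-bernstein1}) by sandwiching $S_n$ between two quantities that depend only on the i.i.d.\ part $\sum_{i=1}^{n}X_i$. Since each $Y_i$ takes values in $\{0,1\}$, we have $0\le\sum_{i=1}^{cn}Y_i\le cn$, so that
\[
\sum_{i=1}^{n}X_i \;\le\; S_n \;\le\; \sum_{i=1}^{n}X_i + cn .
\]
Writing $\bar X:=\frac1n\sum_{i=1}^{n}X_i$ and dividing by $n(1+c)$, this pins $S_n/(n(1+c))$ between $\bar X/(1+c)$ and $(\bar X+c)/(1+c)$, so any deviation of $S_n/(n(1+c))$ from $p$ is forced by a corresponding deviation of $\bar X$ from $p$.

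Next I would treat the two tails separately. For the lower tail, $\{S_n/(n(1+c))\le p-r\}\subseteq\{S_n/(n(1+c))\le m\}$ since $m=\max\{0,p-r\}\ge p-r$; combined with $S_n\ge\sum_{i=1}^{n}X_i$ this gives $\bar X\le m(1+c)$, i.e.\ $\bar X-p\le -r'$ with $r'=p-m(1+c)$. For the upper tail, $\{S_n/(n(1+c))\ge p+r\}\subseteq\{S_n/(n(1+c))\ge M\}$ since $M=\min\{1,p+r\}\le p+r$; combined with $S_n\le\sum_{i=1}^{n}X_i+cn$ this gives $\bar X\ge M(1+c)-c$, i.e.\ $\bar X-p\ge r''$ with $r''=M(1+c)-(p+c)$. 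The hypotheses $r',r''\ge 0$ are exactly what is needed for Lemma~\ref{lem-bernstein1} (its lower-tail form with parameter $r'$, its upper-tail form with parameter $r''$) to apply to $\bar X$ and produce the two exponential terms; a union bound over the two tails then yields the stated inequality.

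The one step I would treat with care is the clipping of the thresholds to $m$ and $M$. Because $S_n/(n(1+c))$ always lies in $[0,1]$, the raw thresholds $p\pm r$ may fall outside $[0,1]$, where Bernstein is inapplicable or vacuous; replacing them by $m$ and $M$ keeps the thresholds in range, which is precisely why the conditions $r',r''\ge 0$ are the natural ones and the resulting bound is non-trivial. Beyond checking that the two set inclusions above point in the right direction (they do, from $m\ge p-r$ and $M\le p+r$) and that the elementary algebra converting the $m,M$-events into tail events for $\bar X$ is carried out consistently, there is no real obstacle: the argument is routine once the sandwiching is in place.
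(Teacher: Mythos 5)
Your proof is correct and follows essentially the same route as the paper: sandwich $S_n$ between $\sum_{i=1}^n X_i$ and $\sum_{i=1}^n X_i + cn$, clip the thresholds to $m$ and $M$, translate the two tail events into deviations of $\bar X$ by $r'$ and $r''$, and finish with Bernstein's inequality plus a union bound. The additional remark on why the clipping to $[0,1]$ is the natural normalization is sound but not needed for the argument.
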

\begin{proof}
  Since $\sum_{i=1}^n X_i \leq S_n \leq cn+\sum_{i=1}^n X_i$, we have
  \begin{eqnarray*}
    && \left\vert \frac{S_n}{n(1+c)} - p \right\vert \geq r \\[1mm]
    & \implies & \left(\frac{S_n}{n(1+c)} - p \leq -r\right)
    \;\text{or}\; \left(\frac{S_n}{n(1+c)} - p \geq r\right)
    \\[1mm]
    & \implies & \left(\frac{\sum_{i=1}^n X_i}{n(1+c)} - p \leq -r\right)
    \;\text{or}\; \left(\frac{cn + \sum_{i=1}^n X_i}{n(1+c)} - p \geq
      r\right) \\[1mm]
    & \implies & \left(\frac{\sum_{i=1}^n X_i}{n(1+c)} \leq p-r \leq m \right)
    \;\text{or}\; \left(\frac{cn + \sum_{i=1}^n X_i}{n(1+c)} \geq p+r \geq M
    \right) \\[1mm]
    & \implies & \left(\frac{\sum_{i=1}^n X_i}{n(1+c)}-\frac{p}{1+c} \leq
      m-\frac{p}{1+c}\right)
    \text{or}\; \left(\frac{cn + \sum_{i=1}^n X_i}{n(1+c)} -
      \frac{p+c}{1+c} \geq  M-\frac{p+c}{1+c} \right) \\[1mm]
    & \implies & \left(\frac{\sum_{i=1}^n X_i}{n}-p \leq -r^\prime \right)
    \;\text{or}\; 
    \left(\frac{\sum_{i=1}^n X_i}{n}-p \geq r^{\prime\prime}\right).
  \end{eqnarray*}
  Now we apply Bernstein's inequality (Lemma~\ref{lem-bernstein2}) to
  each term and obtain the desired bound.
\end{proof}

Let $A$ be an alignment of length $L(1+c)$. Suppose that $L$ characters
of $A$ evolved on an undirected $X$-forest $T$ and the remaining
characters evolved on undirected $X$-forests $T_1, T_2, \ldots,
T_{cL}$. The following lemma states that if $c$ is sufficiently small,
then $\mathbf{f}(A)$ is concentrated near $\mathbf{p}(T,\mu)$ for large
$L$.  Moreover, as in Lemma~\ref{lem-eps-l1}, if we require
$\mathbf{f}(A)$ to be sufficiently near $\mathbf{p}(T,\mu)$ with
probability at least $1 - \epsilon_{max}$, then the length of the
alignment $L(1+c)$ must be $\Omega(\log (1/\epsilon_{max}))$. In the
following lemma, we do not specify the constants $c, c_i, r_i^\prime$
and $r_i^{\prime\prime}$ precisely, but they can be chosen depending on
$r_0$.

\begin{lemma}\label{lem-eps-l2}
  Let $r_0 \in \rr_+$ and $\mu \in (0,1/|\Sigma|)$. Let $A \in
  \Sigma^{XL(1+c)}$ for a suitably chosen positive constant $c$. Let $T,
  T_1, T_2, \ldots, T_{cL}$ be undirected $X$-forest. Then
  \begin{eqnarray*}
    && \pr\{\mathbf{f}(A) \not \in \rho(\mathbf{p}(T,\mu), r_0) \cond
    T^L,T_1,T_2,\ldots,T_{cL}, M(\mu) \} \\[1mm]
    &\leq&
    \sum_i^{|\Sigma|^{|X|}}\left(\exp \left\{\frac{-L(r_i^\prime)^2}{2p_i(1-p_i)
          + 2r_i^\prime/3}\right\} + \exp
      \left\{\frac{-L(r_i^{\prime\prime})^2} 
        {2p_i(1-p_i) + 2r_i^{\prime\prime}/3}\right\}\right),
  \end{eqnarray*}
  where $r_i^\prime$ and $r_i^{\prime\prime}$ are positive constants as
  in Lemma~\ref{lem-bernstein2}.
\end{lemma}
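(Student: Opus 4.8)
The plan is to mimic the proof of Lemma~\ref{lem-eps-l1}, replacing the use of Bernstein's inequality (Lemma~\ref{lem-bernstein1}) with its two-sided variant (Lemma~\ref{lem-bernstein2}). Write $N := |\Sigma|^{|X|}$, enumerate the distinct characters as $C_1, \ldots, C_N$, and set $p_i := \pr\{C_i \cond T, M(\mu)\}$. Exactly as in Lemma~\ref{lem-eps-l1}, if $\mathbf{f}(A) \not\in \rho(\mathbf{p}(T,\mu), r_0)$ then $|f_i - p_i| \geq r_0/N$ for at least one $i$, so by the union bound it suffices to bound $\pr\{|f_i - p_i| \geq r_0/N\}$ for each fixed $i$ and then sum over $i$.

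Fix $i$. Given the forests on which the columns evolved, the $L(1+c)$ columns of $A$ are independent. Let $X_1, \ldots, X_L$ be the indicators that the $L$ columns evolving on $T$ equal $C_i$; these are i.i.d.\ Bernoulli$(p_i)$. Let $Y_1, \ldots, Y_{cL}$ be the indicators that the remaining columns equal $C_i$; these are Bernoulli variables (with parameters $\pr\{C_i \cond T_k, M(\mu)\}$), and all we shall use about them is that they take values in $[0,1]$. The number of columns of $A$ of type $C_i$ is then $S_L := \sum_{j=1}^L X_j + \sum_{j=1}^{cL} Y_j$ and $f_i = S_L/(L(1+c))$, so Lemma~\ref{lem-bernstein2} applies verbatim with $n = L$, $p = p_i$, $r = r_0/N$, yielding
\[
\pr\{|f_i - p_i| \geq r_0/N\} \leq \exp\left\{\frac{-L(r_i^\prime)^2}{2p_i(1-p_i) + 2r_i^\prime/3}\right\} + \exp\left\{\frac{-L(r_i^{\prime\prime})^2}{2p_i(1-p_i) + 2r_i^{\prime\prime}/3}\right\},
\]
where $r_i^\prime, r_i^{\prime\prime}$ are the constants of Lemma~\ref{lem-bernstein2} for $p = p_i$, $r = r_0/N$. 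Summing over $i = 1, \ldots, N$ gives the asserted bound.

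The only point requiring care is the choice of the constant $c$: Lemma~\ref{lem-bernstein2} requires $r_i^\prime = p_i - m_i(1+c) \geq 0$ and $r_i^{\prime\prime} = M_i(1+c) - (p_i + c) \geq 0$ for every $i$, where $m_i := \max\{0, p_i - r_0/N\}$ and $M_i := \min\{1, p_i + r_0/N\}$. At $c = 0$ one computes $r_i^\prime = \min\{p_i, r_0/N\}$ and $r_i^{\prime\prime} = \min\{1-p_i, r_0/N\}$, which are strictly positive because $0 < p_i < 1$ (the site pattern probabilities on an $X$-forest with a bounded number of edges are bounded away from $0$ and $1$ when $\mu \in (0,1/|\Sigma|)$ and $|\Sigma| \geq 2$); moreover each of these expressions is continuous and non-increasing in $c$ near $c = 0$. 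Hence there is $c = c(r_0) > 0$ such that $r_i^\prime, r_i^{\prime\prime} > 0$ for all $i$ simultaneously, which is the ``suitably chosen positive constant $c$'' of the statement. Together with the observation that the deterministic sandwiching $\sum_j X_j \leq S_L \leq cL + \sum_j X_j$ renders the joint distribution of the $Y_j$ (and their dependence on the $X_j$) irrelevant, this is the only ingredient beyond the bookkeeping already carried out in Lemma~\ref{lem-eps-l1}, and I do not expect any genuine obstacle.
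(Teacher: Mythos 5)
Your proof is correct and follows essentially the same route as the paper's: apply Lemma~\ref{lem-bernstein2} to each component of $\mathbf{f}(A)$ with $r = r_0/|\Sigma|^{|X|}$ and take a union bound, choosing $c$ small enough that all the $r_i^\prime, r_i^{\prime\prime}$ stay positive. Your verification that this choice of $c$ is possible (via the values at $c=0$ and continuity) is actually more explicit than the paper's one-line remark, and your observation that the sandwiching makes the joint law of the $Y_j$ irrelevant is exactly the point the paper relies on implicitly.
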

\begin{proof} We apply Lemma~\ref{lem-bernstein2} for each component of
  $\mathbf{f}(A)$ and use the union bound as in the proof of
  Lemma~\ref{lem-eps-l1}. For each component, we use $r :=
  r_0/|\Sigma|^{|X|}$ as before. Constants $r_i^\prime$ and
  $r_i^{\prime\prime}$ (and $c_i$, which are implicit) depend on $r_0$
  and the probabilities $p_i := \pr\{C_i \cond T,M(\mu) \}$, $i =
  1\;\text{to}\; |\Sigma|^{|X|}$. The constant $c$ may be taken to be
  the smallest among $c_i, i = 1\;\text{to}\; |\Sigma|^{|X|}$.
\end{proof}

\subsection{Identifying HMMs: a sketch of the ideas used to prove
  Theorem~\ref{thm-paths}}\label{sec-sketch}

A hidden Markov model (HMM) is defined by two sequences $\{X_n\}_{n \geq
  1}$ and $\{Y_n\}_{n \geq 1}$ of random variables. The sequence
$\{X_n\}_{n \geq 1}$ takes values in $[r]$ and is a stationary Markov
chain with transition matrix $A$ and initial distribution $\pi(i), i =
1$ to $r$, which is also the stationary distribution of the Markov
chain. The random variables $\{Y_n\}_{n \geq 1}$ take values in $[k]$,
and are independent and identically distributed conditional on
$\{X_n\}_{n \geq 1}$. The distribution of $Y_n$ depends only on
$X_n$. Let $B$ be the $r \times k$ matrix of conditional probabilities
$\pr\{Y_n = j \mid X_n = i \}$, where $i \in [r]$ and $j \in [k]$. The
sequence $\{Y_n\}_{n \geq 1}$ are the {\em observations}. Identifiable
hidden Markov models were characterised in \cite{petrie69}, where a
precise description of conditions on $A$ and $B$ for which the
probability distribution on observed sequences determines $A$ and $B$
(up to re-labelling of hidden states) was given. Here {\em
  identifiability up to a re-labelling of hidden states} means the
following: If $S$ is an $r\times r$ permutation matrix, then the HMM
with parameters $(A,B,\pi)$ (where $\pi$ is treated as a column vector
of length $r$) is equivalent to (induces the same distribution on the
space sequences of observed states as) the HMM with parameters
$(S^{-1}AS,S^{-1}B,S^{-1}\pi)$. Therefore, identifiability only means
computing the matrices and the initial distribution up to
equivalence. We denote the class of models equivalent to $(A,B,\pi)$ by
$\lVert (A,B,\pi)\lVert$.

Earlier we noted that Models R and RM for sequences evolving on a
pedigree $P(X)$ define a hidden Markov model with the spanning forests
in $P$ as hidden states and characters from $\Sigma^X$ as observed
states. We call it $HMM(P,p,\mu)$ and denote its matrices by $A(P,p)$
and $B(P,\mu)$. The initial distribution on hidden states is uniform:
each spanning forest has the probability $1/2^e$ if the pedigree has
$2e$ arcs. We informally look at some of the issues about its
identifiability.

The transition matrix $A(P,p)$ is defined by transition probabilities
given in Equation~(\ref{eq-gi-gj}). Therefore, $A(P,p)$ will be
identical (up to a permutation of rows and columns) for all pedigrees
with the same number of arcs, for a fixed $p$. But the set of spanning
forests (hidden states) is unknown.

We now describe at a high level how we compute the rows of
$B(P,\mu)$. Suppose the pedigree contains an undirected $X$-forest
$T_i$. There are $n(G > T_i:P)$ spanning forests $G$ of $P$ that contain
$T_i$ as the unique undirected $X$-forest, and corresponding to each of
them we have a row of $B(P,\mu)$ that is equal to
$\mathbf{p}(T_i,\mu)$. Now consider a set $\mca_{i} := \mca(T_i,r_0,L)$
of sufficiently long alignments as defined in Equation~\ref{eq-mcai}. We
compute the probability of $\mca_{i}$ (i.e., the probability that a
random alignment is in $\mca_{i}$). Suppose that $P_0$ is the
probability that there are no recombination events. Thus $P_0$
approaches 1 as $p$ approaches 0. Then one of the terms in the
expression for the probability of $\mca_{i}$ will be $n(G > T_i:P) P_0
(1-\epsilon_i)/2^e$, where $n(G > T_i:P)/2^e$ is the probability that
the first site evolved on a spanning forest $G$ that contained $T_i$ as
the undirected $X$-forest. There will be terms for contributions from
other undirected $X$-forests $T_j\not \cong T_i$, but they will be much
smaller than the above term because they will contain factors
$\epsilon_j$ (as in Equation~\ref{eq-p-mcai}). There will also be terms
to account for recombinations among the first $L$ sites, but they will
be small as well since they will contain $p$ as a factor (in contrast to
$P_0$, which is a power of $(1-p)$). So let us say $\pr\{\mca_{i} \cond
P,RM(p,\mu))$ is $n(G > T_i:P) P_0 (1-\epsilon_i)/2^e +
(\mathtt{terms\;\; of\;\; smaller\;\; order})$. Therefore if $p$ is
sufficiently small and $L$ is sufficiently large, then $\pr\{\mca_{i}
\cond P,RM(p,\mu))$ will be roughly equal to the dominating term $n(G >
T_i:P) P_0 (1-\epsilon_i)/2^e$, which will uniquely determine $n(G >
T_i:P)$. In other words, if $Q$ is another pedigree such that $n(G >
T_i:P) \neq n(G > T_i:Q)$, then $\pr\{\mca_{i} \cond P,RM(p,\mu))$ and
$\pr\{\mca_{i} \cond Q,RM(p,\mu))$ will differ roughly by a multiple of
$P_0 (1-\epsilon_i)/2^e$.

In the proof of Proposition~\ref{prop-n2-2}, we used a similar idea: the
pedigree $Q_2$ contains a certain subtree $T$ in which $i,j,k$ have a
common ancestor, while the pedigree $P_2$ does not such a subtree. As a
result, the alignments that are close to $\mathbf{p}(T,\mu)$ are more
likely to have evolved on $Q_2$ than on $P_2$.

Suppose now that $B(P,\mu)$ is identified and each of its rows is
labelled by the corresponding unlabelled undirected $X$-forest. That is,
the matrices $B(P,\mu)$ that appear among the triples in the equivalence
class $\lVert (A,B,\pi)\lVert$ of HMMs are constructed. As pointed out
above, the matrix $A(P,p)$ and the initial distribution are also known
up to relabelling of hidden states. But the equivalence class $\lVert
(A,B,\pi)\lVert$ is not known unless we are able to label the rows and
the columns of $A(P,p)$ by unlabelled undirected $X$-forests in a manner
consistent with the labelling of rows of $B(P,\mu)$. In other words, for
full identifiability of $HMM(P,p,\mu)$, we would like to construct an
automaton with transition probabilities given by $A(P,p)$ and with its
states labelled by unlabelled undirected $X$-forests. Identifying the
pedigree from the {\em labelled automaton} will then be a purely
combinatorial problem.

In this paper we do not succeed in constructing matrix $A(P,p)$ with
rows and columns labelled by undirected $X$-forests, but we are able to
count certain types of {\em walks} (to be described next) on the
automaton with vertices labelled by undirected $X$-forests.  Suppose
that $T_1, T_2, \ldots, T_m$ is a sequence of undirected $X$-forests
such that no two consecutive ones are isomorphic. Analogous to $n(G >
T:P)$, we define $n(\mathbf{G} > \mathbf{T}:P) $ as the number of
sequences $G_1, G_2, \ldots, G_m$ of spanning forests in $P$ such that
$G_i > T_i$, where consecutive $G_i$ are separated by just one
recombination.  (A single recombination at a site is more likely than
multiple recombinations. Moreover, if there is a recombination at a site
$i$, but the two spanning forests $G_i$ and $G_{i+1}$ contain isomorphic
undirected $X$-forests, then such a recombination has no effect on the
emitted characters. These are the reasons why we consider the sequences
$T_i$ and $G_i$ as above.) We then consider a set $\mca$ of alignments
of length $mL$ (for a suitably large $L$) obtained by concatenating
alignments from $\mca_{i}$ for $i = 1$ to $m$. We compute the
probability of $\mca$ (as we described for $n(G>T)$ above), and show
that the dominating term is proportional to $n(\mathbf{G} >
\mathbf{T}:P) $, and other terms are of smaller order of magnitude for
small $p$. This allows us to compute $n(\mathbf{G} > \mathbf{T}:P)$.

A more visual description of the walks may be given as follows. Suppose
the pedigree has $2e$ arcs. We define a graph on the vertex set
consisting of the spanning forests of the pedigree, with two spanning
forests $G_i$ and $G_j$ being adjacent if there is exactly one
recombination separating them (i.e., $|E(G_{i})\bigtriangleup E(G_{j})|
= 2$). The graph is a hypercube. The hidden Markov chain on the set of
spanning forests jumps on the vertices of the cube. If there is at most
one recombination at any site (which is more likely than more than 1
recombinations at a site), then we have a walk on the edges of the
cube. We label each vertex $G_i$ of the cube by the undirected
$X$-forests $T_u(G_i)$. Our interest is to construct this object for a
more complete understanding of the HMM. But problem is made difficult by
the fact that the emission probabilities associated with $G_i$ and $G_j$
are identical if $T_u(G_i) \cong T_u(G_j)$. Therefore, we construct a
weaker object, namely the number of walks of each length on the cube
such that consecutive vertices have distinct labels.

\subsection{The main results}\label{sec-main}
\begin{defn}
Let $P$ be a pedigree. For $\mathbf{T} := (T_1,T_2,\ldots, T_m) \in
\mcu_P^m$, we define
\begin{eqnarray*}
  && n(\mathbf{G} > \mathbf{T}:P) \\[2mm]
  &:=& n(G_1 > T_1, G_2 > T_2, \ldots, G_m > T_m:P) \\[2mm]
  &:=& |\{\mathbf{G} \in \mcg_P^m : T_u(G_i) \cong T_i \forall
  i \in [m] \wedge |E(G_{i+1})\bigtriangleup E(G_{i})| = 2 \; \forall\;
  i \in [m-1]\}|,
\end{eqnarray*}
where the second condition in the last line says that there is exactly
one recombination event between consecutive $G_i$.
\end{defn} 

\nomenclature{$n(\mathbf{G} > \mathbf{T}:P)$}{number of sequences
  $\mathbf{G}$ of spanning forests in $P$ for which $T_u(G_i) \cong T_i$
  for all $G_i$ in $\mathbf{G}$ and consecutive $G_i$ are separated by
  exactly 1 recombination}

In the rest of this section, we show how invariants $n(\mathbf{G} >
\mathbf{T}:P)$ may be computed from the probability distribution on the
space of alignments under Model RM. In the end, we demonstrate an
application to pedigrees $P_1$ and $Q_1$ shown in Figure~\ref{fig-hap}.

\begin{lemma}\label{lem-pg}
  Let $P$ be a pedigree with $e(P) = 2e$ arcs. Let $\mathbf{G} := (G_1,
  G_2,\ldots, G_m)$ be a sequence of spanning forests in $P$. Then under
  model $R(p)$, the probability that $\mathbf{G}$ is a sequence of
  site-specific spanning forests is given by
  \[
  \pr\{\mathbf{G}  \cond  P, R(p) \} =
  \frac{(1-p)^{s(\mathbf{G})}p^{r(\mathbf{G})}}{2^{e}},
  \]
  where $r(\mathbf{G})$ and $s(\mathbf{G})$ are as in
  Definition~\ref{def-rg-sg}.
\end{lemma}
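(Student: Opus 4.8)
The plan is to read off the result from the hidden-Markov structure already set up in Section~\ref{sec-models}. Under Model $R(p)$ the site-specific spanning forests $G_1,G_2,\ldots$ form a time-homogeneous Markov chain on $\mcg_P$ whose transition probabilities are given by Equation~(\ref{eq-gi-gj}) and whose initial distribution is uniform, each of the $2^e$ spanning forests (Proposition~\ref{prop-spanning-forests}) having probability $2^{-e}$. Granting this, the joint probability factors along the chain,
\[
\pr\{\mathbf{G}\cond P,R(p)\} = \pr\{G_1\cond P,R(p)\}\prod_{i=1}^{m-1}\pr\{G_{i+1}\mid G_i\},
\]
and substituting $\pr\{G_1\cond P,R(p)\}=2^{-e}$ together with $\pr\{G_{i+1}\mid G_i\}=p^{|E(G_{i+1})\bigtriangleup E(G_i)|/2}(1-p)^{|E(G_{i+1})\cap E(G_i)|}$ gives
\[
\pr\{\mathbf{G}\cond P,R(p)\} = \frac{1}{2^e}\prod_{i=1}^{m-1}p^{|E(G_{i+1})\bigtriangleup E(G_i)|/2}(1-p)^{|E(G_{i+1})\cap E(G_i)|}.
\]
Collecting the exponents and recognising $\sum_{i=1}^{m-1}|E(G_{i+1})\bigtriangleup E(G_i)|/2 = r(\mathbf{G})$ and $\sum_{i=1}^{m-1}|E(G_{i+1})\cap E(G_i)| = s(\mathbf{G})$ from Definition~\ref{def-rg-sg} yields the claimed formula.

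If one prefers a self-contained argument rather than citing Equation~(\ref{eq-gi-gj}), the only step to write out is the Markov factorisation from the definition of Model $R(p)$. For each of the $e$ non-founder vertices $u$ (with parents $v,w$), the rule "from which parent is site $k$ of $A(u)$ copied" is an independent two-state Markov chain on $\{v,w\}$ with switching probability $p$ and uniform start; a spanning forest is precisely a choice of one incoming arc at each of these $e$ vertices, so $(G_i)_{i}$ is the product of these $e$ independent chains. Independence across vertices immediately gives $\pr\{G_1=G\}=2^{-e}$ and the product form of $\pr\{G_{i+1}\mid G_i\}$: vertex $u$ contributes a factor $1-p$ when $G_i$ and $G_{i+1}$ use the same arc at $u$ and a factor $p$ when they use different arcs. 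Counting these, the number of vertices at which the incoming arc differs is $|E(G_{i+1})\bigtriangleup E(G_i)|/2$ (such a vertex contributes one arc to $E(G_i)\setminus E(G_{i+1})$ and one to $E(G_{i+1})\setminus E(G_i)$) and the number at which it agrees is $|E(G_{i+1})\cap E(G_i)|$, which reproduces Equation~(\ref{eq-gi-gj}).

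I do not expect a real obstacle: the content is bookkeeping. The only point to check is that the two combinatorial counts partition the $e$ non-founder vertices, i.e.\ $|E(G_{i+1})\cap E(G_i)| + |E(G_{i+1})\bigtriangleup E(G_i)|/2 = e$ for every $i$, which holds because each of $G_i$ and $G_{i+1}$ contains exactly one arc into each of the $e$ non-founder vertices. Everything else is the Markov factorisation together with the definitions in Definition~\ref{def-rg-sg}.
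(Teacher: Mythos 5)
Your proof is correct and follows essentially the same route as the paper's: the paper's one-line argument is exactly the Markov factorisation with a factor $p$ per recombination, $(1-p)$ per point of no recombination, and $1/2^e$ for the initial forest. Your additional verification that the two counts partition the $e$ non-founder vertices is a harmless (and correct) elaboration of the same bookkeeping.
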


\begin{proof}
  We have a factor $p$ for each recombination event and $(1-p)$ whenever
  there is no recombination (i.e., an arc is contained in two
  consecutive spanning forests in the sequence). The probability that
  the first spanning forest is $G_1$ is $1/2^e$.
\end{proof}

\subsection*{Notation} 

Let $A \in \Sigma^{XL}$ be an alignment of length $L$. For an interval
$[l_1,l_2] \subseteq [L]$, we write $A[l_1,l_2]$ for the part of the
alignment between columns $l_1$ and $l_2$ (inclusive of columns $l_1$
and $l_2$). For a sequence of alignments $A_i \in \Sigma^{Xl_i}, i \in
[m]$, let $A:=A_1:A_2:\ldots:A_m$ denote the alignment obtained by
concatenating alignments $A_i, i = 1,2, \ldots, m$ (in that order). Let
$\mca_{i} \subseteq \Sigma^{Xl_i}, i \in [m]$ be sets of alignments. We
define
\[
\mca := \mca_{1}:\mca_{2}:\ldots:\mca_m := \{A_1:A_2:\ldots:A_m \mid
A_i \in \mca_{i}, i \in [m]\}.
\]

\nomenclature{$A:=A_1:A_2:\ldots:A_m$}{an alignment obtained by
  concatenating alignments $A_i, i = 1,2,\ldots,m$} \nomenclature{$\mca :=
  \mca_{1}:\mca_{2}:\ldots:\mca_m$}{a set of alignments obtained by
  concatenating alignments from sets $\mca_{i},  i = 1,2,\ldots,m $}

\begin{lemma}\label{lem-pa}
  Let $P$ be a pedigree with $e(P) = 2e$ arcs. The probability of an
  alignment $A \in \Sigma^{XL}$ on $P$ is given by
  \begin{eqnarray}\label{eq-pa}
    &&\pr\{A \cond  P, RM(p,\mu)\}   \nonumber \\[1mm]
    &=& \sum_{k=1}^{L}\; \sum_{\mathbf{G} \in \mcg_P^k}
    \frac{(1-p)^{e(L-k)+s(\mathbf{G})}p^{r(\mathbf{G})}}{2^{e}} 
    \sum_{\substack{\mathbf{l} \in \zz_+^k : \\
        \sum\limits_{i=1}^kl_i = L}}\;
    \prod_{i=1}^k\pr\{A[L_{i-1}+1,L_i] \cond T_u(G_i),M(\mu) \},   \nonumber \\
    &&
  \end{eqnarray}
  where $L_0 := 0$, $L_i := L_{i-1} + l_i$ and $\mathbf{l} := (l_1, l_2,
  \ldots, l_k) \in \zz_+^k$, and the second summation is over
  $\mathbf{G}$ such that consecutive spanning forests $G_{i}$ and
  $G_{i+1}$ are unequal for $i \in [k-1]$.
\end{lemma}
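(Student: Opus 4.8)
The plan is to compute $\pr\{A \mid P, RM(p,\mu)\}$ by conditioning on the underlying sequence of site-specific spanning forests $(F_1, F_2, \ldots, F_L)$ (one spanning forest per site), and then grouping these $L$-term sequences according to their pattern of ``maximal constant blocks''. First I would write
\[
\pr\{A \mid P, RM(p,\mu)\} = \sum_{(F_1,\ldots,F_L) \in \mcg_P^L}
\pr\{(F_1,\ldots,F_L) \mid P, R(p)\} \cdot \pr\{A \mid (F_1,\ldots,F_L), M(\mu)\}.
\]
For the first factor, Lemma~\ref{lem-pg} (applied with $m = L$) gives $(1-p)^{s}p^{r}/2^e$ where $s$ and $r$ count points of no recombination and recombinations along the length-$L$ sequence. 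For the second factor, conditional on the spanning forests the sites evolve independently under Model~M, and the character at site $j$ depends only on $T_u(F_j)$ (as noted after Equation~(\ref{eq-gi-gj})), so $\pr\{A \mid (F_1,\ldots,F_L), M(\mu)\} = \prod_{j=1}^L \pr\{A(\cdot,j) \mid T_u(F_j), M(\mu)\}$, where $A(\cdot,j)$ denotes the $j$-th column.

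Next I would reorganise the sum over $(F_1,\ldots,F_L)$. Each such sequence breaks into maximal runs on which the spanning forest is constant: say there are $k$ runs, the distinct consecutive values are $G_1, G_2, \ldots, G_k$ (so $G_i \ne G_{i+1}$, i.e. $\mathbf{G} \in \mcg_P^k$ with consecutive terms unequal), and the run lengths are $l_1, \ldots, l_k \in \zz_+$ with $\sum_i l_i = L$. The map $(F_1,\ldots,F_L) \mapsto (\mathbf{G}, \mathbf{l})$ is a bijection onto such pairs, so I can replace the single sum by $\sum_{k=1}^L \sum_{\mathbf{G} \in \mcg_P^k} \sum_{\mathbf{l}}$. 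Under this reparametrisation: the recombination exponent is exactly $r(\mathbf{G})$ (one recombination event at each of the $k-1$ block boundaries, consistent with Definition~\ref{def-rg-sg}); the no-recombination exponent counts, within each block $i$, the $l_i - 1$ site-boundaries times $e(G_i) = e$ arcs each, contributing $e\sum_i(l_i-1) = e(L-k)$, plus the $s(\mathbf{G})$ term from the $k-1$ between-block transitions — giving total exponent $e(L-k) + s(\mathbf{G})$ on $(1-p)$. The column-product factors block by block, and within block $i$ all $l_i$ columns see the same undirected $X$-forest $T_u(G_i)$, yielding $\pr\{A[L_{i-1}+1, L_i] \mid T_u(G_i), M(\mu)\}$ with $L_i$ as defined. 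Assembling these pieces gives exactly Equation~(\ref{eq-pa}).

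The main obstacle — really the only nontrivial point — is the bookkeeping of the $(1-p)$ exponent: one must carefully separate the ``within-block'' no-recombination events (there are $e$ of them at each of the $L-k$ internal site-boundaries of blocks, since every non-founder vertex's incoming arc is unchanged) from the ``between-block'' contributions captured by $s(\mathbf{G})$, and check that these together with the $r(\mathbf{G})$ recombinations account for all $e(L-1)$ site-to-site arc comparisons, i.e. $e(L-k) + s(\mathbf{G}) + r(\mathbf{G}) = e(L-1)$. This identity follows because between consecutive blocks $G_i, G_{i+1}$ we have $s$-contribution $|E(G_{i+1}) \cap E(G_i)|$ and $r$-contribution $|E(G_{i+1}) \triangle E(G_i)|/2$, and since each spanning forest has $e$ arcs, $|E(G_{i+1}) \cap E(G_i)| + |E(G_{i+1}) \triangle E(G_i)|/2 = e$; summing over the $k-1$ boundaries gives $s(\mathbf{G}) + r(\mathbf{G}) = e(k-1)$, and adding $e(L-k)$ gives $e(L-1)$ as required. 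Everything else is substitution and regrouping of finite sums.
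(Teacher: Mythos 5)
Your proposal is correct and takes essentially the same route as the paper's proof: both condition on the length-$L$ site-specific spanning forest sequence, group these sequences by their maximal constant runs into pairs $(\mathbf{G},\mathbf{l})$, apply Lemma~\ref{lem-pg} for the forest-sequence probability, and factor the conditional alignment probability block by block through $T_u(G_i)$. Your explicit bookkeeping of the $(1-p)$ exponent (the $e(L-k)$ within-block term plus $s(\mathbf{G})$, with the check $e(L-k)+s(\mathbf{G})+r(\mathbf{G})=e(L-1)$) is a more detailed spelling-out of what the paper leaves implicit, and it is accurate.
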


\begin{proof} The probability of an alignment of length $L$ is obtained
  by summing its probability over all spanning subgraph sequences of
  length $L$. Suppose that the recombinations in a spanning subgraph
  sequence occur only at sites $L_i:= L_{i-1} + l_i $, for $i =
  1,2,\ldots, k-1$, where $L_0 = 0$. We write the spanning forest
  sequence of length $L$ as $(G_i^{l_i}, i = 1,2,\ldots, k)$. Then the
  probability of the alignment is written as a product of probabilities
  of its segments that have evolved on spanning forests $G_i$ (i.e.,
  effectively on $T_u(G_i)$). This probability is summed over
  $\mathbf{l} \in \zz_+^k$ (with the constraint $\sum_i l_i = L$), $k
  \in [L]$ and $\mathbf{G} \in \mcg_P^k$. For a fixed $\mathbf{G}\in
  \mcg_P^k$, the probability of $(G_i^{l_i}, i = 1,2,\ldots, k)$ is
  given by Lemma~\ref{lem-pg}.
\end{proof}

For $\mca \subseteq \Sigma^{XL}$, we will compute $\pr\{\mca  \cond  P,
RM(p,\mu)\}$ by summing (\ref{eq-pa}) over $A \in \mca$. In such a
calculation, we will sometimes use the following upper bound in
evaluating the last summation in (\ref{eq-pa}) for fixed values of $k$,
and fixed $\mathbf{l}$ and $\mathbf{G}$.

\begin{lemma} \label{lem-pA-ub2} Let $\mca \subseteq \Sigma^{XL}$. Let
  $0 = L_0 < L_1 < \ldots < L_k = L$. Let $\mca_{i} :=
  \{A[L_{i-1}+1,L_i] : A \in \mca\}, i \in [k]$. Then for any fixed
  $\mathbf{G} \in \mcg_P^L$
  \begin{equation*}\label{eq-ap-approx}
    \pr\{\mca \cond \mathbf{G}, M(\mu) \} \leq \pr\{\mca_{1}:\mca_{2}:
    \ldots :\mca_{k} \cond \mathbf{G}, M(\mu) \}.
  \end{equation*}
We have equality if $\mca = \mca_{1}:\mca_{2}: \ldots :\mca_{k}$.
\end{lemma}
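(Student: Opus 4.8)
The plan is to reduce the statement to a one-line monotonicity argument, after recording the single set-theoretic observation that makes it work. First I would note the inclusion
\[
\mca \;\subseteq\; \mca_{1}:\mca_{2}:\ldots:\mca_{k}.
\]
To see this, take any $A \in \mca$. By the definition of $\mca_{i}$ as the block projection $\{A'[L_{i-1}+1,L_i] : A' \in \mca\}$, the $i$-th block $A[L_{i-1}+1,L_i]$ lies in $\mca_{i}$ for each $i \in [k]$; since $0 = L_0 < L_1 < \ldots < L_k = L$, we have $A = A[L_0+1,L_1]:A[L_1+1,L_2]:\ldots:A[L_{k-1}+1,L_k]$, which exhibits $A$ as a concatenation of one element from each $\mca_{i}$, hence $A \in \mca_{1}:\mca_{2}:\ldots:\mca_{k}$.

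With the inclusion in hand, the inequality is immediate. For the fixed spanning forest sequence $\mathbf{G} \in \mcg_P^L$, Model $M(\mu)$ (with column $i$ evolving on $T_u(G_i)$) induces a genuine probability distribution on the finite set $\Sigma^{XL}$, and any probability measure is monotone with respect to set inclusion, so
\[
\pr\{\mca \cond \mathbf{G}, M(\mu)\} \;\leq\; \pr\{\mca_{1}:\mca_{2}:\ldots:\mca_{k} \cond \mathbf{G}, M(\mu)\}.
\]
The equality clause requires no argument: if $\mca = \mca_{1}:\mca_{2}:\ldots:\mca_{k}$, the two events are literally the same subset of $\Sigma^{XL}$.

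There is no genuine obstacle here; the only things needing care are cosmetic. One must get the direction of the inclusion right — the reverse inclusion $\mca_{1}:\ldots:\mca_{k} \subseteq \mca$ fails in general, because the blocks of a concatenation may be taken from different alignments of $\mca$, which is precisely why the statement is an inequality and not an identity. It is also worth flagging, for the later uses of this lemma (for instance in the proof of inequality~\ref{eq-term1a}), that conditional on $\mathbf{G}$ the blocks $A[L_{i-1}+1,L_i]$ are mutually independent under Model $M(\mu)$, since distinct sites receive independent founder states and mutate independently; this lets the right-hand side be rewritten as $\prod_{i=1}^{k}\pr\{\mca_{i} \cond \mathbf{G}, M(\mu)\}$. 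That factorization is not needed for the present statement but is what renders the bound useful downstream.
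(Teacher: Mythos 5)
Your proof is correct and follows exactly the paper's argument: the paper's own proof is the one-line observation that $\mca \subseteq \mca_{1}:\mca_{2}:\ldots:\mca_{k}$, from which the inequality follows by monotonicity of the measure. Your additional remarks on the direction of the inclusion and the block-independence factorization are accurate but not needed for the statement itself.
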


\begin{proof} The claim follows from the observation that $\mca
  \subseteq \mca_{1}:\mca_{2}: \ldots :\mca_{k}$.
\end{proof}

The following lemma is used in the proof of Equation~(\ref{eq-3b}), and
the reader may skip it until then.

\begin{lemma}\label{lem-seq} Let $\Omega$ be a finite set.
 Let $\mathbf{T}$ and $\mathbf{S}$ be two sequences in $\Omega$, each of
 length $l$, defined by
  \[
  \mathbf{T} := T_1^{\alpha_1}T_2^{\alpha_2}\ldots T_m^{\alpha_m} :=
  \overbrace{T_1,T_1,\ldots,T_1}^{\alpha_1},
  \overbrace{T_2,T_2,\ldots,T_2}^{\alpha_2}, \ldots,
  \overbrace{T_m,T_m,\ldots,T_m}^{\alpha_m},
  \]
  where $\sum_{i=1}^m \alpha_i = l$ and $\alpha_i > 0 \forall i \in
  [m]$, and
  \[
  \mathbf{S} := S_1^{\beta_1}S_2^{\beta_2}\ldots S_n^{\beta_n} :=
  \underbrace{S_1,S_1,\ldots,S_1}_{\beta_1},
  \underbrace{S_2,S_2,\ldots,S_2}_{\beta_2}, \ldots,
  \underbrace{S_n,S_n,\ldots,S_n}_{\beta_n},
  \]
  where $\sum_{i=1}^n \beta_i = l$ and $\beta_i > 0 \forall i \in [n]$.
  Suppose that $\mathbf{T}$ and $\mathbf{S}$ satisfy the constraints:
  $T_{i} \neq T_{i+1} \forall i \in [m-1]$; $S_{i} \neq S_{i+1} \forall
  i \in [m-1]$; $n\leq m$; if $n = m$, then $S_i \neq T_i$ for some
  $i$. Then there is at least one block $T_i^{\alpha_i}$ of $\mathbf{T}$
  over which $\mathbf{T}$ and $\mathbf{S}$ mismatch everywhere;
  hence there are at least $\min\{\alpha_i: i \in [m]\}$ mismatches
  between the two sequences.
\end{lemma}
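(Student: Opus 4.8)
The plan is to argue by contradiction. Suppose that every block $T_i^{\alpha_i}$ of $\mathbf{T}$ contains at least one position at which the two sequences agree. For each $i \in [m]$ choose such a position $p_i$ lying in the $i$-th block of $\mathbf{T}$; then the $p_i$-th entries of $\mathbf{T}$ and $\mathbf{S}$ are both equal to $T_i$. Since the $i$-th block of $\mathbf{T}$ is an interval of positions lying entirely to the left of the $(i+1)$-st block, we automatically have $p_1 < p_2 < \cdots < p_m$.

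The core of the argument is to convert this into a strictly increasing map between the two sets of block indices. Let $f(i) \in [n]$ be the index of the block of $\mathbf{S}$ that contains $p_i$, so that $S_{f(i)} = T_i$ for every $i \in [m]$. Because the blocks of $\mathbf{S}$ are consecutive intervals of positions and $p_1 < \cdots < p_m$, the map $f$ is non-decreasing; and $f(i) = f(i+1)$ is impossible, since it would give $T_i = S_{f(i)} = S_{f(i+1)} = T_{i+1}$, contradicting $T_i \neq T_{i+1}$. Hence $f : [m] \to [n]$ is strictly increasing, which already forces $m \le n$. Together with the hypothesis $n \le m$ this gives $m = n$; but a strictly increasing self-map of $[m]$ is the identity, so $f(i) = i$, and therefore $S_i = T_i$ for all $i$ --- contradicting the hypothesis that, when $n = m$, one has $S_i \neq T_i$ for some $i$.

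This contradiction shows that some block $T_i^{\alpha_i}$ of $\mathbf{T}$ contains no agreement, i.e.\ $\mathbf{T}$ and $\mathbf{S}$ mismatch at all $\alpha_i$ of its positions. Since $\alpha_i \ge \min\{\alpha_j : j \in [m]\}$, the two sequences mismatch in at least $\min\{\alpha_j : j \in [m]\}$ positions, which is the assertion of the lemma.

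I expect the only delicate point to be the middle paragraph: one must check that $f$ is well defined (each $p_i$ lies in a unique block of $\mathbf{S}$) and monotone, and then read off the dichotomy --- $m < n$, ruled out at once by the hypothesis $n \le m$, versus $m = n$, in which case the identity map $f$ is precisely what the clause ``$S_i \neq T_i$ for some $i$'' forbids. No estimates or computations enter; the entire proof is this monotonicity bookkeeping, and notably only the reducedness of $\mathbf{T}$ (not that of $\mathbf{S}$) is actually used.
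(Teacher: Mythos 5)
Your proof is correct and is essentially the paper's own argument: the paper also assumes each block of $\mathbf{T}$ contains a match, concludes that $T_1,\ldots,T_m$ embeds as a subsequence of $\mathbf{S}$ hitting distinct blocks (forcing $n\ge m$, hence $n=m$ and $S_i=T_i$ for all $i$), and derives the same contradiction; your map $f$ just makes the ``distinct blocks'' step explicit. Your closing observation that only the reducedness of $\mathbf{T}$ is used is accurate but does not change the substance.
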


\begin{proof}
  Suppose that the claim is false; so in each block $T_i^{\alpha_i}$ of
  $\mathbf{T}$, there is a matching symbol in $\mathbf{S}$. Therefore,
  $T_1, T_2, \ldots, T_m$ is a subsequence of $\mathbf{S}$ and $n \geq
  m$. This, together with $T_{i} \neq T_{i+1} \forall i \in [m-1]$ and
  $n \leq m$, implies that $n = m$ and $S_i = T_i \forall i \in [m]$,
  which contradicts the assumption that when $n = m$, there is some $i$
  for which $S_i \neq T_i$.
\end{proof}

\begin{thm}\label{thm-paths}
  Let $P$ and $Q$ be any two pedigrees with $e(P) = e(Q) = 2e$ arcs. Let
  $\mathbf{T} := (T_i, i = 1,2,\ldots, m)$ be any sequence of undirected
  $X$-forests in which consecutive $X$-forests are non-isomorphic. Then
  for all $\mu \in (0,1/|\Sigma|)$, there exists $p_0 := p_0(e,m,\mu)\in
  (0,1)$ such that for all $p \in (0,p_0)$, the following statement is
  true: if $(\Sigma^{XL}: P, RM(p,\mu)) = (\Sigma^{XL}: Q, RM(p,\mu)) \,
  \forall \, L \in \nn$, then $n(\mathbf{G} > \mathbf{T}:P) =
  n(\mathbf{G} > \mathbf{T}:Q)$.
\end{thm}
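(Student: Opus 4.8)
The plan is to compute $\pr\{\mca \cond P, RM(p,\mu)\}$ for a carefully chosen set $\mca$ of alignments, expand it as a polynomial in $p$ (using Lemma~\ref{lem-pa}), identify the lowest-order term, and show that its coefficient determines $n(\mathbf{G} > \mathbf{T}:P)$ up to quantities that are the same for $P$ and $Q$. First I would fix a large parameter $L$ (to be chosen depending on $\mu$, $e$, $m$ and an error tolerance via Lemma~\ref{lem-eps-l1} and Corollary~\ref{cor-consistency}) and take
\[
\mca := \mca_{1}:\mca_{2}:\ldots:\mca_{m}, \qquad \mca_{i} := \mca(T_i, r_0, L),
\]
the concatenation of ``phylogenetically unambiguous'' blocks, where $r_0$ is the separation radius from Equation~(\ref{eq-eps0}). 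The alignment $\mca$ has length $mL$. Intuitively, an alignment in $\mca$ is most cheaply produced by a spanning forest sequence that stays on some $G_i \geq T_i$ throughout block $i$ and jumps between blocks with a single recombination; any such trajectory costs exactly $p^{m-1}(1-p)^{e(mL-1)-(m-1)}$ by Lemma~\ref{lem-pg}, and there are exactly $n(\mathbf{G} > \mathbf{T}:P)$ of them (times the per-block emission probability $\prod_i \pr\{\mca_i \cond T_i, M(\mu)\} \geq \prod_i(1-\epsilon_i)$, which depends only on the $T_i$ and $\mu$, not on the pedigree). So the candidate leading term is
\[
\frac{p^{m-1}(1-p)^{e(mL-1)-(m-1)}}{2^e}\, n(\mathbf{G} > \mathbf{T}:P)\prod_{i=1}^m \pr\{\mca_i \cond T_i,M(\mu)\}.
\]

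The heart of the argument is to bound all the other contributions to $\pr\{\mca \cond P, RM(p,\mu)\}$ and show they are $o(p^{m-1})$ or else have pedigree-independent coefficients. I would split the sum in Equation~(\ref{eq-pa}) by the number $k-1$ of recombination breakpoints and by which forests $\mathbf{G}$ are used. Trajectories with $m-1$ recombinations but which do \emph{not} realize the pattern $\mathbf{T}$ block-by-block must, by Lemma~\ref{lem-seq}, emit on some block-$i$ sub-interval a nonempty stretch that evolved entirely on forests non-isomorphic to $T_i$; by the separation of the balls $\rho(\mathbf{p}(T_i,\mu),r_0)$ and Lemma~\ref{lem-eps-l1} (or its variant Lemma~\ref{lem-eps-l2}), the probability that such an alignment still lands in $\mca_i$ is bounded by a quantity like $2|\Sigma|^{|X|}\exp\{-\Theta(L)\}$, which we can make smaller than any prescribed $\delta$ by choosing $L$ large — crucially, independent of $p$. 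Trajectories with fewer than $m-1$ recombinations cannot visit $m$ distinct consecutive $X$-forest isomorphism types (since each $T_i\not\cong T_{i+1}$ forces a recombination), so again one of the $m$ blocks is emitted ``on the wrong forest'' over a positive-length stretch and Lemma~\ref{lem-seq} plus the concentration bound kills its contribution. Trajectories with $k-1 \geq m$ recombinations contribute a factor $p^{k-1} = O(p^m)$, hence are of strictly higher order and can be absorbed into an error term once $p$ is small. Summing up: for every $\delta > 0$ there is $L$ and then $p_0$ such that for $p < p_0$,
\[
\left| \pr\{\mca \cond P, RM(p,\mu)\} - \frac{p^{m-1}(1-p)^{e(mL-1)-(m-1)}}{2^e} n(\mathbf{G} > \mathbf{T}:P)\prod_{i}\pr\{\mca_i\cond T_i,M(\mu)\}\right| \leq \delta\, p^{m-1} + O(p^m).
\]

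To finish, I would exploit the hypothesis $(\Sigma^{XL}:P,RM(p,\mu)) = (\Sigma^{XL}:Q,RM(p,\mu))$ for \emph{all} $L$: this forces $\pr\{\mca \cond P,RM(p,\mu)\} = \pr\{\mca\cond Q,RM(p,\mu)\}$ for the specific $\mca$ above. Subtracting the two estimates and dividing by $p^{m-1}$, we get
\[
\frac{1}{2^e}\Big|n(\mathbf{G}>\mathbf{T}:P) - n(\mathbf{G}>\mathbf{T}:Q)\Big|\prod_i\pr\{\mca_i\cond T_i,M(\mu)\} \leq 2\delta + O(p).
\]
Since the emission product is bounded below by $\prod_i(1-\epsilon_i) > 0$ and the two counts are non-negative integers, choosing $\delta$ small and then $p_0$ small makes the right side $< 1/2^e \cdot \prod_i(1-\epsilon_i)$, forcing $n(\mathbf{G}>\mathbf{T}:P) = n(\mathbf{G}>\mathbf{T}:Q)$. (One must check the dependence of $p_0$ is only on $e$, $m$, $\mu$ — it is, because $r_0$, the $\epsilon_i$, and $L$ depend only on those and on $|X|\leq$ a function of $e$.)

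I expect the main obstacle to be the bookkeeping in the middle paragraph: carefully enumerating the ``bad'' trajectory classes and verifying, via Lemma~\ref{lem-seq}, that each one genuinely forces a long emission stretch on a wrong $X$-forest so that the $p$-independent exponential concentration bound applies. In particular one must handle the case of exactly $m-1$ recombinations placed at the ``wrong'' sites (breakpoints not aligned with the block boundaries $L_i$), where the forest sequence realizes an $X$-forest pattern $\mathbf S$ with $n = m$ blocks but $S_i \neq T_i$ for some $i$ — this is precisely the hypothesis of Lemma~\ref{lem-seq}, and it is why that lemma was stated. The recombination-with-no-emission-effect subtlety (two spanning forests sharing the same undirected $X$-forest, noted in the sketch in Section~\ref{sec-sketch}) also needs care: such a ``recombination'' should be counted toward the power of $p$ but may leave $\mathbf{T_u}$ unchanged, so when tallying $n(\mathbf{G}>\mathbf{T}:P)$ we are genuinely counting spanning-forest sequences, not $X$-forest sequences, and the block-boundary single recombinations in the leading term may be of this invisible kind — which is fine, since the definition of $n(\mathbf{G}>\mathbf{T}:P)$ already accounts for that.
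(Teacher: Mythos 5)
Your overall strategy is the paper's: choose a concatenated event $\mca$ built from the balls $\mca(T_i,r_0,L)$, split $\pr\{\mca\cond P,RM(p,\mu)\}$ by the number of recombinations, isolate a main term proportional to $n(\mathbf{G}>\mathbf{T}:P)$ with a pedigree-independent factor, kill the $\geq m$-recombination terms by a power of $p$, and use Lemma~\ref{lem-seq} plus ball separation for the rest. But there is a genuine gap in your middle paragraph, and it is exactly at the point you flag as ``bookkeeping.'' You take $\mca := \mca_1:\mca_2:\ldots:\mca_m$ of length $mL$, whereas the paper takes $\mca := \mca_1^m:\mca_2^m:\ldots:\mca_m^m$ of length $m^2L$, i.e.\ each block is \emph{repeated $m$ times} and each of the $m$ length-$L$ subblocks is separately required to land in $\rho(\mathbf{p}(T_i,\mu),r_0)$. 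This repetition is what makes the argument work. With at most $m-1$ recombination breakpoints in total, Lemma~\ref{lem-seq} gives a block over which the realized undirected $X$-forests all differ from $T_i$, and the repetition guarantees that at least one entire length-$L$ subblock of that block is recombination-free, hence evolves on a \emph{single} forest $S\not\cong T_i$; only then does the separation bound $\pr\{\mca_i\cond S,M(\mu)\}\leq\epsilon_{\max}$ of Equation~(\ref{eq-p-mcai}) apply. In your single-block version, all $m-1$ breakpoints can fall inside the one mismatching block, partitioning it into several stretches on distinct wrong forests $S_1,\ldots,S_n$; the block's empirical site-pattern frequency then concentrates near the convex combination $\sum_j(\beta_j/L)\mathbf{p}(S_j,\mu)$, which can perfectly well lie inside $\rho(\mathbf{p}(T_i,\mu),r_0)$ even though every $\mathbf{p}(S_j,\mu)$ is at distance $\geq 2r_0$ from $\mathbf{p}(T_i,\mu)$ (already for $|X|=2$, $|\Sigma|=2$ the vectors $\mathbf{p}(T,\mu)$ are collinear, so one of them is a convex combination of two others). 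Such trajectories land in $\mca$ with probability bounded away from $0$, their number of breakpoint placements is $\Theta(L^{m-1})$, and their count is pedigree-dependent; they therefore contribute at order $(Lp)^{m-1}$ --- the same order as your main term --- and neither Lemma~\ref{lem-eps-l1} nor Lemma~\ref{lem-eps-l2} (which goes in the opposite direction: it shows a lightly contaminated block \emph{still lands in} the ball) controls them. ``A nonempty stretch on a wrong forest'' is simply not enough to force the whole block out of the ball.

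A secondary issue is the order of quantifiers in your final estimate. The contributions from $k<m-1$ recombinations do not carry a factor $p^{m-1}$: e.g.\ the recombination-free trajectories contribute roughly $\epsilon_{\max}^{\,m}$, a constant in $p$ once $L$ is fixed. So the claimed bound ``$\delta p^{m-1}+O(p^m)$ for $L$ chosen first, then $p<p_0$'' cannot hold; as $p\to 0$ with $L$ fixed, these terms swamp the main term. One must couple the parameters as the paper does (Equations~(\ref{eq-cond})): $\epsilon_{\max}=1/M$, $L=c\log M$, $p=(\log M)^{-(1+\epsilon)}$, so that $\epsilon_{\max}/p^{m-1}\to 0$ and $Lp\to 0$ simultaneously. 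This part is repairable bookkeeping; the missing $m$-fold block repetition is the substantive omission.
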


\begin{proof} 
  Let $\mca := \mca_{1}^{m}:\mca_{2}^{m} :\ldots : \mca_m^{m}$, where
  $\mca_{i} := \mca(T_i, r_0, L)$ and $r_0$ are as defined in
  Equations~(\ref{eq-eps0}) and~(\ref{eq-mcai}), respectively. We will
  choose $\epsilon_{max}$ (defined at the end of
  Section~\ref{sec-consistency}) and $L$ (that depends on
  $\epsilon_{max}$) suitably later.  The probability of $\mca$ on $P$
  and $Q$ is written by summing (\ref{eq-pa}) over all $A \in \mca$. But
  based on Theorem~\ref{thm-identifiability},
  Corollary~\ref{cor-consistency} and Lemma~\ref{lem-eps-l2}, we can
  make the following qualitative and somewhat informal statement: If $L$
  is large enough, then $\pr\{\mca \cond P, RM(p,\mu)\}$ will get
  significantly higher contribution from spanning forest sequences
  $\mathbf{G}:= (G_i^{l_i}, i = 1,2,\ldots, m)$ of length $m^2L$ such
  that $T_u(G_i) \cong T_i$ for all $i \in [m]$ and $l_i$ are all close
  to $L$, than from spanning forest sequences $\mathbf{G}:= (G_i, i =
  1,2,\ldots, m^2 L)$ for which there are many mismatches (in terms of
  isomorphism) between sequences $(T_i^{mL}, i = 1,2,\ldots, m)$ and
  $((T_u(G_i)), i = 1,2,\ldots, m^2L)$ or if they require more than
  $m-1$ recombinations.

  Let $\pr\{\mca \cond P, RM(p,\mu)\} = \sum_{k\in \nn} P_k(P)$, where
  $P_k(P)$ is the joint probability of $\mca$ and the event that there
  are exactly $k$ recombinations. Furthermore, we write $P_{(m-1)}(P) =
  P_{(m-1)a}(P) + P_{(m-1)b}(P)$, where $P_{(m-1)a}(P)$ is the
  contribution from spanning forest sequences $\mathbf{G}:= (G_i^{l_i},
  i = 1,2,\ldots, m)$ of length $m^2 L$ such that $T_u(G_i) \cong T_i$
  for all $i \in [m]$, and $P_{(m-1)b}(P)$ is the remaining contribution
  to $P_{m-1}$, i.e., from spanning forest sequences $\mathbf{G} :=
  (G_i^{l_i}, i = 1,2,\ldots, l+1) $ of length $m^2 L$ such that either
  $l < m-1$ (i.e., the $m-1$ recombinations occur at fewer than $m-1$
  sites), or $l=m-1$ and $T_u(G_i) \not \cong T_i$ for some $i \in [m]$.

  We will show that only $P_{(m-1)a}(P)$ makes a significant
  contribution to $\pr\{\mca \cond P, RM(p,\mu) \}$. In particular, we
  will show that the various contributions to $\pr\{\mca \cond P,
  RM(p,\mu)\}$ take the following form:

  \begin{eqnarray*}
    P_{(m-1)a}(P) &=& n(\mathbf{G} \geq \mathbf{T}:P) \Delta(\mathbf{T})
    \\[1mm] 
    \sum_{k < m-1}P_k(P) + P_{(m-1)b}(P) &<& \delta_1 \\[1mm]
    \sum_{k \geq m} P_k(P) &<& \delta_2,
  \end{eqnarray*}
  where $\Delta(\mathbf{T})$ does not depend on the pedigree (but
  depends on the undirected $X$-forest sequence $\mathbf{T}$). Moreover,
  $\Delta(\mathbf{T})$, $\delta_1$ and $\delta_2$ depend on $e, m, L, p$
  and $\epsilon_{max}$.  We will show that when $p$ and $\epsilon_{max}$
  are sufficiently small and $L$ is sufficiently large, $\delta_1$ and
  $\delta_2$ are very small compared to $\Delta(\mathbf{T})$. It will
  imply that for $\pr\{\mca \cond P, RM(p,\mu)\}$ and $\pr\{\mca \cond
  Q, RM(p,\mu)\}$ to be equal, $n(\mathbf{G} \geq \mathbf{T}:P)$ and
  $n(\mathbf{G} \geq \mathbf{T}:Q)$ must be equal. (Otherwise, there
  would be a difference between $\pr\{\mca \cond P, RM(p,\mu)\}$ and
  $\pr\{\mca \cond Q, RM(p,\mu)\}$ that is of the order of a multiple of
  $\Delta(\mathbf{T})$.)

  \

  \noindent {\em A lower bound on $P_{(m-1)a}(P)$.}

  \

  Since the consecutive $X$-forests in $\mathbf{T}$ are nonisomorphic
  and $T_u(G_i) \cong T_i$ for all $i \in [m]$, there is at least one
  recombination between consecutive spanning forests $G_i$. And since
  there are exactly $m-1$ recombinations, there must be exactly one
  recombination between consecutive spanning forests.

  Therefore,
  \begin{eqnarray}\label{eq-term1a}
    P_{(m-1)a}(P) &=& n(\mathbf{G} \geq \mathbf{T}:P)
    \left(\frac{(1-p)^{e(m^2L-m)+(m-1)(e-1)}p^{m-1}}{2^{e}}\right) 
    \nonumber \\[1mm]
    && \times \left( \sum_{A \in \mca} \sum_{\substack{\mathbf{l} 
          \in \zz_+^m : \\
          \sum\limits_{i=1}^ml_i = m^2L}}\;
      \prod_{i=1}^m\pr\{A[L_{i-1}+1,L_i] \cond T_u(G_i),M(\mu)
      \}\right) \nonumber \\
    &:=& n(\mathbf{G} \geq \mathbf{T}:P) \Delta(\mathbf{T}),
  \end{eqnarray}
  where $L_0 := 0$, $L_i := L_{i-1} + l_i$. The inner summation is
  evaluated for any fixed choice of $\mathbf{G} := (G_i^{l_i}, i =
  1,2,\ldots, m)$ such that $T_u(G_i) \cong T_i$ for all $i \in [m]$,
  because for any fixed $l_i, i \in [m]$, and spanning forest sequences
  $\mathbf{G} := ((G_i)^{l_i}, i = 1,2,\ldots, m)$ and
  $\mathbf{G^{\prime}} := ((G_i^{\prime})^{l_i}, i = 1,2,\ldots, m)$, if
  $T_u(G_i) \cong T_u(G_i^{\prime})$ for all $i \in [m]$, then for each
  alignment $A$, we have $\pr\{A[L_{i-1}+1,L_i] \cond G_i^{l_i},M(\mu)
  \} = \pr\{A[L_{i-1}+1,L_i] \cond (G_i^{\prime})^{l_i},M(\mu) \} =
  \pr\{A[L_{i-1}+1,L_i] \cond T_u(G_i),M(\mu) \}$. Therefore, the RHS is
  a product of $n(\mathbf{G} \geq \mathbf{T}:P)$ and a factor that does
  not explicitly depend on the pedigree, but only on $\mathbf{T}$.

  We have 
  \begin{eqnarray}\label{eq-term1b}
    \Delta(\mathbf{T}) 
    &\geq &
    \frac{(2cL)^{m-1}(1-p)^{em^2 L-m-e+1}p^{m-1} 
      (1-\epsilon_{max})^{(m^2)}}{2^{e}}.
  \end{eqnarray}
  To prove the lower bound, we sum (\ref{eq-pa}) over spanning forest
  sequences of the type $\mathbf{G}:= (G_i^{l_i}, i = 1,2,\ldots, m)$
  where $T_u(G_i) \cong T_i$ for all $i \in [m]$ and $l_i$ are such that
  the $m-1$ recombinations occur at sites $L_i \in [imL-cL,imL+cL), i =
  1\;\text{to}\;m-1$, for a small positive constant $c$, (i.e., we
  ignore contributions from the spanning forest sequences in which some
  of the recombinations are not near the boundaries of the blocks
  $\mca_{i}^m$ of $\mca$). We can choose the recombination sites in
  $(2cL)^{m-1}$ ways. Since $s(\mathbf{G}) = em^2L-e-m+1$ and
  $r(\mathbf{G}) = m-1$, the probability of $\mathbf{G}$ is
  $(1-p)^{em^2L-m-e+1}p^{m-1}/2^e$. We write $\mathbf{G} :=
  (\mathbf{G_i}, i = 1,2,\ldots, m)$, where $\mathbf{G_i}$ are blocks of
  $\mathbf{G}$ of length $mL$ each. Then we have
  \[
  \pr\{\mca_{i}^m \cond \mathbf{G_i}, M(\mu) \} \geq
  (1-\epsilon_{max})^m
  \]
  for each $\mathbf{G}$ and for all $i$, provided we choose $c, L$ and
  $\epsilon_{max}$ appropriately according to Lemma~\ref{lem-eps-l2}.
  
  \
  
  \noindent {\em An upper bound on $\sum_{k < m-1}P_k(P) +
    P_{(m-1)b}(P)$.}

  \

  We have
  \begin{eqnarray}\label{eq-3b}
    && \sum_{k < m-1}P_k(P) + P_{(m-1)b}(P)  \nonumber \\[2mm]
    & \leq & \sum_{k=0}^{m-1} \frac{1}{2^e}
    \left(\sum_{l=0}^{k}n_{kl}\binom{m^2 L -1}{l}
      (\epsilon_{max})^{m-l}\right) p^k(1-p)^{e(m^2L-1)-k} \nonumber \\[2mm]
    & \leq & c_2 L^{m-1} \epsilon_{max} := \delta_1,
  \end{eqnarray}
  where $n_{kl}$ is the number of spanning forest sequences $(G_1,G_2,
  \ldots, G_{l+1})$ in which $k$ recombinations occur at $l$ sites, and
  $c_2 > 0$ is a constant that depends on $e$ and $m$. We explain below
  how the bound is obtained.

  We fix $\mathbf{G} \in \mcg_P^{m^2 L}$ such that $r(\mathbf{G}) = k
  \leq m-1$. For $k=m-1$, since we are only interested in the
  contribution $P_{(m-1)b}$, we fix $\mathbf{G}$ with the additional
  restrictions in the definition of $P_{(m-1)b}$. Suppose that the $k$
  recombinations occur at $l$ distinct sites $L_1, L_2, \ldots, L_{l}$,
  where $0 = L_0 < L_1 < L_2 < \ldots < L_l < L_{l+1} = m^2 L$. So we
  write $\mathbf{G} := (G_1^{L_1},G_2^{(L_2-L_1)},\ldots ,G_{l+1}^{(m^2
    L - L_l)})$. There are $n_{kl}$ choices for $(G_1,G_2, \ldots,
  G_{l+1})$. For each choice of $(G_1,G_2, \ldots, G_{l+1})$, there are
  $\binom{m^2 L -1}{l}$ choices for the recombining sites $L_1, \ldots ,
  L_l$. Each $\mathbf{G}$ has a probability
  $(1-p)^{s(\mathbf{G})}p^{r(\mathbf{G})}/2^e$, where $r(\mathbf{G}) =
  k$ and $s(\mathbf{G}) = e(m^2 L - 1) - k$. We show that $\pr\{\mca
  \cond \mathbf{G}, M(\mu) \}$ is bounded above by
  $(\epsilon_{max})^{m-l}$ for each $\mathbf{G}$ with $l$ recombining
  sites that satisfies the above constraints.

  Let $\mathbf{G} := (G_1, G_2, \ldots, G_{m^2L})$ be a spanning forest
  sequence of length $m^2L$. For $i \in [m], j \in [m]$, we refer to the
  subsequences $\mathbf{G_i} := (G_k, k \in [(i-1)mL+1, imL])$ as {\em
    blocks} of $\mathbf{G}$, and subsequences $\mathbf{G_{ij}} := (G_k,
  k \in [((i-1)m+j-1)L+1, ((i-1)m+j)L])$ as {\em subblocks} of
  $\mathbf{G}$. We say that subblock $\mathbf{G_{ij}}$ is {\em
    recombination-free} if there are no recombinations between any two
  sites of the subblock. (The subblock may have recombinations at its
  boundaries.)

  By Lemma~\ref{lem-seq}, there is at least one block, say the $i$-th
  block, over which the sequences $\mathbf{T}$ and
  $(T_u(G_1)^{L_1},T_u(G_2)^{(L_2-L_1)},\ldots ,T_u(G_{l+1})^{(m^2 L -
    L_l)})$ mismatch everywhere. Since there are $m$ subblocks in each
  block and $l \leq m-1$, there are at least $m-l$ recombination-free
  subblocks in the $i$-th block of $\mathbf{G}$. Let these subblocks be
  denoted by $\mathbf{G_{ij_k}} := (G_{ij_k})^L, k = 1,2, \ldots$. We
  have $T_i \not \cong T_u(G_{ij_k})$ for each of them. Therefore,
  \[
  \pr\{\mca \cond \mathbf{G}, M(\mu) \} \leq \prod_{k} \pr\{\mca_i \cond
  T_u(G_{ij_k}, M(\mu) \} \leq (\epsilon_{max})^{m-l}.
  \]

  \
  
  \noindent {\em An upper bound on $\sum_{k \geq m} P_k(P)$.}
  
  \
  
  We use the following fact about binomially distributed random
  variables: If $X \sim \text{Bin}(n,p)$, then $\pr\{X \geq k \} \leq
  \binom{n}{k}p^k$. (This result is a consequence of the union bound.)
  Since there are $e(m^2 L-1)$ points at which a recombination can
  possibly occur, we have
  \begin{equation}
    \sum_{k \geq m} P_k(P) \leq \binom{e(m^2 L-1)}{m}p^{m} 
    \leq c_3 L^m p^m := \delta_2,
  \end{equation}
  where $c_3 > 0$ is a constant that depends on $e$ and $m$.

  We write similar bounds for $Q$.

  Now suppose that $\pr\{\mca \cond P,RM(p,\mu) \} = \pr\{\mca
  \cond Q,RM(p,\mu)\}$ but $n(\mathbf{G} \geq \mathbf{T}:P) \neq
  n(\mathbf{G} \geq \mathbf{T}:Q)$. Therefore, Equations~\ref{eq-term1a}
  and \ref{eq-term1b} imply that
  \begin{eqnarray}\label{eq-term1c}
    && |P_{(m-1)a}(P) - P_{(m-1)a}(Q)| \nonumber \\[2mm]
    &=& |n(\mathbf{G} \geq \mathbf{T}:P) - n(\mathbf{G} \geq
    \mathbf{T}:Q)|\Delta(\mathbf{T}) \nonumber \\[2mm]
    &\geq& \frac{(2cL)^{m-1}(1-p)^{(em^2 L-m-e+1)}p^{(m-1)}
      (1-\epsilon_{max})^{(m^2)}}{2^{e}}.
  \end{eqnarray}
  But that is impossible if we can choose $\epsilon_{max}$, $L$, 
  and $p$ so that $\delta_1 + \delta_2 < \Delta(\mathbf{T})$, or
  \begin{equation}\label{eq-final}
    c_2L^{m-1}\epsilon_{max} + c_3 (Lp)^m
    < \frac{(2cLp)^{m-1}(1-p)^{(em^2 L-m-e+1)}
      (1-\epsilon_{max})^{(m^2)}} {2^{e}}.
  \end{equation}
  In other words, the discrepancy $|P_{(m-1)a}(P) - P_{(m-1)a}(Q)|$
  cannot be compensated for by the remaining terms in $\pr\{\mca \cond
  P,RM(p,\mu) \}$ and $\pr\{\mca \cond Q,RM(p,\mu) \}$. Such a choice is
  possible. For example, we first choose $\epsilon_{max} \in (0,1)$, let
  us say, $\epsilon_{max} = 1/M$, where $M > 1$. We then set $L := L(M)$
  and $p := p(M)$ so that
  \begin{eqnarray}\label{eq-cond}
    (\epsilon_{max}/p^{m-1}) \longrightarrow 0 
    &\;\; \text{as}\;\; & M \longrightarrow \infty , \nonumber \\
    Lp  \longrightarrow 0 &\;\; \text{as}\;\; & M
    \longrightarrow \infty, \;\;\text{and}\nonumber \\
    (1-p)^ {(em^2 L-m-e+1)}(1-\epsilon_{max})^{(m^2)}
    \longrightarrow 1 &\;\; \text{as}\;\; & M
    \longrightarrow \infty.
  \end{eqnarray}

  The choice of $L$ must guarantee phylogenetic consistency as in
  Corollary~\ref{cor-consistency}. Moreover, we must set $L$ as small as
  possible so as to get a better bound on $p$. By Lemma~\ref{lem-eps-l2},
  $L$ is required to grow only logarithmically in $1/\epsilon_{max}$, so
  let $L := c\log M$ for a suitable choice of $c > 0$. Now conditions
  (\ref{eq-cond}) are satisfied for $p := (\log M)^{-(1+\epsilon)}$ for
  $\epsilon > 0$. Then (\ref{eq-final}) is satisfied for sufficiently
  large $M$.
\end{proof}

\subsection{Applications}
In the following, we illustrate Theorem~\ref{thm-paths} with a general
result on counting $X$-forests and a couple of specific examples.
 
\begin{cor} If the conditions of Theorem~\ref{thm-paths} are
  satisfied, then $P$ and $Q$ have the same number of undirected
  $X$-forests of each type.
\end{cor}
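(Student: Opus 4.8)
The plan is to obtain this as the $m=1$ specialisation of Theorem~\ref{thm-paths}. Fix $\mu \in (0,1/|\Sigma|)$ and an isomorphism class $T$ of undirected $X$-forest, and take the length-one sequence $\mathbf{T} := (T)$. Then the hypothesis of Theorem~\ref{thm-paths} that consecutive members of $\mathbf{T}$ be non-isomorphic is vacuous, and so is the condition $|E(G_{i+1}) \bigtriangleup E(G_{i})| = 2$ for $i \in [m-1]$ in the definition of $n(\mathbf{G} > \mathbf{T}:P)$. Unwinding that definition for $m=1$ therefore gives $n(\mathbf{G} > \mathbf{T}:P) = |\{G \in \mcg_P : T_u(G) \cong T\}| = n(G > T:P)$, the invariant introduced before Proposition~\ref{prop-ngt}. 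Thus the $m=1$ case of Theorem~\ref{thm-paths} is exactly a statement about $n(G > T:P)$.

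Concretely, I would let $p_0 := p_0(e,1,\mu)$ be the threshold furnished by Theorem~\ref{thm-paths} for $m=1$ (this is a single number, not depending on $T$, and it is at least the threshold that any longer sequence would require, so it is covered by ``the conditions of Theorem~\ref{thm-paths}''). Under those conditions we have $p < p_0$ and $(\Sigma^{XL}: P, RM(p,\mu)) = (\Sigma^{XL}: Q, RM(p,\mu))$ for all $L \in \nn$, so Theorem~\ref{thm-paths} applied to $\mathbf{T} = (T)$ yields $n(G > T:P) = n(G > T:Q)$. The isomorphism classes $T$ for which either side is nonzero form a finite set (those arising as $T_u(G)$ for some spanning forest of $P$ or of $Q$), and for every other class both sides vanish; since the single threshold $p_0(e,1,\mu)$ works uniformly over this finite set, we conclude $n(G > T:P) = n(G > T:Q)$ for all $T$, which is the claim that $P$ and $Q$ have the same number of undirected $X$-forests of each type.

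There is essentially no obstacle here: the corollary is a direct restatement of the $m=1$ instance of the main theorem. The only point worth spelling out is that ``number of undirected $X$-forests of each type'' is to be read as the invariant $n(G > T:\cdot)$ (each labelled undirected $X$-forest being counted with the multiplicity with which it arises from spanning forests), which is the quantity already used in Proposition~\ref{prop-ngt} and in the HMM interpretation; with that convention the identification above is immediate and no additional estimate beyond Theorem~\ref{thm-paths} is needed.
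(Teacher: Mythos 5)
Your derivation of $n(G > T:P) = n(G > T:Q)$ from the $m=1$ case of Theorem~\ref{thm-paths} is correct, but it is not the statement of the corollary, and the step you dismiss as a matter of convention is exactly where the content lies. ``The same number of undirected $X$-forests of each type'' means that for each isomorphism class $T$, the number $n(T:P)$ of distinct labelled undirected $X$-forests of $P$ in that class (i.e.\ the number of elements of $\mcu_P$ isomorphic to $T$) equals $n(T:Q)$. This is not the invariant $n(G>T:P) = |\{G \in \mcg_P : T_u(G)\cong T\}|$ that the $m=1$ theorem delivers: a spanning forest $G$ containing a given copy $T_i$ of $T$ need not have $T_u(G)\cong T$, because $T_u(G)$ may be a strictly larger $X$-forest containing $T_i$. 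Consequently $n(G>T:P)$ is \emph{not} simply $n(T:P)\cdot 2^{e-e(T)}$, and equality of the $n(G>T:\cdot)$ invariants does not tautologically give equality of the $n(T:\cdot)$ invariants. The fact that your reading makes the corollary ``a direct restatement of the $m=1$ instance'' should have been a warning sign.

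The missing step, which is the bulk of the paper's proof, is a counting identity relating the two families of invariants: each copy $T_i$ of $T$ in $P$ lies in exactly $2^{e-e(T)}$ spanning forests, and each such spanning forest $G$ satisfies $T_u(G) = T_{ij}$ for some undirected $X$-forest $T_{ij}$ of $P$ containing $T_i$; summing over copies and grouping by isomorphism class yields
\[
n(T:P)\,2^{e-e(T)} \;=\; \sum_{(T'\in\lVert\mcu_P\rVert)\,\wedge\,(T'\geq T)} n(G>T':P)
\]
(up to multiplicities counting copies of $T$ inside each $T'$, which depend only on the isomorphism classes and hence are the same for $P$ and $Q$). This system is triangular with respect to the subgraph partial order $\geq$, so once all $n(G>T':\cdot)$ are known to agree (your $m=1$ argument) and $e(P)=e(Q)=2e$ is assumed, one recovers $n(T:P)=n(T:Q)$ for every $T$, e.g.\ by downward induction on $e(T)$. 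You need to supply this inversion; as written, your proposal proves a different (and essentially vacuous) assertion. Your side remarks --- that only finitely many isomorphism classes are relevant and that a single threshold $p_0(e,1,\mu)$ suffices --- are fine and not the issue.
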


\begin{proof}
  We apply Theorem~\ref{thm-paths} for $m = 1$. Suppose that $P$ is a
  pedigree with $e(P) = 2e$ arcs. Let $T$ be an undirected
  $X$-forest. Suppose that we want to count the number of copies $n(T:P)$
  of $T$ in P. Let $T_i, i = 1 \; \text{to} \; n(T:P)$ be the distinct
  copies of $T$ in $P$. For each $T_i$, let $T_{ij}, j = 1,2,\ldots $ be
  the undirected $X$-forests in $P$ that contain $T_i$. (To be precise, we
  have subgraphs $S_i$ and $S_{ij}$ in $P$ such that when the (directed)
  arcs of $S_i$ and $S_{ij}$ are replaced by (undirected) edges, we get
  the undirected $X$-forests $T_i$ and $T_{ij}$. But we do not make this
  distinction in the following.)  Note that $T_{ij}$ are all distinct
  $X$-forests. There are $e-e(T_i)$ non-founder vertices in $P$ at which we
  can choose one of the two incoming arcs to construct a spanning forest
  $G$ that contains $T_i$. Therefore, for each $T_i$ there are
  $2^{e-e(T_i)} = 2^{e-e(T)}$ spanning forests that contain
  $T_i$. Therefore,
  
  \begin{equation*}
    n(T:P) 2^{e-e(t)} = \sum_i \sum_j |\{G \in \mcg_P : T_u(G) 
    = T_{ij}\}|.
  \end{equation*}
  Now by grouping terms on the RHS by isomorphism classes of $T_{ij}$,
  we obtain
  \begin{equation*}
    n(T:P) 2^{e-e(T)} = \sum_{(T^\prime \in \lVert \mcu_P \rVert)\,
      \wedge \, (T^\prime \geq T)}n(G >T^\prime:P).
  \end{equation*}
  Since $n(G >T^\prime:P) = n(G >T^\prime:Q)$ for all undirected $X$-forests
  $T^\prime$, we also have $n(T:P) = n(T:Q)$ for all undirected
  $X$-forests $T$.
\end{proof}

\begin{cor} The examples of non-reconstructible pedigrees given in
  \cite{thatte18} can be distinguished from the probability
  distribution on the space of alignments under Model RM.
\end{cor}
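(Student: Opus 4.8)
The plan is to apply Theorem~\ref{thm-paths} — or, more economically, the corollary just above — in contrapositive form. Fix one of the non-reconstructible pairs $P,Q$ of \cite{thatte18}; these are pedigrees on a common extant set $X$ with identical multisets of proper subpedigrees but $P \not\cong Q$. If $P$ and $Q$ induced the same alignment distribution, i.e. $(\Sigma^{XL}:P,RM(p,\mu)) = (\Sigma^{XL}:Q,RM(p,\mu))$ for all $L$ and all sufficiently small $p$, then Theorem~\ref{thm-paths} would force $n(\mathbf{G}>\mathbf{T}:P) = n(\mathbf{G}>\mathbf{T}:Q)$ for every admissible sequence $\mathbf{T}$ of undirected $X$-forests, and in particular (the preceding corollary, the case $m=1$) $n(T:P)=n(T:Q)$ for every undirected $X$-forest $T$. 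So it suffices to produce, for each such pair, one undirected $X$-forest $T$ whose count differs on $P$ and $Q$, or failing that a short sequence $\mathbf{T}$ with differing $n(\mathbf{G}>\mathbf{T}:\cdot)$. One first checks $e(P)=e(Q)=2e$, a hypothesis of Theorem~\ref{thm-paths}: the examples of \cite{thatte18} have the same order and the same number of arcs, so this holds (alternatively one works inside the class $\mcc$ of pedigrees with $2e$ arcs, which is permitted).

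The substantive step is to verify that these pairs are actually separated by $X$-forest counts. The non-isomorphism of $P$ and $Q$ cannot be witnessed by any proper subpedigree, so it is a \emph{global} property: it concerns how \emph{all} extant vertices descend jointly from common ancestors, which is exactly the data carried by an $X$-forest whose leaf set is all of $X$ (the partition of $X$ into clusters, together with the tree shape above each cluster's MRCA). One therefore writes down the explicit small pedigrees of \cite{thatte18}, enumerates their directed $X$-forests — for each choice of one incoming arc at each non-founder vertex, take the subgraph spanned by the ancestors of $X$ — passes to undirected $X$-forests via Definition~\ref{defn-subtree}, and tabulates the counts $n(T:\cdot)$ by isomorphism type. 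The assertion is that this table already differs between $P$ and $Q$; should it happen to coincide for some pathological pair, one instead uses the strictly finer invariant $n(\mathbf{G}>\mathbf{T}:\cdot)$ with $m=2$, which records adjacencies in the hypercube of spanning forests (Section~\ref{sec-sketch}) invisible to the static counts $n(T:\cdot)$.

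Granting a distinguishing $T$ (or $\mathbf{T}$), Theorem~\ref{thm-paths} gives immediately that $(\Sigma^{XL}:P,RM(p,\mu)) \neq (\Sigma^{XL}:Q,RM(p,\mu))$ for some $L \in \nn$ and all $p \in (0,p_0)$, i.e. $P$ and $Q$ are distinguished under Model RM, as claimed. The main obstacle is the middle step: there is no difficulty of principle once Theorem~\ref{thm-paths} is available, but one must inspect the constructions of \cite{thatte18} case by case and confirm that their non-isomorphism surfaces already at the level of $X$-forest (or $X$-forest-sequence) counts; the edge-count bookkeeping is a minor secondary check.
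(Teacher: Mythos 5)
Your overall strategy is the paper's: apply Theorem~\ref{thm-paths} (via the $m=1$ corollary) in contrapositive and exhibit, for each non-reconstructible pair from \cite{thatte18}, an undirected $X$-forest $T$ with $n(T:P)\neq n(T:Q)$. But you leave the one substantive step unproved. The entire content of this corollary is the verification that such a $T$ exists for those specific pairs, and you replace it with an assertion ("the assertion is that this table already differs") backed by a heuristic --- that a non-isomorphism not witnessed by proper subpedigrees is ``global'' and must therefore show up in $X$-forest counts on all of $X$. That heuristic is unsound, and the paper itself supplies the counterexample: $P_1$ and $Q_1$ of Figure~\ref{fig-hap} are non-isomorphic, yet have identical counts $n(T:\cdot)$ for every undirected $X$-forest $T$, and even identical $n(\mathbf{G}>\mathbf{T}:\cdot)$ for both length-$2$ sequences $(T_1,T_2)$ and $(T_2,T_1)$; distinguishing them requires $m=3$. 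So neither your primary claim nor your $m=2$ fallback can be taken for granted; whether the static counts suffice genuinely depends on the structure of the pedigrees in question.

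The paper closes this gap with a concrete structural observation about the \cite{thatte18} constructions: in each pair, exactly one of the two pedigrees contains a common ancestor of all extant vertices. Consequently one pedigree contains an undirected $X$-forest in which all of $X$ lies in a single cluster and the other contains no such $X$-forest, so $n(T:P)\neq n(T:Q)$ for that $T$ and the $m=1$ corollary applies. If you want your proof to stand on its own, you need to state and use this (or some equally explicit) distinguishing feature of the examples rather than deferring it to an unperformed case check; your peripheral points (equal arc counts, working within the class of pedigrees with $2e$ arcs) are fine.
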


\begin{proof} Pedigrees in these examples do not have the same number of
  $X$-forests of each type. For example, one pedigree in each pair contains
  a common ancestor of all extant vertices while the other does
  not. This may also be observed in the examples in Figure~\ref{fig-k3}.
\end{proof}

But knowing the number of $X$-forests of each type is not always enough
to distinguish pedigrees, as verified by pedigrees $P_1$ and $Q_1$ shown
in Figure~\ref{fig-hap}. They both have the same number of directed
(therefore, also undirected) $X$-forests of each type: they have 4
directed $X$-forests in which 1 and 2 have a common ancestor. We denote
their undirected $X$-forest by $T_1$, which is a path of length 4 with
end vertices 1 and 2. There are 12 directed $X$-forests in which 1 and 2
do not have a common ancestor. We denote their undirected $X$-forest by
$T_2$, which consists of two isolated vertices 1 and 2. Moreover, $n(G >
T_1:P_1) = n(G > T_1:Q_1) = 16$ and $n(G > T_2:P_1) = n(G > T_2:Q_1) =
48$. Also, for $\mathbf{T} := (T_1, T_2)$ (and for $\mathbf{T} := (T_2,
T_1)$), we check by direct counting that $n(\mathbf{G} > \mathbf{T}:P_1)
= n(\mathbf{G} > \mathbf{T}:Q_1) = 64$. But $P_1$ and $Q_1$ can
nevertheless be distinguished as shown below.

\begin{cor}\label{cor-pq}
  Pedigrees $P_1$ and $Q_1$ in Figure~\ref{fig-hap} are distinguished
  from the probability distribution on the space of alignments under
  Model RM provided the crossover probability is sufficiently small.
\end{cor}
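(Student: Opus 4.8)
The plan is to produce a single short sequence $\mathbf{T}$ of undirected $X$-forests, with consecutive entries non-isomorphic, on which the invariants $n(\mathbf{G} > \mathbf{T}:\cdot)$ of $P_1$ and $Q_1$ disagree, and then invoke Theorem~\ref{thm-paths}. Concretely, suppose $\mathbf{T}$ has length $m$ and $n(\mathbf{G} > \mathbf{T}:P_1) \neq n(\mathbf{G} > \mathbf{T}:Q_1)$. Since $e(P_1) = e(Q_1) = 2e$, Theorem~\ref{thm-paths} supplies $p_0 = p_0(e,m,\mu) \in (0,1)$ such that for every $p \in (0,p_0)$, if $(\Sigma^{XL}: P_1, RM(p,\mu)) = (\Sigma^{XL}: Q_1, RM(p,\mu))$ for all $L$ then $n(\mathbf{G} > \mathbf{T}:P_1) = n(\mathbf{G} > \mathbf{T}:Q_1)$; contrapositively, for every such $p$ some alignment length $L$ separates $P_1$ from $Q_1$ in the sense of Definition~\ref{def-identifiability}. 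So all the work lies in choosing and evaluating $\mathbf{T}$.

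For that, I would first record (as already noted before the corollary) the invariants the two pedigrees share: both have $X = \{1,2\}$, $2e = 12$ arcs and hence $64$ spanning forests (Proposition~\ref{prop-spanning-forests}); only the undirected $X$-forests $T_1$ (the length-$4$ path with endpoints $1,2$) and $T_2$ (two isolated vertices) occur; $n(G > T_1:P_1) = n(G > T_1:Q_1) = 16$ and $n(G > T_2:P_1) = n(G > T_2:Q_1) = 48$; and for both length-$2$ sequences $(T_1,T_2)$ and $(T_2,T_1)$ the count is $64$ for each pedigree. So nothing of length $\leq 2$ can work, and I would pass to length $3$, trying $\mathbf{T} := (T_1,T_2,T_1)$ and, if needed, $\mathbf{T} := (T_2,T_1,T_2)$.

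To evaluate these it is convenient to use the identity
\[
n(\mathbf{G} > (T_1,T_2,T_1):P) = \sum_{G \,:\, T_u(G)\cong T_2} d(G)^2,
\qquad
\sum_{G \,:\, T_u(G)\cong T_2} d(G) = n(\mathbf{G} > (T_1,T_2):P) = 64,
\]
where $d(G)$ is the number of spanning forests $G'$ with $T_u(G') \cong T_1$ and $|E(G')\bigtriangleup E(G)| = 2$ (and symmetrically for $(T_2,T_1,T_2)$, summing $d'(G)^2$ over the $16$ spanning forests of type $T_1$). The task thus reduces to tabulating, for each pedigree, the multiset $\{d(G)\}$: a single recombination is the flip of the chosen incoming arc at one of the six non-founder vertices, and such a flip changes the undirected $X$-forest type from $T_2$ to $T_1$ exactly when it merges the two ancestral tree-paths to $1$ and to $2$, so $d(G)$ is obtained by a finite case analysis over the $48$ (resp.\ $16$) relevant spanning forests of $P_1$ and of $Q_1$. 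Since the two pedigrees are organised around genuinely different ancestral configurations, I expect these multisets — equivalently the numbers $\sum d(G)^2$ — to differ for at least one of the two length-$3$ sequences; should both coincidentally fail, the identical method applied to a length-$4$ sequence will succeed, and feeding the resulting $\mathbf{T}$ into the first paragraph then finishes the argument.

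The step I expect to be the real obstacle is precisely this enumeration. One must read the exact shapes of $P_1$ and $Q_1$ off Figure~\ref{fig-hap}, correctly separate the ``productive'' arc-flips from the inert ones (a flip at a vertex ancestral to neither $1$ nor $2$, or one that leaves $1$ and $2$ in distinct components, keeps the undirected $X$-forest isomorphic to $T_2$ and must not be counted), and push the bookkeeping through without slips — which is unavoidably fiddly, because the entire point of this pair of pedigrees is that they agree on every coarser invariant.
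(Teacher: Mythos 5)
Your reduction is exactly the one the paper uses: pick a sequence $\mathbf{T}$ of undirected $X$-forests with consecutive entries non-isomorphic on which $n(\mathbf{G} > \mathbf{T}:P_1) \neq n(\mathbf{G} > \mathbf{T}:Q_1)$, and apply the contrapositive of Theorem~\ref{thm-paths}. Your candidate $\mathbf{T} = (T_1,T_2,T_1)$ is also the paper's choice, and your identity $n(\mathbf{G} > (T_1,T_2,T_1):P) = \sum_{G:\,T_u(G)\cong T_2} d(G)^2$ is correct and is a perfectly good (arguably tidier) way to organise the count than the paper's, which instead conditions on the pair $(T_d(G_1),T_d(G_3))$ of directed $X$-forests at the two ends.

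The problem is that you stop exactly where the content of this corollary begins. The entire substance of the paper's proof is the explicit enumeration showing $n(\mathbf{G} > (T_1,T_2,T_1):P_1) = 112$ while $n(\mathbf{G} > (T_1,T_2,T_1):Q_1) = 104$; you replace this with ``I expect these multisets to differ.'' That expectation is not self-evident --- these two pedigrees were chosen precisely because they agree on every invariant of length at most $2$, so there is no soft reason why the length-$3$ counts must separate them. Your fallback claim, that if both length-$3$ sequences fail then ``the identical method applied to a length-$4$ sequence will succeed,'' is also unjustified: Theorem~\ref{thm-paths} only says that indistinguishable pedigrees share all the invariants $n(\mathbf{G} > \mathbf{T})$, so if $P_1$ and $Q_1$ happened to agree on all of them the method would be inconclusive, not eventually successful. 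To close the gap you must actually tabulate $d(G)$ over the $48$ spanning forests of type $T_2$ in each pedigree (or carry out the paper's equivalent case analysis over the $16$ pairs of endpoint types) and exhibit the discrepancy $112 \neq 104$.
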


\begin{proof} Let $T_1$ and $T_2$ be the $X$-forests as described
  above. We apply Theorem~\ref{thm-paths} for $m = 3$ with
  $\mathbf{T}:=(T_1,T_2,T_1)$. We can verify that $n(\mathbf{G} >
  \mathbf{T}:P_1) = 112$ and $n(\mathbf{G} > \mathbf{T}:Q_1) =
  104$. Therefore, $P_1$ and $Q_1$ give different distributions on the
  space of alignments.  In the following, we describe how $n(\mathbf{G}
  > \mathbf{T}:P_1)$ and $n(\mathbf{G} > \mathbf{T}:Q_1)$ are counted.

  \ 

  Let $T_a$ denote the directed $X$-forests in $P_1$ and $Q_1$
  consisting of paths $a\cdots 1$ and $a\cdots 2$. Similarly, we write
  $T_b,T_c,T_d$ for other directed $X$-forests in $P_1$ and $Q_1$,
  rooted at $b$, $c$ and $d$, respectively. These are the 4 directed
  $X$-forests that have $T_1$ as their undirected $X$-forest. All other
  directed $X$-forests have $T_2$ as their undirected $X$-forest.

  \
  
  \noindent {\em Counting} $n(\mathbf{G} > \mathbf{T}:P_1)$: Since
  $T_u(G_1) \cong T_u(G_3) \cong T_1$, we count 16 different
  contributions to $n(\mathbf{G} > \mathbf{T}:P_1)$ depending on the
  choices for $T_d(G_1)$ and $T_d(G_3)$ in $\{T_a, T_b, T_c,
  T_d\}$. Please refer to Figure~\ref{fig-p}.

  \begin{figure}[ht]
    \begin{center}
      \includegraphics[width=118mm]{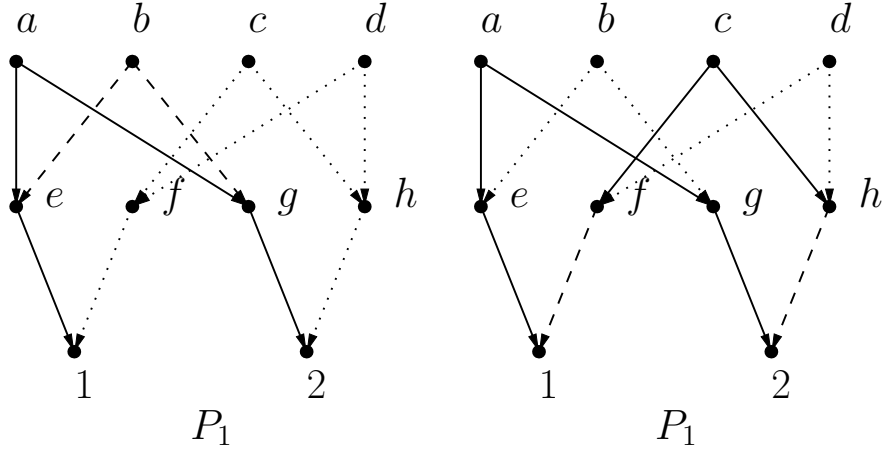}
    \end{center}
    \caption[]{Counting $n(\mathbf{G} > \mathbf{T}:P_1)$ when $T_d(G_1) = T_a$
      and $T_d(G_3) = T_b$ (left), and when $T_d(G_1) = T_a$ and
      $T_d(G_3) = T_c$ (right)}
    \label{fig-p}
  \end{figure}

  When $T_d(G_1) = T_a$ and $T_d(G_3) = T_a$: This is possible only if
  $G_3 = G_1$. There are 4 choices of $G_1$, and for each of them there
  are 4 choices for $G_2$ depending on which arc of $T_a$ is replaced to
  obtain $G_2$. Thus we have 16 sequences $\mathbf{G}$ for which $T_d(G_1)
  = T_a$ and $T_d(G_3) = T_a$.

  When $T_d(G_1) = T_a$ and $T_d(G_3) = T_b$: the dashed arcs $bg$ and
  $be$ in Figure~\ref{fig-p} on the left must be obtained by replacing
  $ag$ and $ae$, and there are two sequences $\mathbf{G}$ that achieve
  this: either $G_2 = G_1 - ae + be$ or $G_2 = G_1 - ag + bg$. Also,
  there are 4 possible ways to include arcs pointing to $f$ and $h$ in
  $G_1$. Therefore, there are 8 sequences $\mathbf{G}$ such that
  $T_d(G_1) = T_a$ and $T_d(G_3) = T_b$.

  When $T_d(G_1) = T_a$ and $T_d(G_3) = T_c$: there are only 2 sequences
  $\mathbf{G}$ for which this is possible, since the arcs $cf$ and $cg$
  (shown in bold in Figure~\ref{fig-p} on the right) must already be in
  $G_1$.

  When $T_d(G_1) = T_a$ and $T_d(G_3) = T_d$: this case is similar to the
  case $T_d(G_3) = T_c$.

  Thus there are 28 choices of $\mathbf{G}$ such that $T_d(G_1) = T_a$,
  and similarly 28 choices each for $T_d(G_1) = T_b$, $T_d(G_1) = T_c$,
  and $T_d(G_1) = T_d$. Therefore, there are 112 sequences $\mathbf{G}$
  such that $\mathbf{G} > \mathbf{T}$.

  \

  \noindent {\em Counting} $n(\mathbf{G} > \mathbf{T}:Q_1)$: Again we
  count 16 different contributions to $n(\mathbf{G} > \mathbf{T}:Q_1)$
  depending on the choices for $T_d(G_1)$ and $T_d(G_3)$ in $\{T_a, T_b,
  T_c, T_d\}$.  Please refer to Figure~\ref{fig-q}.

  \begin{figure}[ht]
    \begin{center}
      \includegraphics[width=118mm]{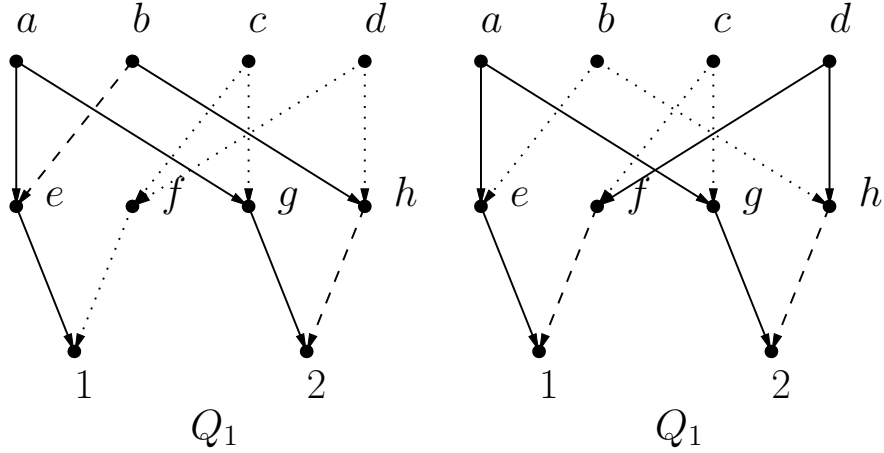}
    \end{center}
    \caption[]{Counting $n(\mathbf{G} > \mathbf{T}:Q_1)$ when $T_d(G_1)
      = T_a$ and $T_d(G_3) = T_b$ (left), and when $T_d(G_1) = T_a$ and
      $T_d(G_3) = T_d$ (right)}
    \label{fig-q}
  \end{figure}

  When $T_d(G_1) = T_a$ and $T_d(G_3) = T_a$: as in case of $P_1$, we
  have 16 sequences $\mathbf{G}$ for which $T_d(G_1) = T_a$ and
  $T_d(G_3) = T_a$.

  When $T_d(G_1) = T_a$ and $T_d(G_3) = T_b$: the dashed arcs $be$ and
  $h2$ in Figure~\ref{fig-q} on the left must be obtained by replacing
  $ae$ and $g2$, and arc $bh$ must already be in $G_1$. There are two
  sequences $\mathbf{G}$ that achieve this: either $G_2 = G_1 - ae + be$
  or $G_2 = G_1 - g2 + h2$. Also, there are 2 choices for arcs pointing
  to $f$, therefore, there are 2 choices for $G_1$. Therefore, there are
  4 sequences $\mathbf{G}$ such that $T_d(G_1) = T_a$ and $T_d(G_3) =
  T_b$.

  When $T_d(G_1) = T_a$ and $T_d(G_3) = T_c$: this case is similar to
  the case in which $T_d(G_3) = T_b$, therefore, there are 4 sequences
  such that $T_d(G_1) = T_a$ and $T_d(G_3) = T_c$.

  When $T_d(G_1) = T_a$ and $T_d(G_3) = T_d$: in this case, the arcs
  $df$ and $dh$ must already be in $G_1$. Thus there are only two
  sequences $\mathbf{G}$ that are counted depending on whether $G_2 =
  G_1 - e1 + f1$ or $G_2 = G_1 - g2 + h2$.

  Thus there are 26 choices of $\mathbf{G}$ such that $T_d(G_1) = T_a$,
  and similarly 26 choices each for $T_d(G_1) = T_b$, $T_d(G_1) = T_c$,
  and $T_d(G_1) = T_d$. Therefore, there are 104 sequences $\mathbf{G}$
  such that $\mathbf{G} > \mathbf{T}$.

  Thus we have verified that $n(\mathbf{G} > \mathbf{T}:P_1)$ and
  $n(\mathbf{G} > \mathbf{T}:Q_1)$ are unequal for $\mathbf{T}
  :=(T_1,T_2,T_1)$, implying that $P_1$ and $Q_1$ are distinguished by
  the probability distribution they induce on the extant sequences under
  model RM.
\end{proof}

\section{Discussion and open questions}\label{sec-problems}

In this paper we have presented a rigorous mathematical framework for
studying pedigree reconstruction problems under probabilistic models. We
extended phylogenetic identifiability results to reconstruct pedigrees
under an idealised model of recombination and mutation. The main result
of this paper is the computation of a class of combinatorial invariants
from the joint distribution of extant sequences. As a corollary, we were
able to show that certain known examples of pedigrees that could not be
distinguished from path lengths alone, could be distinguished by a more
detailed analysis of subgraph sequences. Here we identify a few open
problems and directions for future investigation.

\noindent{\em Identifiability of computationally tractable invariants:}
The invariants of a pedigree $P$ defined by $n(\mathbf{G} >
\mathbf{T}:P)$ may be quite difficult to apply in
general. Theorem~\ref{thm-paths} may be difficult to use computationally
for bigger pedigrees. Even for other reconstruction problems in graph
theory, computational verifications are difficult. For example, Ulam's
reconstruction conjecture has been computationally verified to be true
only for graphs on at most 11 vertices \cite{mckay97}. Even on
restricted classes of graphs, computational reconstruction experiments
are difficult to perform. It will be useful to derive from $n(\mathbf{G}
> \mathbf{T}:P)$ (or independently) other identifiable invariants that
may be easier to use in computational experiments.

Theorem~\ref{thm-paths} only states that if two pedigrees induce the
same joint distribution on extant sequences under model RM, then they
agree on the invariants $n(\mathbf{G} > \mathbf{T})$. But it will be
important to prove a converse or a result of the type: two pedigrees
induce the same distribution on extant sequences under model RM if and
only if they take the same value for a class of combinatorial
invariants. Such a result would reduce the identifiability problem to a
purely combinatorial problem of proving or disproving that the class of
invariants is complete. Such a class of invariants may be $n(\mathbf{G}
> \mathbf{T})$ or it may be somewhat stronger than $n(\mathbf{G} >
\mathbf{T})$. It may be possible to compute, by the methods of
Theorem~\ref{thm-paths}, other invariants, for example, spanning forest
sequences in which consecutive spanning forests are not necessarily
separated by just one recombination.

\noindent {\em Improving the bound on $p$:}
In the lower bound on $\Delta(\mathbf{T})$ in
Equation~(\ref{eq-term1b}), we have $1/2^e$ in the denominator, while
$c_3$ is roughly $e^m$, where $e$ is the number of arcs in a pedigree.
Therefore, it is possible to obtain better bounds on $p$ for the
applicability of the main theorem for pedigrees with fewer arcs.
Therefore, we would significantly improve the upper bound on $p$ if we
showed that a pedigree can be reconstructed from the collection of its
subpedigrees (pedigrees of subsets of the extant population) of order
$k$ for some small $k$. In \cite{thatte18} we made a conjecture about
how small $k$ may be for pedigrees of order $n$ in which the population
remains constant over generations. Thus solving the purely combinatorial
problem of reconstructing pedigrees from their subpedigrees, while
important in its own right, will be useful for improving bounds on $p$
in Theorem~\ref{thm-paths}.

On the other hand it is also likely (although we do not conjecture) that
Theorem~\ref{thm-paths} is valid without restrictions on $p$, or for all
but finitely many values of $p$, or for all $p$ except when $|\Sigma|$
takes small values. But we do not have good intuition as to why
Proposition~\ref{prop-n2-2} (with $|\Sigma| = 2$) requires a more
complicated argument and an upper bound on $p$.

{\em Maximum likelihood computation of the invariants $n(\mathbf{G} >
  \mathbf{T})$}: It will be of interest to derive statistical
consistency results and bounds on sequence lengths, analogous to
Corollary~\ref{cor-consistency} and Lemma~\ref{lem-eps-l1}, for
computing the invariants $n(\mathbf{G} > \mathbf{T})$. For example, we
would like to make the following qualitative statement precise. Suppose
$P$ is a pedigree with $e$ arcs. Let $\epsilon > 0$ be given. Suppose
$\mathbf{T}$ is an undirected $X$-forest sequence of length $m$ with no
two consecutive $X$-forests isomorphic. Then there is a sufficiently
large $L_{\epsilon, m}$ such that if a collection of sequences of length
$L > L_{\epsilon,m}$ evolved on $P$ giving an alignment $A$, then the
likelihood ratio $L(A \cond P)/L(A \cond Q)$ is large for all pedigrees
$Q$ such that $n(\mathbf{G} > \mathbf{T}:P) \neq n(\mathbf{G} >
\mathbf{T}:Q)$.  We expect that $L_{\epsilon, m}$ would be of the order
of $m \log (1/\epsilon)$.

In the model RM, we assumed that the founders are independently assigned
sequences from a uniform distribution. This assumption may be relaxed or
replaced by more realistic assumptions.

\subsection*{Acknowledgements}
  I would like to thank Jotun Hein and Mike Steel for many valuable
  comments and for pointing me to many useful references relevant to
  this work. In particular, Mike's comments on and references to the
  literature on phylogenetic identifiability results and results on the
  consistency of ML were very useful. This project was funded by the
  grant ``From complete genomes to global pedigrees'' from EPSRC
  (Engineering and Physical Sciences Research Council), U.K.  I would
  like to thank Jotun Hein and EPSRC for supporting my postdoctoral stay
  at the University of Oxford under this grant. I am currently working
  on the project ``Tree-graphs and incidence matrices: theory and
  applications'' funded by CNPq, Brasil (Processo: 151782/2010-5) at the
  Instituto de Matem\'atica e Estat\'istica, Universidade de S\~{a}o
  Paulo, where I completed a significant revision of this paper and
  improvements on the bounds in the main theorem. I would also like to
  thank Yoshiharu Kohayakawa with whom I had very useful conversations
  during the revision of this paper.

\bibliographystyle{plain}

\begin{thenomenclature} 

 \nomgroup{A}

  \item [{$(\Sigma^{XL}:P,RM(p,\mu))$}]\begingroup the space of alignments on   the extant vertices of $P$ as a probability space under model   $RM(p,\mu)$; other probability spaces are denoted analogously\nomeqref {3.0}
		\nompageref{14}
  \item [{$[a,b], [a,b), \ldots$}]\begingroup intervals in integers   and reals\nomeqref {2.0}
		\nompageref{6}
  \item [{$[m]$}]\begingroup the set $\{1,2,\ldots, m\}$\nomeqref {2.0}
		\nompageref{6}
  \item [{$\cong$}]\begingroup isomorphism between pedigrees, $X$-forests,   graphs, etc.\nomeqref {2.0}
		\nompageref{7}
  \item [{$\lVert \mct_P \rVert, \lVert \mcu_P \rVert$}]\begingroup the sets of   isomorphism classes of directed and undirected $X$-forests (or   distinct directed or undirected $X$-forests) in a pedigree $P$,   respectively\nomeqref {2.0}
		\nompageref{10}
  \item [{$\lVert S \rVert$}]\begingroup - isomorphism class of an object; if $S$   is a class of objects, then it is the set of isomorphism classes of   objects in the class\nomeqref {2.0}
		\nompageref{7}
  \item [{$\lVert T \rVert$}]\begingroup the isomorphism class of an $X$-forest   $T$\nomeqref {2.0}
		\nompageref{10}
  \item [{$\mathbf{f}(A) := (f_1, f_2, \ldots, f_N)$}]\begingroup vector of   fractional site pattern frequencies in an alignment $A$\nomeqref {5.0}
		\nompageref{23}
  \item [{$\mathbf{G}:=(G_1,G_2,\ldots, G_m)$}]\begingroup - a spanning forest   sequence of length $m$\nomeqref {2.0}
		\nompageref{10}
  \item [{$\mathbf{p}(T, \mu):= (p_1,p_2, \ldots,p_N)$}]\begingroup defined as   $p_i := \pr\{C_i \cond T,M(\mu) \}$\nomeqref {5.0}
		\nompageref{23}
  \item [{$\mathbf{T}:=(T_1,T_2,\ldots, T_m)$}]\begingroup - an $X$-forest   sequence of length $m$\nomeqref {2.0}
		\nompageref{10}
  \item [{$\mca :=   \mca_{1}:\mca_{2}:\ldots:\mca_m$}]\begingroup a set of alignments obtained by   concatenating alignments from sets $\mca_{i},  i = 1,2,\ldots,m $\nomeqref {5.4}
		\nompageref{29}
  \item [{$\mca(T_i,r_0,L)$}]\begingroup the set of alignments of length $L$   whose fractional site pattern frequencies are within a radius $r_0$   from $\mathbf{p}(T_i)$\nomeqref {5.2}
		\nompageref{24}
  \item [{$\mca, \mca_{i}$}]\begingroup subsets of an alignment space such as   $\Sigma^{XL}$\nomeqref {3.0}
		\nompageref{15}
  \item [{$\mcg_P$}]\begingroup the set of spanning forests in a pedigree $P$\nomeqref {2.0}
		\nompageref{10}
  \item [{$\mct_P, \mcu_P$}]\begingroup the sets of directed and undirected   $X$-forests in a pedigree $P$, respectively\nomeqref {2.0}
		\nompageref{10}
  \item [{$\mu$}]\begingroup the substitution probability in   models $RM(p,\mu)$ and $M(\mu)$\nomeqref {3.0}
		\nompageref{14}
  \item [{$\nn$}]\begingroup the set of natural numbers\nomeqref {2.0}
		\nompageref{6}
  \item [{$\pr\{.\}$}]\begingroup the probability of an event\nomeqref {3.0}
		\nompageref{15}
  \item [{$\rho(\mathbf{s}, r)$}]\begingroup a ball of   radius $r$ centred at $\mathbf{s}$\nomeqref {5.0}
		\nompageref{23}
  \item [{$\Sigma$}]\begingroup finite alphabet\nomeqref {2.0}
		\nompageref{8}
  \item [{$\Sigma^X$}]\begingroup the set of site patterns on $X$\nomeqref {2.0}
		\nompageref{8}
  \item [{$\Sigma^{XL}$}]\begingroup the set of alignments of length $L$ on $X$\nomeqref {2.0}
		\nompageref{8}
  \item [{$\zz$}]\begingroup the set of integers\nomeqref {2.0}
		\nompageref{6}
  \item [{$\zz_+$}]\begingroup the set   of positive integers\nomeqref {2.0}
		\nompageref{6}
  \item [{$A,A_i,\ldots$ }]\begingroup alignments on $X$, i.e., maps from $X$ to   $\Sigma^L$ or elements of $\Sigma^{XL}$ \nomeqref {2.0}
		\nompageref{8}
  \item [{$A:=A_1:A_2:\ldots:A_m$}]\begingroup an alignment obtained by   concatenating alignments $A_i, i = 1,2,\ldots,m$\nomeqref {5.4}
		\nompageref{29}
  \item [{$C$, $C_i$, ...}]\begingroup characters on $X$, i.e., maps $C:X   \rightarrow \Sigma$\nomeqref {2.0}
		\nompageref{8}
  \item [{$d(\mathbf{x},\mathbf{y})$}]\begingroup 1-norm distance between   $\mathbf{x}$ and $\mathbf{y}$\nomeqref {5.0}
		\nompageref{23}
  \item [{$d^-(u), d^+(u), d(u)$}]\begingroup - in-degree, out-degree, degree (or   total degree) of a vertex $u$\nomeqref {2.0}
		\nompageref{7}
  \item [{$G \cong H$}]\begingroup - $G$ and $H$ are isomorphic\nomeqref {2.0}
		\nompageref{7}
  \item [{$G \leq H$ or $H \geq G$}]\begingroup - when used for graphs (or   isomorphism classes of graphs) $G$ and $H$, it means $G$ is isomorphic   to a subgraph of $H$\nomeqref {2.0}
		\nompageref{7}
  \item [{$G \subseteq H$ or $H \supseteq G$}]\begingroup - when used for labelled   graphs $G$ and $H$, it means $G$ is a subgraph of $H$\nomeqref {2.0}
		\nompageref{7}
  \item [{$G,G_i, \ldots$}]\begingroup - spanning forests in a pedigree\nomeqref {2.0}
		\nompageref{10}
  \item [{$n(\mathbf{G} > \mathbf{T}:P)$}]\begingroup number of sequences   $\mathbf{G}$ of spanning forests in $P$ for which $T_u(G_i) \cong T_i$   for all $G_i$ in $\mathbf{G}$ and consecutive $G_i$ are separated by   exactly 1 recombination\nomeqref {5.4}
		\nompageref{29}
  \item [{$p$}]\begingroup the crossover probability in models $R(p)$ and   $RM(p,\mu)$\nomeqref {3.0}
		\nompageref{14}
  \item [{$P$, $Q$, $P(X,Y,U,E)$, ...}]\begingroup pedigrees\nomeqref {2.0}
		\nompageref{7}
  \item [{$r(\mathbf{G})$}]\begingroup the number of recombinations in   $\mathbf{G}$, see Definition~\ref{def-rg-sg}\nomeqref {2.0}
		\nompageref{11}
  \item [{$s(\mathbf{G})$}]\begingroup the number of points of no recombination in   $\mathbf{G}$, see Definition~\ref{def-rg-sg}\nomeqref {2.0}
		\nompageref{11}
  \item [{$S^k $}]\begingroup the set of $k$-tuples of elements of a set $S$\nomeqref {2.0}
		\nompageref{6}
  \item [{$S^X$}]\begingroup the set of all functions from $X$ to $S$\nomeqref {2.0}
		\nompageref{6}
  \item [{$T,T_i, \ldots$}]\begingroup - $X$-forests in a pedigree or $X$-forests\nomeqref {2.0}
		\nompageref{10}
  \item [{$T_d(G)$}]\begingroup - the unique directed $X$-forest in a spanning   forest $G$ in a pedigree\nomeqref {2.0}
		\nompageref{10}
  \item [{$T_u(G)$}]\begingroup - the unique undirected $X$-forest in a spanning   forest $G$ in a pedigree\nomeqref {2.0}
		\nompageref{10}
  \item [{$u \leq v$ - }]\begingroup (for vertices $u$ and $v$ in a pedigree)   there is a directed path from $v$ to $u$\nomeqref {2.0}
		\nompageref{7}
  \item [{$V(G), E(G)$}]\begingroup - vertex and edge sets of a graph,   respectively\nomeqref {2.0}
		\nompageref{7}
  \item [{$v(G), e(G)$}]\begingroup - cardinalities of vertex and edge sets of a   graph, respectively\nomeqref {2.0}
		\nompageref{7}
  \item [{$X$}]\begingroup the set of extant vertices of a pedigree\nomeqref {2.0}
		\nompageref{7}

\end{thenomenclature}

\end{document}